\newcommand{\morph}{\mu}
\newcommand{\tmmorph}{\tau}
\newcommand{\fibmorph}{\varphi}
\newcommand{\pdmorph}{\pi}
\newcommand{\bwas}{AS_\morph}
\newcommand{\bwms}{MS_\morph}
\newcommand{\bwt}{\textsf{bwt}}
\newcommand{\R}[1]{\mathcal{R}(#1)}
\newcommand{\La}{\mathcal{L}}
\newcommand{\rbwt}{r}
\newcommand{\lcp}{lcp}
\newcommand{\lcs}{lcs}
\def\rle(#1){\textsf{rle}(#1)}
\newcommand{\conjclass}{\mathcal{R}}
\newcommand{\fact}{\mathcal{F}}
\newcommand{\cfact}{\widetilde{\mathcal{F}}}
\newcommand\red[1]{{\color{red} #1}}
\newcommand{\Prim}[1]{Q(#1)}
\newcommand{\Pow}[1]{\overline{Q(#1)}}
\newcommand{\Fib}{F}
\newcommand{\vir}[1]{``#1''}
\newcommand\Comment[1]{}
\title{Morphisms and BWT-run Sensitivity} %
\titlerunning{Morphisms and BWT-run Sensitivity} 
\author{Gabriele Fici}{Department of Mathematics and Computer Science, University of Palermo, Italy}{gabriele.fici@unipa.it}{https://orcid.org/0000-0002-3536-327X}{Supported by MUR project PRIN 2022 APML – 20229BCXNW, funded by the European Union – Mission 4 "Education and Research" C2 - Investment 1.1.}
\author{Giuseppe Romana}{Department of Mathematics and Computer Science, University of Palermo, Italy}{giuseppe.romana01@unipa.it}{https://orcid.org/0000-0002-3489-0684}{Supported by the MUR PRIN Project \vir{PINC, Pangenome INformatiCs: from Theory to Applications} (Grant No.\ 2022YRB97K), funded by Next Generation EU PNRR M4 C2, Inv. 1.1.}
\author{Marinella Sciortino}{Department of Mathematics and Computer Science, University of Palermo, Italy}{marinella.sciortino@unipa.it}{https://orcid.org/0000-0001-6928-0168}{Supported by the project \vir{ACoMPA – Algorithmic and Combinatorial Methods for Pangenome Analysis} (CUP B73C24001050001) funded by the NextGeneration EU programme PNRR ECS00000017 Tuscany Health Ecosystem (Spoke 6), Mission 4, Component 2.}
\author{Cristian Urbina}{Department of Computer Science, University of Chile, Santiago, Chile \and Centre for Biotechnology and Bioengineering (CeBiB), Santiago, Chile}{crurbina@dcc.uchile.cl}{https://orcid.org/0000-0001-8979-9055}{Basal Funds FB0001 and AFB240001; Fondecyt Grant 1-230755, ANID, Chile; ANID-Subdirección de Capital Humano/Doctorado Nacional/2021-21210580, ANID, Chile.}
\authorrunning{G. Fici, G. Romana, M. Sciortino, C. Urbina} 
\keywords{Burrows--Wheeler transform, BWT-runs, morphism, pure code, repetitiveness} 
\begin{document}

\maketitle

\begin{abstract}
We study how the application of injective morphisms affects the number $r$ of equal-letter runs in the Burrows–Wheeler Transform (BWT). This parameter has emerged as a key repetitiveness measure in compressed indexing. We focus on the notion of BWT-run sensitivity after application of an injective morphism. For binary alphabets, we characterize the class of morphisms that preserve the number of BWT-runs up to a bounded additive increase, by showing that it coincides with the known class of primitivity-preserving morphisms, which are those that map primitive words to primitive words. We further prove that deciding whether a given binary morphism has bounded BWT-run sensitivity is possible in polynomial time with respect to the total length of the images of the two letters. Additionally, we explore new structural and combinatorial properties of synchronizing and recognizable morphisms. These results establish new connections between BWT-based compressibility, code theory, and symbolic dynamics. 
\end{abstract}

\newpage

\section{Introduction}
Morphisms are a powerful combinatorial mechanism for generating a collection of repetitive texts, and have been largely used in the field of combinatorics on words and formal languages \cite{lothaire1997combinatorics, Rigo2014}. Formally, a morphism maps each character of an alphabet to a word over the same or another alphabet, by preserving the operation of concatenation. That is, if $\morph$ is a morphism and $u$ and $v$ are words, then $\morph(uv)=\morph(u)\morph(v)$. Iterating morphisms can produce long and often highly repetitive sequences, which makes them a natural model for studying repetitiveness in words. Morphisms find applications in a wide range of contexts. Injective morphisms are widely used in information theory, data compression, and cryptography, as they define uniquely decodable codes~\cite{codesautomata}. More recently, morphisms have been employed in combination with copy-paste mechanisms to define novel compression schemes, known as NU-systems~\cite{NAVARRO_Urbina_TCS2025}, further highlighting their versatility in modeling and processing repetitive data.

The Burrows--Wheeler Transform (BWT) is a reversible transformation introduced in 1994 in the field of data compression \cite{BW94} and now underpins some of the most used tools in bioinformatics, such as {\tt bwa}~\cite{bwa,VasimuddinMLA19} and {\tt bowtie}~\cite{Bowtie,bowtie2}. It permutes the characters of a text in a way that makes it more compressible, by clustering characters that precede similar contexts in the text. This property often results in long runs of identical characters, particularly in repetitive texts. The number $r$ of such equal-letter runs, known as \emph{BWT-runs}, has recently emerged as a measure of repetitiveness \cite{Navarro22cacm}. Several measures have been proposed to quantify repetitiveness in strings \cite{Navacmcs20.3}, such as the number $z$ of phrases in the Lempel--Ziv parsing, the size $g$ of the smallest context-free grammar generating the text, the size $\gamma$ of the smallest string attractor \cite{KempaP18, RomanaJCTA}. Among these, the measure $r$ has recently attracted considerable attention due to its close connection with compressed indexing structures, such as the $r$-index \cite{rindex}, which use space proportional to $r$ and support efficient pattern matching and retrieval in highly repetitive text collections, including genomic datasets and versioned document archives. Akagi et al.~\cite{AKAGI2023} explored the question of how much one character edit affects compression-based repetitiveness measures. In \cite{Giuliani2025}, the effect of single edit operations on the measure $r$ has also been analyzed.

In this paper, we study how the application of an injective morphism affects the measure $r$, i.e., the number of BWT-runs. We focus on two notions of  \emph{BWT-run sensitivity}, which capture how much the number of BWT-runs can change when a morphism $\morph$ is applied to a word.  The \emph{additive sensitivity function} $AS_{\morph}$ gives, for every $n>0$, the maximum increase in the number of BWT-runs that can occur when applying the morphism $\morph$ to any word of length $n$, while the \emph{multiplicative sensitivity function} $MS_{\morph}$ gives, for every $n>0$, the maximum ratio between the number of BWT-runs after and before the morphism $\morph$ is applied, over all words of length $n$. These notions allow us to quantify the impact of a morphism on the compressibility of the resulting text. An initial approach to the study of how morphisms affect the number of BWT-runs was given in~\cite{Fici23}, where we showed that Sturmian morphisms are the only binary injective morphisms that preserve the number of BWT-runs. Here, we tackle the problem of characterizing those binary injective morphisms that preserve the BWT-based compressibility of a text, in the sense that they have an additive sensitivity function bounded by a constant.  We prove that this class coincides with the known class of \emph{primitivity-preserving morphisms}, which are those that map primitive words to primitive words. As a direct consequence, for these morphisms the multiplicative sensitivity function is also bounded.
Primitivity-preserving morphisms are a well-studied class in algebraic theory of codes, and they are crucial in applications involving symbolic sequences, code synchronization, and the structural analysis of words \cite{ShyrThierrin1997, Restivo74, domosi_93, Mitrana97, lothaire1997combinatorics, codesautomata, HolubRS23}. 

In addition to establishing a novel connection between BWT-based compressed indexing, combinatorics on words, and code theory, a key contribution of our paper consists in identifying new combinatorial and structural properties of primitivity-preserving, recognizable, and synchronizing morphisms. These properties are central to our main results but also hold independent interest in information theory and symbolic dynamics, where such morphisms play a fundamental role in coding, synchronization, and symbolic representations of dynamical systems \cite{BealPR_EJC24,codesautomata}. In fact, recognizability ensures that the morphic image of a word can be uniquely decomposed, up to rotations, into a sequence of morphic images of the letters of the alphabet. Synchronizing morphisms guarantee that a window of bounded length is sufficient to detect boundaries between codewords, a property that is crucial for decoding and synchronization in data streams. 

We further show that all binary injective morphisms have bounded multiplicative sensitivity, but this result does not extend to alphabets with more than two symbols.

Furthermore, we prove that it is decidable in polynomial time whether the additive sensitivity function of a binary morphism is bounded by a constant, which makes our results practically applicable to the design of compression and indexing techniques that work directly on morphic encodings of highly repetitive text collections. Such a result builds upon fundamental results in the field of combinatorics on words, including properties of codes and solutions to word equations. 

The rest of the paper is organized as follows. In Section \ref{sec:preliminaries}, we present the preliminaries on words, morphisms, and the BWT. Section \ref{sec:new_comb_morphisms} introduces new combinatorial and structural properties of primitivity-preserving, recognizable, and synchronizing morphisms. Section \ref{sec:sensitivity} formalizes the sensitivity of $r$ with respect to the application of morphisms and motivates our measures. Section \ref{sec:characterization_morphisms} contains our main theorem characterizing the morphisms with a bounded additive sensitivity function. Section \ref{sec:multiplicative_sensitivity} discusses the multiplicative case, and Section \ref{sec:final} concludes with final remarks and open problems.


\section{Preliminaries}\label{sec:preliminaries}

\paragraph*{Basics}

Let $\Sigma = \{a_1,a_2,\dots,a_{\sigma}\}$ be a finite sorted set of \emph{letters} $a_1<a_2<\dots <a_\sigma$, which we call an \emph{alphabet}.
A \emph{finite word} $w = w[0]w[1]\cdots w[|w|-1]$ is any finite sequence of letters where $w[i] \in \Sigma$, for $i \in [0, |w|-1]$, and $|w|$ is the \emph{length} of the word.
The \emph{empty word}, denoted by $\varepsilon$, is the unique word of length $0$.
The set of all finite words (resp.~all non-empty words) over the alphabet $\Sigma$ is denoted by $\Sigma^*$ (resp.~$\Sigma^+$).
  For a letter $a_i\in\Sigma$, $|w|_{a_i}$ denotes the number of occurrences of $a_i$ in $w$. The vector $(|w|_{a_{1}},\ldots,|w|_{a_{\sigma}})$ is called the \emph{Parikh vector} of $w$.
  
 If $u = u[0]\cdots u[n-1]$ and $v = v[0]\cdots v[m-1]$ are words, the \emph{concatenation} $uv$ of $u$ and $v$ is $uv  = u[0] \cdots u[n-1] v[0] \cdots v[m-1]$.
 We let $\Pi^{k}_{i=1} w_i$ denote the concatenation of the words $w_1, w_2,\dots,w_k$ in that order, and $w^k$ the concatenation of the word $w$ with itself $k$ times.
 
 For any $1\leq i\leq j\leq |w|$, we use the notation $w[i,j]$ to denote the word $w[i]w[i+1]\cdots w[j]$, which we call a \emph{factor} of $w$.
 If $i>j$, then we assume $w[i,j]=\varepsilon$.
 We let $\fact(w)$ denote the set of all factors of $w$. For any $\La\subseteq\Sigma^*$, we write $\fact(\La)=\bigcup_{w\in\La}\fact(w)$.
 A factor of $w$ is \emph{proper} if it is different from $w$ itself.
 The factor $w[i,j]$ is called a \emph{prefix} when $i = 1$, and a \emph{suffix} when $j = n$. 
The \emph{longest common prefix} between two words $u$ and $v$ is the longest word that is a prefix of both words. The length of this word is denoted by $\lcp(u,v)$. The \emph{longest common suffix} and the associated function $\lcs$ are defined symmetrically.

  The \emph{run-length encoding} of a word $w$, denoted $\rle(w)$, is the sequence of pairs $(c_i, l_i$) with $c_i \in \Sigma$ and $l_i > 0$, such that $w = c_1^{l_1}c_2^{l_2}\cdots c_r^{l_r}$ and $c_i \neq c_{i+1}$ for every $i\in[1,r-1]$. The length $|\rle(w)|$ is the number of \emph{equal-letter runs} in $w$.
 
 A \emph{rotation}, or \emph{conjugate}, of the word $w=w[0]w[1]\cdots w[n-1]$ is a word of the form $w[i+1,n-1] w[0,i]$, for some $0\leq i< n$, obtained by shifting $i$ letters cyclically.
 We let $\R{w}$ denote the multiset of all the $|w|$ rotations of $w$.
 A word in $\cfact(w):=\fact(\R{w})$ is called a \emph{circular factor} of $w$.
 
A word $w$ is \emph{primitive} if for every word $u \in \Sigma^+$, $w = u^k$ implies $k = 1$; otherwise, $w$ is called \emph{non-primitive} (or \emph{a power}). A word of length $n$ is primitive if and only if it has exactly $n$ distinct rotations, i.e., if $\R{w}$ has all-distinct elements.
We let $\Prim{\Sigma^*}$ denote the set of all primitive words in $\Sigma^*$, and $\Pow{\Sigma^*}$  the set of all non-primitive words in $\Sigma^*$. 
We say two non-empty words $u,v$ \emph{commute} if $uv=vu$. This is equivalent to saying that both $uv$ and $vu$ are not primitive. 
In a well-known paper~\cite{lyndon}, Lyndon and Sch\"utzenberger established strong connections between primitive words and some equations in a free group. We report these classical results in the Appendix~\ref{ap:Lyndon}.


\paragraph*{Codes and morphisms}

A set $X\subseteq \Sigma^+$ is a \emph{code} if for all $m,\ell\geq 0$ and $u_i,v_j\in X$ with $i\in[1,\ell], j\in[1,m]$, the equation $u_1 u_2 \cdots u_\ell = v_1 v_2 \cdots v_m$ implies that $\ell = m$ and $u_k = v_k$, for all $k\in [1,\ell]$. Or equivalently, every word $w\in X^+$ has a unique factorization in words in $X$.
Given a word $w\in X^+$, a word $u$ is an \emph{$X$-factor} of $w$ if there exists a rotation $w'$ of $w$ (which can be $w$ itself) that can be factored as $w' = s u  p$ such that $u, ps\in X^*$.

Whenever a code $X$ consists of two words, the following property holds~\cite{Huang,Shyr77}.

\begin{lemma}
\label{le:codedontcommute}
    A set $X=\{u,v\}$, $u,v\in \Sigma^+$, is a code if and only if $u$ and $v$ do not commute, i.e., $uv\neq vu$.
\end{lemma}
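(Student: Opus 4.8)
The statement to prove is Lemma~\ref{le:codedontcommute}: a two-element set $X=\{u,v\}$ with $u,v\in\Sigma^+$ is a code if and only if $uv\neq vu$. I would prove the two implications separately, and the natural tool throughout is the characterization of commutation recalled just above: $uv=vu$ iff $u$ and $v$ are both powers of a common word $t$ (equivalently, iff $uv$ — or $vu$ — is non-primitive, once $u\neq v$; the degenerate case $u=v$ makes $X$ a singleton, which is trivially not a code since $u=u$ gives two factorizations of $u^2$, so I would dispose of it at the start or fold it into the "not a code" direction).

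For the easy direction (contrapositive of "code $\Rightarrow$ not commute"): suppose $uv=vu$. By the Lyndon--Schützenberger theory reported in the Appendix, there is a word $t\in\Sigma^+$ and integers $p,q\geq 1$ with $u=t^p$ and $v=t^q$. Then $v^p = t^{pq} = u^q$, which is a nontrivial equality of two distinct factorizations over $X$ (distinct as sequences of $X$-words, since $p\neq q$ would already give different lengths, and if $p=q$ then $u=v$ and we are in the singleton case handled above — or simply note $u\cdots u = v\cdots v$ with the words not identical). Hence $X$ is not a code. I would write $u^q$ and $v^p$ explicitly and point out these are factorizations into $X$-words of lengths $q$ and $p$ respectively, with $qp\geq 1$, and they coincide as words but not as factorizations whenever $\{u,v\}$ has two elements.

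For the harder direction, I would prove the contrapositive: if $X=\{u,v\}$ is not a code, then $uv=vu$. Assume a nontrivial relation $u_1u_2\cdots u_\ell = v_1v_2\cdots v_m$ with each $u_i,v_j\in\{u,v\}$, which is not the trivial one ($\ell=m$ and $u_k=v_k$ for all $k$). WLOG the two words $u,v$ have different lengths (if $|u|=|v|$, a length-preserving argument forces equality of the factorizations letter-block by letter-block, contradicting nontriviality unless $u=v$), so assume $|u|<|v|$. Look at the first position where the two factorizations "disagree": there is an index where one side starts a copy of $u$ and the other starts a copy of $v$ at the same position in the long word $w:=u_1\cdots u_\ell$. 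Since $|u|<|v|$, $u$ is a proper prefix of $v$, say $v=us$ with $s\in\Sigma^+$. Now one continues: after consuming $u$ on one side one is $|u|$ positions ahead; comparing again, either the next block on the other side is another $u$ (so $u$ is a prefix of $s$) or it forces a further overlap relation. The standard way to finish is the Defect/Fine--Wilf style argument: the set of relations generated shows $u$ and $s$ (hence $u$ and $v=us$) satisfy a nontrivial word equation forcing them into a common $\{u\}^*$-like structure; concretely, one derives that $v\in u^+$ or more generally that $u$ and $v$ are powers of a common word, whence $uv=vu$. An alternative, cleaner route is to invoke the Defect Theorem: a set of $2$ words that is not a code is contained in $t^*$ for some single word $t$, i.e., generates a free monoid of rank $\leq 1$; that immediately gives $u=t^p$, $v=t^q$ and hence $uv=vu$.

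The main obstacle is the forward-to-commute direction, and specifically making the overlap bookkeeping rigorous without hand-waving. I expect the cleanest write-up to lean on the Defect Theorem (rank drop for non-codes), cited as a classical result, reducing the whole direction to one line; if the paper prefers a self-contained argument, the fallback is the explicit prefix-chasing above, where the delicate point is guaranteeing that the induction on the overlap lengths terminates and genuinely yields $u,v\in t^*$ rather than merely a weaker prefix relation — this is exactly where Fine and Wilf's periodicity lemma (or the primitivity of the "primitive root" of $u$) must be invoked to collapse the periods.
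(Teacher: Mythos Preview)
The paper does not prove this lemma; it is stated with a citation to \cite{Huang,Shyr77} and used as a black box. There is therefore no in-paper proof to compare your proposal against.

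Your overall strategy is correct. Two remarks:

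\begin{itemize}
\item For the direction ``$uv=vu \Rightarrow X$ not a code'', the argument is even shorter than you make it: assuming $u\neq v$, the equality $uv=vu$ \emph{itself} exhibits the word $uv$ with two distinct $X$-factorizations $(u,v)$ and $(v,u)$, so $X$ is not a code. Passing through $u^q=v^p$ via the common root $t$ is unnecessary.
\item Your handling of the degenerate case $u=v$ is incorrect. You write that the singleton $\{u\}$ is ``trivially not a code since $u=u$ gives two factorizations of $u^2$'', but $u\cdot u$ and $u\cdot u$ are the \emph{same} factorization; a singleton $\{u\}\subset\Sigma^+$ is always a code. The lemma is implicitly about genuinely two-element sets (if $u=v$ the statement is false: $\{u\}$ is a code yet $uv=vu$). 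So rather than folding $u=v$ into the ``not a code'' side, you should simply record that the hypothesis $X=\{u,v\}$ means $u\neq v$.
\end{itemize}

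For the harder direction, your plan to invoke the Defect Theorem is the standard clean route and is perfectly adequate here; the hand-rolled prefix-chasing alternative you sketch does work but, as you yourself note, needs Fine--Wilf or an equivalent periodicity argument to terminate, and there is no gain in writing it out when the Defect Theorem is available.
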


If $u$ and $v$ do not commute, then they are not powers of the same word, but in principle this does not exclude the case that either $u$, $v$, or both are non-primitive. For example, $X=\{aa,bb\}$ is a code.

Let $\Sigma$ and $\Gamma$ be two alphabets. A \emph{morphism}  $\mu$ is a map from $\Sigma^*$ to  $\Gamma^*$ such that $\mu(uv) = \morph(u)\morph(v)$ for all words $u, v \in \Sigma^*$.
Therefore, a morphism $\mu$ can be defined by specifying its action on the letters of $\Sigma$, and can therefore be denoted as $\morph=(\mu(a_1),\ldots,\mu(a_\sigma))$. 
The \emph{size} of the morphism $\morph$ is defined as $|\morph| = \sum_{c\in\Sigma}|\morph(c)|$. 
When $\Sigma=\Gamma$, for all $t> 0$ and $w\in\Sigma^+$, we have $\morph^t(w)=\morph(\morph^{t-1}(w))$ and $\morph^0(w)=w$.

\begin{remark}\label{rem:rot}
Let $\morph$ be a morphism. If $w$ and $w'$ are conjugates, then so are  $\morph(w)$ and $\morph(w')$. Moreover, since every conjugate of a power is a power, if   $\morph(w)$ is a power, so is $\morph(w')$ for every conjugate $w'$ of $w$.
\end{remark}

A morphism $\mu$ is \emph{cyclic} if there exists $z\in \Gamma^+$ such that $\mu(a)\in z^*$ for each $a\in \Sigma$. Otherwise, it is called \emph{acyclic}.

As shown in the following proposition, there is a very strong relation between codes and injective morphisms.

\begin{proposition}[\cite{codesautomata}]
\label{prop:codesmorphism}
Let $X\subset \Gamma^*$ be a code. Then, any morphism $\morph: \Sigma^* \rightarrow \Gamma^*$ which induces a bijection of some alphabet $\Sigma$ onto $X$ is injective. Conversely, let $\morph : \Sigma^* \rightarrow \Gamma^*$ be an injective morphism. Then, $X = \morph(\Sigma)$ is a code.
\end{proposition}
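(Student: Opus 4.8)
The plan is to unfold both directions directly from the definition of a code and of an injective morphism; the only real care needed is that a code lives in $\Gamma^+$ (so the relevant morphisms are non-erasing) and that equality of two words over $\Sigma$ simultaneously forces equal length and letter-by-letter equality.

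For the first implication, assume $X\subseteq\Gamma^+$ is a code and $\morph\colon\Sigma^*\to\Gamma^*$ restricts to a bijection of $\Sigma$ onto $X$. First I would observe that, since $X$ contains no empty word and $\morph$ is a morphism, $\morph(w)=\varepsilon$ forces $w=\varepsilon$, so it suffices to test injectivity on non-empty words. Take $u=u_1\cdots u_\ell$ and $v=v_1\cdots v_m$ with $u_i,v_j\in\Sigma$ and $\morph(u)=\morph(v)$. Then $\morph(u_1)\cdots\morph(u_\ell)=\morph(v_1)\cdots\morph(v_m)$ is an equality between two products of elements of $X$, so the code property yields $\ell=m$ and $\morph(u_k)=\morph(v_k)$ for every $k$. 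Since $\morph$ is injective on letters (being a bijection onto $X$), $u_k=v_k$ for all $k$, i.e.\ $u=v$, proving injectivity of $\morph$.

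For the converse, let $\morph\colon\Sigma^*\to\Gamma^*$ be injective and set $X=\morph(\Sigma)$. Being injective, $\morph$ cannot erase a letter (otherwise that letter and $\varepsilon$ would share the same image), so $X\subseteq\Gamma^+$ and $\morph$ restricts to a bijection from $\Sigma$ onto $X$. To show $X$ is a code, suppose $u_1\cdots u_\ell=v_1\cdots v_m$ with $u_i,v_j\in X$, and choose letters $a_i,b_j\in\Sigma$ with $\morph(a_i)=u_i$ and $\morph(b_j)=v_j$. Then $\morph(a_1\cdots a_\ell)=\morph(b_1\cdots b_m)$, and injectivity of $\morph$ gives $a_1\cdots a_\ell=b_1\cdots b_m$ as words over $\Sigma$; equality of these words forces $\ell=m$ and $a_k=b_k$ for all $k$, whence $u_k=\morph(a_k)=\morph(b_k)=v_k$. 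This is exactly the defining property of a code.

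I do not expect a genuine obstacle here: the argument is essentially bookkeeping with the definitions. The one point that needs attention — and the only place where the hypotheses are really used beyond the morphism identity — is ruling out erasing morphisms so that $X\subseteq\Gamma^+$, together with the elementary but load-bearing fact that equality of words over the source alphabet simultaneously delivers the length equality $\ell=m$ and the positionwise matching of the factors.
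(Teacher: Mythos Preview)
Your proof is correct and entirely standard. Note, however, that the paper does not supply its own proof of this proposition: it is quoted as a known result from \cite{codesautomata} and stated without argument, so there is nothing to compare your approach against. Your write-up is exactly the routine unwinding of the definitions one finds in any treatment of codes, including the cited reference.
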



By Lemma~\ref{le:codedontcommute} and Proposition~\ref{prop:codesmorphism}, one can easily derive that for a binary morphism $\morph:\{a,b\}^*\rightarrow\Gamma^*$, injectivity is equivalent to acyclicity, which in turn is equivalent to the condition $\morph(ab)\neq\morph(ba)$.

Examples of injective morphisms are the \emph{Fibonacci morphism}  $\fibmorph= (ab,a)$, the \emph{Thue--Morse morphism}  $\tmmorph=(ab,ba)$, and the \emph{period-doubling morphism} $\pdmorph=(ab,aa)$.

From the relationship between codes and morphisms, many properties of codes are reflected in the corresponding properties of injective morphisms. Combinatorial properties of injective morphisms are explored in Section 3. 

The Fibonacci morphism belongs to a wider class of morphisms called \emph{Sturmian morphisms}, strictly related to the well-known Sturmian words~\cite{DBLP:conf/mfcs/BerstelS93}. 
Sturmian morphisms can be defined as those that can be obtained by composition from: the Fibonacci morphism $\varphi$, the morphism $E=(b,a)$, and the morphism $\Tilde{\fibmorph}=(ba,a)$.

Let us suppose that both $\Sigma$ and $\Gamma$ are endowed with a total order relation that yields a lexicographic order, denoted by $<_{\Gamma}$ and $<_{\Sigma}$, respectively.
A morphism $\morph : \Sigma^* \rightarrow \Gamma^*$ is \emph{abelian order-preserving} if for every pair of distinct words $x,y\in \Sigma^*$ having the same Parikh vector, it holds that $x <_{\Sigma} y \iff \mu(x) <_{\Gamma} \mu(y)$. 
A morphism $\mu$ is \emph{abelian order-reversing} if for every pair of distinct words $x$ and $y$ having the same Parikh vector, it holds that $x <_{\Sigma} y \iff \mu(x) >_{\Gamma} \mu(y)$. We simply write $<$ whenever $\Sigma$ and $\Gamma$ are clear from the context.

When $\Sigma=\{a,b\}$, the following result holds.

\begin{lemma}[\cite{Fici23}]
\label{lem:abelian_order}
Let $\mu : \{a,b\}^* \mapsto \Gamma^*$ be an acyclic morphism.
Then $\mu$ is either abelian order-preserving or abelian order-reversing.
\end{lemma}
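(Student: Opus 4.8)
Since $\mu$ is acyclic over the binary alphabet $\{a,b\}$, we have $\mu(ab)=\mu(a)\mu(b)\neq\mu(b)\mu(a)=\mu(ba)$ (recall that for a binary morphism acyclicity, injectivity, and the condition $\mu(ab)\neq\mu(ba)$ are all equivalent, by Lemma~\ref{le:codedontcommute} and Proposition~\ref{prop:codesmorphism}). Crucially, $|\mu(ab)|=|\mu(a)|+|\mu(b)|=|\mu(ba)|$, so $\mu(ab)$ and $\mu(ba)$ are two \emph{distinct words of the same length}, and hence exactly one of $\mu(ab)<_\Gamma\mu(ba)$ or $\mu(ab)>_\Gamma\mu(ba)$ holds. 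The plan is to prove that in the first case $\mu$ is abelian order-preserving and in the second case it is abelian order-reversing; since these cases are exhaustive, this yields the statement. (It suffices to establish the implication $x<_\Sigma y\Rightarrow\mu(x)<_\Gamma\mu(y)$ on each Parikh class, since the reverse implication follows by applying it to the pair $(y,x)$.)

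The engine of the proof is the elementary observation that if $u,v\in\Gamma^*$ satisfy $|u|=|v|$ and $u\neq v$, then $uu'<_\Gamma vv'\iff u<_\Gamma v$ for all $u',v'\in\Gamma^*$, because the comparison is already decided within the first $|u|$ symbols. Applying this with $u=\mu(ab)$ and $v=\mu(ba)$ gives what I will call the \emph{swap rule}: for all $p,q\in\{a,b\}^*$, the comparison between $\mu(p\,ab\,q)=\mu(p)\mu(ab)\mu(q)$ and $\mu(p\,ba\,q)=\mu(p)\mu(ba)\mu(q)$ has the same sign as the comparison between $\mu(ab)$ and $\mu(ba)$. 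In words: replacing one adjacent occurrence of $ab$ by $ba$ inside a word moves its $\mu$-image in the lexicographic order in a direction that depends only on $\mu(ab)$ versus $\mu(ba)$.

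Now assume $\mu(ab)<_\Gamma\mu(ba)$. I would show, by induction on $|x|$, that $x<_\Sigma y$ with $x,y$ of equal Parikh vector implies $\mu(x)<_\Gamma\mu(y)$ (note $|x|=|y|$, hence $|\mu(x)|=|\mu(y)|$, as the length of a $\mu$-image depends only on the Parikh vector). If $x$ and $y$ start with the same letter, strip it and invoke the induction hypothesis for length $|x|-1$. Otherwise $x=a\alpha$ and $y=b\beta$, which forces $\alpha$ to contain a $b$ and $\beta$ to contain an $a$ (equal Parikh vectors); write $x=a^{s+1}b\alpha'$ and $y=b^{t+1}a\beta'$ and run an inner induction on $s+t$. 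If $s\geq 1$, then $x_1:=a^{s}ba\alpha'$ satisfies $\mu(x)<_\Gamma\mu(x_1)$ by the swap rule, while $(x_1,y)$ is again of the shape $a\cdots<_\Sigma b\cdots$ with strictly smaller $s+t$, so $\mu(x_1)<_\Gamma\mu(y)$ by the inner hypothesis. If $s=0$ but $t\geq 1$, then symmetrically $y_1:=b^{t}ab\beta'$ satisfies $\mu(y_1)<_\Gamma\mu(y)$ by the swap rule, and $\mu(x)<_\Gamma\mu(y_1)$ by the inner hypothesis. Finally, if $s=t=0$ then $x=ab\alpha'$ and $y=ba\beta'$, so $\mu(x)=\mu(ab)\mu(\alpha')<_\Gamma\mu(ba)\mu(\beta')=\mu(y)$ directly by the elementary observation. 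For the case $\mu(ab)>_\Gamma\mu(ba)$, I would pass to the morphism $\mu\circ E$ with $E=(b,a)$: it is acyclic, it satisfies $(\mu\circ E)(ab)=\mu(ba)<_\Gamma\mu(ab)=(\mu\circ E)(ba)$, hence is abelian order-preserving by the previous case; since $E$ is an involution that reverses the lexicographic order within each Parikh class, it follows that $\mu$ is abelian order-reversing.

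The main obstacle is the inductive argument in the order-preserving case, and specifically the sub-case in which $x$ and $y$ begin with long runs $a^{s+1}$ and $b^{t+1}$: there is no common prefix to strip and no immediate reduction to $\mu(ab)$ versus $\mu(ba)$, so one must peel these runs off one $ab\leftrightarrow ba$ swap at a time, using the swap rule to keep the effect of each swap under control. Making this terminate is what dictates the two-level induction (first on $|x|$, then on the combined length of the leading runs), and verifying that each reduction step lands back in a configuration covered by the induction hypothesis is the one place where genuine care is needed; the rest is bookkeeping.
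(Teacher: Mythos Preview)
Your argument is correct. The swap rule is sound because $\mu(ab)$ and $\mu(ba)$ are distinct words of equal length, so after stripping the common prefix $\mu(p)$ the comparison of $\mu(p\,ab\,q)$ with $\mu(p\,ba\,q)$ is decided in the next $|\mu(ab)|$ symbols. The two-level induction terminates exactly as you describe: stripping a common first letter drops the length, and each swap $a^{s+1}b\to a^sba$ (resp.\ $b^{t+1}a\to b^tab$) drops $s+t$ by one while keeping the pair in the shape $a\cdots<_\Sigma b\cdots$; the base case $s=t=0$ is handled directly by the elementary observation. The reduction of the order-reversing case to the order-preserving one via $\mu\circ E$ is also clean, since $E$ reverses the lexicographic order within each Parikh class and $(\mu\circ E)(ab)=\mu(ba)<_\Gamma\mu(ab)=(\mu\circ E)(ba)$.

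There is nothing in the present paper to compare your proof against: Lemma~\ref{lem:abelian_order} is quoted from~\cite{Fici23} without proof and used as a black box. Your write-up is a self-contained elementary proof of the cited result.
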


For our purposes, the fact that binary acyclic morphisms are either abelian order-preserving or abelian order-reversing is a crucial property, since it implies that they preserve or reverse the order on the set of rotations of any given binary word.

\paragraph*{Burrows--Wheeler transform}
The \emph{Burrows--Wheeler transform} (BWT) of a word $w$, denoted by $\bwt(w)$, is a permutation of the letters of $w$ obtained by sorting all the rotations of $w$ in ascending lexicographic order and then concatenating the last letter of each rotation.
The original word can be recovered if one stores the position where it appears in the list of sorted rotations.
Figure  \ref{fig:BWT} shows the sorted rotations of the word $w=\fibmorph^4(a)=abaababa$ and $bwt(w)=bbbaaaaa$.

We let $\rbwt(w)$ denote the number of equal-letter runs of $\bwt(w)$, i.e., $\rbwt(w)=|\rle(\bwt(w))|$. Such a value can be considered as a measure of the repetitiveness of $w$. In fact, if a word $w$ is highly repetitive, the number of equal-letter runs of its BWT tends to be small. From Figure \ref{fig:BWT}, one can see that $r(abaababa)=2$.



One can easily verify that for each word $v\in \R{w}$, $\bwt(v)=\bwt(w)$ and, consequently, $r(v)=r(w)$ and  $r(\morph(v))=r(\morph(w))$ for every morphism $\morph$.

Let $w$ be a non-primitive word, i.e., $w=z^p$, for some $z\in \Sigma^+$ and $p>1$. It is well known that if $\bwt(z)=a_1a_2\cdots a_{|z|}$, then $\bwt(w)=a_1^pa_2^p\cdots a_{|z|}^p$~\cite{DBLP:journals/ipl/MantaciRS03}. This implies that $r(w)=r(z)$. 

Some results proved in~\cite{DBLP:journals/ipl/MantaciRS03, DBLP:journals/tcs/Paquin09, DBLP:journals/tcs/Chuan99} establish a strong connection between the BWT and Sturmian morphisms, as synthesized in the following theorem.

\begin{theorem}
    \label{bwsturmian}
Let $w$ be a word over $\{a,b\}$ that is not a power of a single letter. Then the following are equivalent:
\begin{enumerate}
    \item\label{morph0} $w=(\mu(a))^\ell$ for a Sturmian morphism $\mu$ and for some $\ell>0$.
    \item\label{2runs} $\rbwt(w)=2$.
    \end{enumerate}
\end{theorem}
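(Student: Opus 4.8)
The plan is to prove the equivalence of the two conditions by establishing both implications, relying on the known characterizations of Sturmian words and the BWT behavior under Sturmian morphisms.

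\textbf{Direction \ref{morph0} $\Rightarrow$ \ref{2runs}.} Suppose $w = (\mu(a))^\ell$ for a Sturmian morphism $\mu$ and some $\ell > 0$. Since the BWT is invariant under taking powers in the sense that $r(z^p) = r(z)$ (as recalled in the excerpt), it suffices to show $r(\mu(a)) = 2$. Sturmian morphisms are generated by composition of $\varphi = (ab,a)$, $E = (b,a)$, and $\tilde\varphi = (ba,a)$. The key structural fact I would invoke is that for every Sturmian morphism $\mu$, the word $\mu(a)$ is a conjugate of a (finite) standard Sturmian word, or more directly: $\mu(ab)$ and $\mu(ba)$ differ by a transposition of their last two letters, and $\mu(a)$, $\mu(b)$ form a so-called standard pair. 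In particular $\mu(a)$ is a primitive word all of whose rotations, when sorted, yield a BWT with exactly two runs — this is the classical result connecting finite Sturmian words (Christoffel words / conjugates of standard words) to BWTs of the form $b^i a^j$. So this direction follows from the cited results of Mantaci--Restivo--Sciortino, Paquin, and Chuan, essentially by induction on the number of morphism compositions, checking the base cases $\varphi, E, \tilde\varphi$ and showing the property is preserved under composition.

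\textbf{Direction \ref{2runs} $\Rightarrow$ \ref{morph0}.} Suppose $r(w) = 2$ and $w$ is not a power of a single letter. First reduce to the primitive case: if $w = z^p$ with $z$ primitive, then $r(z) = r(w) = 2$ and $z$ is still not a single-letter power, so it suffices to handle primitive $w$ and then take $\ell = p$ times whatever exponent arises. Now, a primitive binary word $w$ with $r(w) = 2$ has $\bwt(w) = b^i a^j$ for some $i, j \geq 1$ with $i + j = |w|$. The crucial combinatorial step is to show that such a word is necessarily a conjugate of a finite standard Sturmian word, equivalently a Christoffel word, equivalently of the form $(\mu(a))$ up to rotation for a Sturmian morphism $\mu$. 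This is where one uses the characterization (from the cited papers) that the words whose BWT is $b^i a^j$ are exactly the conjugates of standard Sturmian words. Finally, to get the exact statement "$w = (\mu(a))^\ell$" rather than "$w$ is conjugate to $\mu(a)$", I would use the fact (from the excerpt, via Remark~\ref{rem:rot}) that $r$ is conjugacy-invariant together with the closure of Sturmian morphisms under the conjugacy operation: if $w$ is a conjugate of $\mu(a)$ for Sturmian $\mu$, then there is another Sturmian morphism $\mu'$ with $w = \mu'(a)$, so one can take $\ell = 1$ in the primitive case and $\ell = p$ after reintroducing the power.

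\textbf{Main obstacle.} The hard part is the \ref{2runs} $\Rightarrow$ \ref{morph0} direction, specifically bridging the gap between "$\bwt(w) = b^i a^j$" and "$w$ is (a conjugate of) the image $\mu(a)$ of a Sturmian morphism." The cited literature gives the equivalence between two-run BWTs and conjugates of standard Sturmian words, and separately the equivalence between standard Sturmian words and images of Sturmian morphisms, but stitching these together cleanly — and in particular handling the passage from "conjugate of $\mu(a)$" to an exact equality $w = (\mu'(a))^\ell$ — requires care about the conjugacy action on the monoid of Sturmian morphisms. One clean way around this is to appeal directly to the known fact that the finite Sturmian words (factors of Sturmian words) that are both primitive and have a two-run BWT are exactly the Christoffel words, and that every Christoffel word of slope $p/q$ equals $\mu(a)$ for the Sturmian morphism $\mu$ obtained by reading the continued fraction expansion of $p/q$ as a composition of $\varphi$, $\tilde\varphi$, and $E$; the remaining conjugates are then absorbed by the same continued-fraction machinery applied to a rotated slope. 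I expect the write-up to lean heavily on Theorem-level statements already in \cite{DBLP:journals/ipl/MantaciRS03, DBLP:journals/tcs/Paquin09, DBLP:journals/tcs/Chuan99} so that the proof here is mostly an assembly argument rather than a from-scratch combinatorial derivation.
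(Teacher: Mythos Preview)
The paper does not actually prove this theorem: it is stated in the preliminaries as a synthesis of results already established in \cite{DBLP:journals/ipl/MantaciRS03, DBLP:journals/tcs/Paquin09, DBLP:journals/tcs/Chuan99}, with no accompanying proof. Your proposal---reducing to the primitive case via $r(z^p)=r(z)$, invoking the Mantaci--Restivo--Sciortino characterization of words whose BWT has the form $b^i a^j$ as conjugates of standard Sturmian words, and then using the Chuan/Paquin side to identify these with images $\mu(a)$ of Sturmian morphisms---is precisely the assembly argument that the paper's citation list points to, so there is no divergence to discuss.

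The one point you flag as an obstacle, namely passing from ``$w$ is a conjugate of $\mu(a)$'' to ``$w=\mu'(a)$ for some Sturmian $\mu'$'', is indeed handled in the cited literature: the Sturmian monoid, being generated by $\varphi$, $\tilde\varphi$, and $E$, produces every conjugate of a Christoffel word as the image of $a$ under some element (this is essentially the content of Chuan's description of $\alpha$-words). So your worry is legitimate bookkeeping rather than a genuine gap, and your expectation that the proof is ``mostly an assembly argument'' matches the paper's own treatment exactly.
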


 \begin{figure}
     \centering

$\begin{array}{ccccccc|c|}
\cline{8-8}
a & a & b & a & a & b & a & \textbf{b} \\
a & a & b & a & b & a & a & \textbf{b} \\
a & b & a & a & b & a & a & \textbf{b} \\
a & b & a & a & b & a & b & \textbf{a} \\
a & b & a & b & a & a & b & \textbf{a} \\
b & a & a & b & a & a & b & \textbf{a} \\
b & a & a & b & a & b & a & \textbf{a} \\
b & a & b & a & a & b & a & \textbf{a} \\
\cline{8-8}
\end{array}$

     \caption{BWT-matrix of the word $\fibmorph^4(a) = abaababa$: for each $i$, the $i$th row corresponds to the $i$th rotation of $\fibmorph^4(a)$ in lexicographic order, and the Burrows--Wheeler Transform $\bwt(\fibmorph^4(a)) = bbbaaaaa=b^3a^5$ is highlighted in bold in the last column. So, $r(abaababa)=2$.}
     \label{fig:BWT}
 \end{figure}


\section{New combinatorial properties of injective morphisms}\label{sec:new_comb_morphisms}
This section focuses on some combinatorial properties and characterizations of some classes of morphisms which are well-known in the context of coding theory and symbolic dynamics. The results provided in this section may be of independent interest and will later be related to BWT-run sensitivity in the next sections.

\subsection{Primitivity-preserving morphisms}

A morphism $\morph : \Sigma^* \rightarrow \Gamma^*$ is called \emph{primitivity-preserving} if for every $w\in Q(\Sigma^*)$, it holds that $\morph(w)\in Q(\Gamma^*)$, that is, primitive words are mapped to primitive words. Primitivity-preserving morphisms are injective, and the associated codes are known in the literature as \emph{pure codes}~\cite{Mitrana97}. Such codes have been introduced in~\cite{Restivo74} to study the relationships between locally testable languages and synchronizing properties of codes.

Given a morphism $\morph:\Sigma^*\rightarrow\Gamma^*$, we call a primitive word $w$ a \emph{$\morph$--power} if $\morph(w)=z^k$, for some primitive word $z$ and an integer $k>1$. Intuitively, it is a word that witnesses the non-primitivity-preserving property of a morphism. 
By $P^\morph$ we refer to the set of all  $\morph$--power words.
From the definition, hence, $P^\morph=\emptyset$ if and only if the morphism $\morph$ is primitivity-preserving. 

\begin{example}\label{ex:pd_power}
    Let $\pdmorph=(ab,aa)$ be the period-doubling morphism. The word $b$ is a $\pdmorph$--power, since $\pdmorph(b)=a^2$. Hence, $b\in P^\pdmorph$, and $\pdmorph$ is not primitivity-preserving.
\end{example}

\begin{example}
    Let $\morph=(a,bab)$. The word $ab$ is a $\morph$--power, since $\morph(ab)=(ab)^2$. Hence, $ab\in P^\morph$, and $\morph$ is not primitivity-preserving.
\end{example}

In this section, we prove a new characterization of the decompositions of binary primitivity-preserving morphisms.
To do so, we first recall the following lemma, characterizing the combinatorial structure of binary primitivity-preserving morphisms. 


\begin{lemma}[\cite{Huang}]
\label{le:pure=primitive}
   Let $\morph= (u,v)$ be an injective morphism, with $u,v$ two distinct primitive words. Then $\morph$ is a primitivity-preserving morphism if and only if all words in $\{u^nv^m \mid n,m\geq 1\}$ are primitive.
\end{lemma}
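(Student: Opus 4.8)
The plan is to establish the two implications separately; the forward one is routine and the converse carries all the weight.

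$(\Rightarrow)$ For every $n,m\ge 1$ the word $a^nb^m$ is primitive: if $a^nb^m=t^j$ with $j\ge 2$, then $t$ starts with $a$ and, being a suffix of $a^nb^m$ longer than $b^m$, ends with $b$, so $t^j$ --- hence $a^nb^m$ --- would contain the factor $ba$, which is impossible. Since $u^nv^m=\morph(a^nb^m)$ and $\morph$ preserves primitivity, $u^nv^m$ is primitive.

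$(\Leftarrow)$ I would argue by contradiction: let $w$ be primitive with $\morph(w)=z^p$ for a primitive word $z$ and some $p\ge 2$. If $w$ were a power of a single letter, then $w\in\{a,b\}$ and $\morph(w)\in\{u,v\}$ would be primitive, a contradiction; so $w$ contains both letters. Since primitivity is conjugacy-invariant and, by \Cref{rem:rot}, so is the property that $\morph(\cdot)$ is a power, we may take $w=\prod_{i=1}^{k}a^{n_i}b^{m_i}$ with $k\ge 1$ and all $n_i,m_i\ge 1$. If $k=1$ then $\morph(w)=u^{n_1}v^{m_1}$ is a non-primitive word of the form $u^nv^m$ with $n,m\ge 1$, contradicting the hypothesis, so assume $k\ge 2$. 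By \Cref{prop:codesmorphism}, $X=\{u,v\}$ is a code, so $\morph(w)$ has the unique $X$-factorization $\morph(w[0])\morph(w[1])\cdots\morph(w[|w|-1])$, whose cut positions I call \emph{codeword boundaries}.

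I would then split on whether the position $|z|$, where the first two copies of $z$ in $\morph(w)=z^p$ meet, is a codeword boundary. If it is, then the length-$|z|$ prefix $z$ of $\morph(w)$ lies in $X^{+}$, say $z=\morph(w')$; here $w'$ is primitive (else $z=\morph(w')$ is a proper power, contradicting $z$ primitive), and from $\morph((w')^{p})=z^{p}=\morph(w)$ and the injectivity of $\morph$ we obtain $w=(w')^{p}$, contradicting the primitivity of $w$. Otherwise $|z|$ falls strictly inside some codeword $x\in\{u,v\}$; in this misaligned case I would combine the fact that $|z|$ is a period of $\morph(w)$ with the individual primitivity of $u$ and $v$ and with the Lyndon--Sch\"utzenberger equations recalled in \Cref{ap:Lyndon} to extract a proper power of the form $u^{n}v^{m}$ with $n,m\ge 1$, again contradicting the hypothesis. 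In either case a contradiction follows, so $\morph(w)$ is primitive for every primitive $w$.

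The misaligned case is where I expect essentially all the difficulty to lie. Turning the bare fact that a codeword overhangs the period boundary $|z|$ into a usable word equation forces a case split on which of $u,v$ overhangs and on the overhang length relative to $|u|$ and $|v|$, with longest common prefix/suffix relations among $z$, $u$ and $v$ as bookkeeping. The hypotheses $u\ne v$ and $u,v$ primitive are exactly what rule out the degenerate sub-overlaps --- such as a nonempty word being simultaneously a prefix and a proper internal factor of $u^{\infty}$ --- after which Fine--Wilf together with the Lyndon--Sch\"utzenberger equations collapses what remains to a short non-primitive $u^nv^m$. Everything outside this case is conjugacy bookkeeping plus the trivial forward direction.
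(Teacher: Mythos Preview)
The paper does not supply its own proof of this lemma; it is quoted from \cite{Huang}. Your forward direction and the reduction of the converse to the dichotomy ``$|z|$ is / is not a codeword boundary'' are correct. The genuine gap is the misaligned case, which you leave as a plan rather than a proof: saying you will ``combine the fact that $|z|$ is a period of $\morph(w)$ with the individual primitivity of $u$ and $v$ and with the Lyndon--Sch\"utzenberger equations'' names the tools but does none of the work. Concretely, once a codeword $x\in\{u,v\}$ straddles position $|z|$, writing $x=x_1x_2$ with $z=\alpha x_1$ for some $\alpha\in\{u,v\}^*$ and reading the second copy of $z$ from $x_2$ onward generates a chain of prefix/suffix overlaps among $u$, $v$ and $z$; collapsing that chain to a short non-primitive $u^nv^m$ is precisely the combinatorial case analysis carried out in \cite{Huang} (and, in another formulation, in \cite{RestR85,ShyrYu}), and Fine--Wilf alone does not deliver it. The ``bookkeeping'' you defer is where the substance of the lemma lives.

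If you want a route that bypasses this analysis, note that the paper also cites Lemma~\ref{u*v*}: for injective $\morph=(u,v)$, every non-primitive image of a primitive word of length at least $2$ is a conjugate, lying in $\{u,v\}^+$, of the unique power (if any) in $\{u^nv:n\ge1\}\cup\{uv^n:n>1\}$. Your hypothesis makes that set power-free, so no primitive $w$ with $|w|\ge 2$ can map to a power; combined with the primitivity of $u$ and $v$ this yields the converse in one line---though of course this just relocates the hard combinatorics into another cited result.
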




The following lemma describes what happens when the property of Lemma \ref{le:pure=primitive} is not verified. In particular, it considers the combinatorial structure of the non-primitive words generated by the morphism when applied to some primitive word distinct from a single letter. Recall that if $u^nv^m=z^k$, for some primitive word $z$ and $k>1$, then we can derive that $n=1$ or $m=1$ (see Theorem~\ref{lyndonS} in Appendix~\ref{ap:Lyndon}).
\begin{lemma}[\cite{RestR85,ShyrYu}]
\label{u*v*}
    Let $\morph = (u, v)$ be an injective morphism, and let $W = \{u^nv \mid n\geq1\} \cup \{uv^n \mid n>1\}$. Then, there is at most one primitive word $z$ and one integer $k>1$ such that $z^k\in W$, i.e. $|W\cap \overline{Q(\{a,b\}^*)}|\leq1$. Moreover, let $Y=\morph(Q(\{a,b\}^*)^{\geq 2})\cap \Pow{\{a,b\}^*}$. Then $Y=\conjclass(z^k)\cap \{u,v\}^+$.
\end{lemma}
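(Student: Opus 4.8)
**Proof plan for Lemma~\ref{u*v*}.**

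The plan is to reduce everything to the equation $u^n v^m = z^k$ with $z$ primitive and $k>1$, and to invoke the Lyndon--Schützenberger dichotomy (Theorem~\ref{lyndonS} in Appendix~\ref{ap:Lyndon}) which forces $n=1$ or $m=1$. First I would prove the \emph{uniqueness} statement $|W\cap\Pow{\{a,b\}^*}|\leq 1$ by contradiction: suppose two distinct words $w_1,w_2\in W$ are both non-primitive, say $w_1=z_1^{k_1}$ and $w_2=z_2^{k_2}$ with $z_i$ primitive and $k_i>1$. Since $w_1,w_2$ both lie in $\{u,v\}^+$ and are powers, by the Lyndon--Schützenberger result each of $w_1,w_2$ can be written as a power of a word that commutes with both $u$ and $v$ in the relevant ways; the key point is that $z_1$ and $z_2$ must each be a conjugate-free common ``root'' controlled by $u,v$, and since $u\ne v$ are fixed, one shows $z_1$ and $z_2$ generate the same primitive word and $w_1=w_2$ or derives that $u,v$ commute (contradicting injectivity via Lemma~\ref{le:codedontcommute} and Proposition~\ref{prop:codesmorphism}). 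Concretely, if $w_1=u^{n_1}v$ and $w_2=u^{n_2}v$ are both powers with $n_1<n_2$, then $w_2 w_1^{-1}$ considerations in the free group (or a direct combinatorial cancellation $u^{n_2}v = z^k$, $u^{n_1}v=z'^{k'}$) force $u$ to be a power of $z$ and hence $u,v$ to commute; the mixed cases $w_1\in\{u^nv\}$, $w_2\in\{uv^m\}$ are handled symmetrically, using that $m>1$ is required in the second family precisely to avoid the trivial overlap at $uv$.

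Next, for the ``Moreover'' part, set $Y=\morph(\Prim{\{a,b\}^*}^{\geq 2})\cap\Pow{\{a,b\}^*}$, where $\Prim{\{a,b\}^*}^{\geq 2}$ denotes primitive binary words of length at least $2$. Let $w$ be such a primitive word with $\morph(w)$ non-primitive. Write $w$ in terms of its letters; since $w$ is primitive of length $\geq 2$ it contains both $a$ and $b$. By Remark~\ref{rem:rot} we may replace $w$ by any conjugate without changing whether $\morph(w)$ is a power, so I would rotate $w$ to the form $a^{n_1}b^{m_1}\cdots a^{n_t}b^{m_t}$; then $\morph(w)=u^{n_1}v^{m_1}\cdots u^{n_t}v^{m_t}$. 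If $\morph(w)$ is a power $z^k$, I claim $z$ is (a conjugate of) the unique word from the first step: indeed the block $u^{n_1}v^{m_1}$ is a prefix-type factor sitting inside a power of $z$, and by iterating the Lyndon--Schützenberger argument along the blocks one shows all the exponents are forced and $\morph(w)\in\conjclass(z^k)$, while conversely $\morph(w)\in\{u,v\}^+$ by construction. For the reverse inclusion $\conjclass(z^k)\cap\{u,v\}^+\subseteq Y$: any word $y\in\conjclass(z^k)\cap\{u,v\}^+$ is a conjugate of $z^k$, hence non-primitive; being in $\{u,v\}^+$ it equals $\morph(w')$ for some $w'$ (by the code property, Proposition~\ref{prop:codesmorphism}, the factorization over $\{u,v\}$ is unique), and $w'$ must be primitive of length $\geq 2$ — if $w'$ were a single letter or a power, $\morph(w')$ would be a power of $u$ or of $v$ alone, which by the uniqueness step cannot coincide with $z^k$ unless $z\in\{u,v\}^*$, the degenerate case one checks directly.

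The main obstacle I anticipate is the bookkeeping in the ``all exponents are forced'' claim: going from ``$u^{n_1}v^{m_1}\cdots u^{n_t}v^{m_t}$ is a power of $z$'' to ``$z$ is the one distinguished primitive word and $w$ is a conjugate of a power of a fixed binary word'' requires carefully tracking where codeword boundaries fall relative to the period $|z|$, and ruling out ``misaligned'' periodicities where $|z|$ is not a multiple of a single $|u^nv|$ block. The cleanest route is probably to first establish that $z\in\{u,v\}^+$ (so $z$ is itself a product of codewords, using synchronization/defect-type arguments together with the code property), reducing the problem to a word equation over the two-letter ``alphabet'' $\{u,v\}$ — at which point one is back to the $u^nv^m=z^k$ shape and Theorem~\ref{lyndonS} closes it. I would flag that the hypothesis ``length $\geq 2$'' (equivalently, $w$ not a single letter) is exactly what excludes the spurious cases $\morph(a)=u$, $\morph(b)=v$ being powers themselves, which is why $Y$ is governed by a single $\conjclass(z^k)$ rather than by several unrelated powers.
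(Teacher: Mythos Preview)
The paper does not give its own proof of this lemma: it is quoted verbatim from the literature (\cite{RestR85,ShyrYu}) and used as a black box, so there is no in-paper argument to compare your plan against. Your overall strategy for the uniqueness clause $|W\cap\Pow{\{a,b\}^*}|\leq 1$---assume two distinct non-primitive elements of $W$ and derive $uv=vu$ via Lyndon--Sch\"utzenberger/Fine--Wilf---is the standard one and matches what the cited references do, though the step ``force $u$ to be a power of $z$'' is stated more strongly than what is actually true or needed.

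There is, however, a genuine gap in your plan for the ``Moreover'' clause. Your proposed ``cleanest route'' is to first establish $z\in\{u,v\}^+$ and then work over the two-letter alphabet $\{u,v\}$. This is false in general. Take $\morph=(aba,b)$: then $\morph(ab)=abab=(ab)^2$, so $z=ab$ and $k=2$, yet $ab\notin\{aba,b\}^+$. The primitive root $z$ need not itself be decodable over the code $\{u,v\}$; only the full power $z^k$ (and some of its conjugates) lies in $\{u,v\}^+$. With this reduction gone, your ``iterate Theorem~\ref{lyndonS} along the blocks $u^{n_1}v^{m_1}\cdots u^{n_t}v^{m_t}$'' step has no anchor, and the obstacle you yourself flag---misaligned periodicities where $|z|$ does not divide the length of a single $u^nv^m$ block---is exactly the case that occurs and that your plan does not handle. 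The published proofs in \cite{RestR85,ShyrYu} proceed instead through a direct overlap/periodicity analysis of two $\{u,v\}$-factorizations of conjugate words (essentially the same machinery that reappears in the proof of Lemma~\ref{prop:purenotcircular=rotation} in Appendix~\ref{ap:Lyndon}), never attempting to decode $z$ on its own.
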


The next lemma provides a characterization of the structure of the set $P^\morph$ for an injective morphism $\morph$. The proof can be found in Appendix \ref{ap:Lyndon}.

\begin{lemma}
\label{le:P_mu_structure}
    Let $\morph = (u, v)$ be an injective morphism, and let $W = \{u^nv \mid n\geq1\} \cup \{uv^n \mid n>1\}$. We can distinguish the two cases:
    \begin{enumerate}
        \item \label{Pmu_1}$|W\cap \Pow{\{a,b\}^*}|=0$. Then only one of the following occurs:
    \begin{enumerate}
        \item \label{Pmu_1a}$P^\morph = \emptyset$; 
        \item \label{Pmu_1b}$P^\morph = \{c\}$, for some $c\in\{a,b\}$;
        \item \label{Pmu_1c}$P^\morph = \{a,b\}$.
    \end{enumerate}
    \item \label{Pmu_2}$|W\cap \Pow{\{a,b\}^*}|=1$. Then there exists a unique $w\in\{a,b\}^*$ such that $\morph(w)\in W\cap \Pow{\{a,b\}^*}$, and only one of the following occurs:
    \begin{enumerate}
        \item \label{Pmu_2a}$P^\morph = \R{w}$;
        \item \label{Pmu_2b}$P^\morph = \R{w}\cup\{c\}$, for some $c\in \{a,b\}$.
    \end{enumerate}
    \end{enumerate}
\end{lemma}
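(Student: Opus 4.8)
The plan is to organize the argument around the single primitive word $z$ provided by Lemma \ref{u*v*}. First I would recall that $P^\morph$ is, by definition, the set of primitive words $w$ with $\morph(w)$ non-primitive, and split off the two trivial single-letter cases $w=a$ and $w=b$: whether $a\in P^\morph$ depends only on whether $u=\morph(a)$ is a power, and likewise $b\in P^\morph$ iff $v=\morph(v)$ is a power. So the real content is to describe $P^\morph\cap Q(\{a,b\}^*)^{\geq 2}$, the set of primitive binary words of length $\geq 2$ whose image is non-primitive. For these, the key input is the second part of Lemma \ref{u*v*}: the set $Y=\morph(Q(\{a,b\}^*)^{\geq 2})\cap \Pow{\{a,b\}^*}$ equals $\conjclass(z^k)\cap\{u,v\}^+$ when $|W\cap \Pow{\{a,b\}^*}|=1$, and is empty when $|W\cap \Pow{\{a,b\}^*}|=0$.

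In Case \ref{Pmu_1} (no power in $W$), Lemma \ref{u*v*} gives $Y=\emptyset$, so no primitive word of length $\geq 2$ is in $P^\morph$; hence $P^\morph\subseteq\{a,b\}$, and the three subcases \ref{Pmu_1a}--\ref{Pmu_1c} are exactly the three possibilities for which of $a,b$ lie in $P^\morph$, governed by whether $u$ and/or $v$ is a power. I should note that the two letters behave independently here, so all three subcases can occur (as Example \ref{ex:pd_power} shows for \ref{Pmu_1b}), and that this exhausts the options. In Case \ref{Pmu_2}, there is a unique power $z^k\in W$, and by Lemma \ref{u*v*} the non-primitive images of primitive words of length $\geq 2$ are precisely the elements of $\conjclass(z^k)\cap\{u,v\}^+$. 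The crucial point is injectivity of $\morph$ restricted to the relevant set: I would argue that the primitive binary words mapping into $\conjclass(z^k)$ form exactly one conjugacy class $\R{w}$, where $w$ is the unique primitive binary word with $\morph(w)\in W\cap\Pow{\{a,b\}^*}$. Uniqueness of $w$ up to rotation follows because by Remark \ref{rem:rot} the images of conjugates are conjugates, so $\R{w}\subseteq P^\morph$; and conversely, if $w'\in P^\morph$ has $|w'|\geq 2$ then $\morph(w')\in\conjclass(z^k)\cap\{u,v\}^+$, and since $\{u,v\}$ is a code (Proposition \ref{prop:codesmorphism}) the $\{u,v\}$-factorization of $\morph(w')$ — read around the appropriate rotation — is forced, which pins down $w'$ as a rotation of $w$. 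Then subcases \ref{Pmu_2a} and \ref{Pmu_2b} distinguish whether additionally some single letter $c$ (necessarily the one whose image is a power, if any) lies in $P^\morph$; note $c$ cannot already be a rotation of $w$ since $|w|\geq 2$.

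The main obstacle I anticipate is the converse inclusion in Case \ref{Pmu_2}: showing that every primitive binary $w'$ of length $\geq 2$ with $\morph(w')$ non-primitive is a rotation of the distinguished $w$, rather than merely some word with image in $\conjclass(z^k)$. This requires carefully combining the fact that $z$ and $k$ are unique (first part of Lemma \ref{u*v*}) with a rigidity argument: from $\morph(w')\in\conjclass(z^k)$ one extracts a rotation of $w'$ whose image lies in $W$, i.e.\ has the form $u^nv$ or $uv^n$, and the code property forces the exponent and hence $w'$ up to conjugacy. A secondary subtlety is the bookkeeping of when a single letter adds a new element to $P^\morph$ versus being subsumed — but since length-$1$ words are never conjugate to length-$\geq 2$ words, this is just the observation that the single-letter contribution is always ``disjoint'' from $\R{w}$, and one checks that at most one of $a,b$ can have a power image when $|W\cap\Pow{\{a,b\}^*}|=1$ (otherwise both $u$ and $v$ would be powers, contradicting that $u^nv\notin\Pow{\{a,b\}^*}$ forces control on $z$, or more directly contradicting Lemma \ref{u*v*}'s uniqueness since then $uv\in W$ would give a second power). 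Modulo these points, the proof is a matter of assembling Lemmas \ref{u*v*} and \ref{le:codedontcommute}, Proposition \ref{prop:codesmorphism}, and Remark \ref{rem:rot}.
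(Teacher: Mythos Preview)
Your overall approach matches the paper's: separate the single-letter contributions (whether $u$ or $v$ is itself a power) from the length-$\geq 2$ contributions, and control the latter via Lemma~\ref{u*v*}. Your handling of Case~\ref{Pmu_1} and the inclusion $\R{w}\subseteq P^\morph$ in Case~\ref{Pmu_2} are correct, and your sketch of the converse inclusion through the code property is reasonable (the paper simply appeals to Lemma~\ref{u*v*} for this).

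There is one concrete error. In your last paragraph you need to rule out $P^\morph=\R{w}\cup\{a,b\}$ in Case~\ref{Pmu_2}, i.e., show that $u$ and $v$ cannot both be powers when $|W\cap\Pow{\{a,b\}^*}|=1$. Your ``more direct'' argument, that then $uv\in W$ would be a second power, is false: take $u=a^2$ and $v=b^2$, both squares, yet $uv=aabb$ is primitive, so no second power in $W$ is produced this way. Your first alternative (``contradicting that $u^nv\notin\Pow{\{a,b\}^*}$ forces control on $z$'') is too vague to constitute an argument. The paper's proof here is the standard one: if $u=p^s$, $v=q^t$ with $s,t\geq 2$ and $u^mv^n=z^k\in W$ with $k\geq 2$, then $p^{ms}q^{nt}=z^k$ has all three exponents $\geq 2$, so Theorem~\ref{lyndonS} (Lyndon--Sch\"utzenberger) forces $p$ and $q$ to be powers of a common primitive word, whence so are $u$ and $v$, contradicting injectivity of $\morph$. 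You should replace your sketch of this step with that application of Theorem~\ref{lyndonS}.
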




Note that, among the cases described in Lemma \ref{le:P_mu_structure}, the Case~\ref{Pmu_1a} is the only one in which every primitivity-preserving morphism $\morph=(u,v)$ falls. In this case, both $u$ and $v$ are primitive words.  If the morphism $\morph=(u,v)$ is not primitivity-preserving and $W\cap \Pow{\{a,b\}^*}=\emptyset$, then it is easy to deduce from Lemma \ref{le:P_mu_structure} that either only one between $u$ and  $v$ is a non-primitive word (Case~\ref{Pmu_1b}), or both are non-primitive words (Case~\ref{Pmu_1c}).

A classification of the non-primitivity-preserving morphisms $\morph = (u, v)$ that fall in Cases~\ref{Pmu_2a} and~\ref{Pmu_2b}, with $W\cap \Pow{\{a,b\}^*}\neq \emptyset$, and their respective $\morph$-power words, can be derived from a result given in ~\cite[Theorem 8]{HolubRS23}. Such a classification is reported in the Appendix~\ref{ap:Lyndon} (Lemma \ref{le:Holub_classification}).

The set of primitivity-preserving morphisms is closed under composition, as shown in the following lemma.

\begin{lemma}
    Let $\morph_1:\Sigma^*\rightarrow\Gamma^*$, $\morph_2:\Gamma^*\rightarrow\Delta^*$ be two morphisms. If $\morph_1$ and $\morph_2$ are both primitivity-preserving, then $\morph_2\circ\morph_1$ is primitivity-preserving too.
\end{lemma}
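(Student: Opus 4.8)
The plan is to prove this directly from the definition of primitivity-preserving morphism, using only the basic fact (Remark~\ref{rem:rot}, or simply the definition) that morphisms commute with the ``is a power'' relation through their action on words. Let $w\in Q(\Sigma^*)$ be an arbitrary primitive word; the goal is to show $(\morph_2\circ\morph_1)(w) = \morph_2(\morph_1(w))$ is primitive. The natural first step is to apply $\morph_1$ to $w$: since $\morph_1$ is primitivity-preserving and $w$ is primitive, $\morph_1(w)\in Q(\Gamma^*)$ is primitive. Then apply $\morph_2$ to the primitive word $\morph_1(w)$: since $\morph_2$ is primitivity-preserving, $\morph_2(\morph_1(w))\in Q(\Delta^*)$ is primitive. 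This is exactly $(\morph_2\circ\morph_1)(w)$, so we are done.

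The only subtlety worth spelling out is the degenerate case $w=\varepsilon$, which is not in $Q(\Sigma^*)$ anyway, so it need not be considered; and one should note that $\morph_1(w)$ is genuinely a word in $\Gamma^+$ (nonempty) so that feeding it into $\morph_2$'s primitivity-preservation hypothesis is legitimate — this holds because primitivity-preserving morphisms are injective (as noted in the text just after the definition), hence $\morph_1(w)\neq\varepsilon$ whenever $w\neq\varepsilon$. With that observation the chain of implications $w$ primitive $\Rightarrow \morph_1(w)$ primitive $\Rightarrow \morph_2(\morph_1(w))$ primitive goes through verbatim.

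I do not expect any real obstacle here: the statement is essentially a one-line transitivity argument, and the only thing to be careful about is not to invoke any combinatorial machinery (Lemmas~\ref{le:pure=primitive}, \ref{u*v*}, \ref{le:P_mu_structure}) that would be overkill — the proof should be purely formal, relying on the definition of composition of morphisms $(\morph_2\circ\morph_1)(w)=\morph_2(\morph_1(w))$ and the definition of primitivity-preservation. If anything, the write-up should simply make explicit that composition of morphisms is again a morphism (which is standard and already implicitly used throughout the paper), so that the statement ``$\morph_2\circ\morph_1$ is primitivity-preserving'' is meaningful.
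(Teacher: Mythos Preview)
Your proof is correct; it is exactly the obvious transitivity argument from the definition. The paper in fact states this lemma without proof (treating it as immediate), so your write-up matches what the authors clearly had in mind.
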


However, it is possible to obtain primitivity-preserving morphisms even as a composition of morphisms that do not necessarily satisfy this property. The following proposition gives a complete characterization.

\begin{proposition}\label{prop:pp_morphism_decomposition}
    Let $\morph:\{a,b\}^*\rightarrow\{a,b\}^*$ be an injective morphism. The morphism $\morph$ is primitivity-preserving if and only if, for all $\psi,\chi:\{a,b\}^*\rightarrow\{a,b\}^*$ such that $\morph=\psi \circ \chi$, it holds that $\chi=(p,q)$ is a primitivity-preserving morphism and $\psi$ is an injective morphism such that $P^\psi\cap \{p,q\}^+ = \emptyset$.
\end{proposition}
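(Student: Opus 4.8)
The plan is to prove the two implications separately, using the structure of $P^\morph$ provided by Lemma~\ref{le:P_mu_structure} and the closure of primitivity-preserving morphisms under composition. For the forward direction, assume $\morph$ is primitivity-preserving and fix a factorization $\morph = \psi\circ\chi$ with $\chi=(p,q)$. First I would argue that $\chi$ must be primitivity-preserving: if some primitive $w$ satisfied $\chi(w)=z^k$ with $k>1$ and $z$ primitive, then $\morph(w)=\psi(z^k)=\psi(z)^k$ would be non-primitive (its $k$ copies witness it), contradicting that $\morph$ preserves primitivity — here I only need that $w$ is primitive of length $\ge 1$, and the case $|w|=1$ must be treated as well, but a single letter cannot be a $\chi$-power unless $\chi$ is cyclic, which is excluded since $\morph=\psi\circ\chi$ is injective hence $\chi$ is injective. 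Next, $\psi$ is injective as a left factor of an injective morphism (if $\psi$ were not injective, i.e. $\psi(ab)=\psi(ba)$, then $\morph(ab)=\psi(\chi(ab))$ and $\morph(ba)=\psi(\chi(ba))$; since $\chi(ab)$ and $\chi(ba)$ are conjugates with the same Parikh vector and $\chi$ injective gives $\chi(ab)\ne\chi(ba)$, one must check this does not immediately yield $\morph(ab)=\morph(ba)$ — the cleanest route is that $\psi\circ\chi$ injective forces $\psi$ injective on $\chi(\{a,b\}^*)$, and since $\chi$ is primitivity-preserving hence acyclic, $\{p,q\}$ is a code, so the standard fact that a morphism whose restriction to a code-generated submonoid is injective extends the injectivity suffices; alternatively invoke Proposition~\ref{prop:codesmorphism}). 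Finally, to see $P^\psi\cap\{p,q\}^+=\emptyset$: suppose $\psi(y)=t^k$ with $y$ primitive, $t$ primitive, $k>1$, and $y\in\{p,q\}^+$. Since $\chi$ is primitivity-preserving and $\{p,q\}$ is a code, $y$ being a primitive word in $\{p,q\}^+$ means $y=\chi(x)$ for a primitive $x$ (a power $x=x_0^j$ would give $y=\chi(x_0)^j$ non-primitive); then $\morph(x)=\psi(\chi(x))=\psi(y)=t^k$ is non-primitive with $x$ primitive, contradicting that $\morph$ is primitivity-preserving.

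For the converse, I would argue the contrapositive: suppose $\morph$ is not primitivity-preserving, so there is a primitive $w$ with $\morph(w)=z^k$, $z$ primitive, $k>1$, and I must exhibit a factorization $\morph=\psi\circ\chi$ violating the stated conditions. The natural choice is the trivial factorization $\chi=\morph$, $\psi=\mathrm{id}$ — but $\mathrm{id}$ is primitivity-preserving with $P^{\mathrm{id}}=\emptyset$, so this does not violate anything; rather it shows I need a nontrivial factorization. Here the key step is to use Lemma~\ref{le:P_mu_structure}: $P^\morph\ne\emptyset$, and by case analysis either $P^\morph$ contains a single letter $c$, or $P^\morph\supseteq\R{w}$ for a primitive $w$ of length $\ge 2$. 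In the first case ($c\in P^\morph$, $\morph(c)=z^k$), take $\chi$ to be a morphism that ``unfolds'' $c$ into $z$: e.g. if $c=a$ set $\chi=(z,b)$ and $\psi$ defined by $\psi(a)=a$... this requires care because $\psi$ must send $z=\chi(a)$-images correctly; the clean approach is to let $\chi=(z,\morph(b))$ when $\morph(a)=z^k$ — wait, then $\psi$ would need $\psi(\chi(a))=\psi(z)=z^k$, forcing $\psi(a)$... this only works if $z$ is a single letter or we enlarge the alphabet, which the statement forbids. So instead I would use: $\morph$ non-primitivity-preserving on $\{a,b\}$ means (by Lemmas~\ref{u*v*} and~\ref{le:P_mu_structure}) that $\morph=(u,v)$ with $u^nv^m=z^k$ non-primitive for suitable $n,m$; the factorization to use is $\morph=\psi\circ\chi$ where $\chi$ is primitivity-preserving but $P^\psi\cap\{p,q\}^+\ne\emptyset$ — concretely, take $\chi=(a,ba)$ or another Sturmian (hence primitivity-preserving) morphism chosen so that $\morph$ factors through it, and verify that the residual $\psi$ has a $\psi$-power word inside $\{p,q\}^+$.

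\textbf{Main obstacle.} The hard part will be the converse direction: given only that $\morph$ fails to preserve primitivity, I must \emph{construct} an explicit nontrivial factorization $\morph=\psi\circ\chi$ over the two-letter alphabet (no alphabet enlargement allowed) with $\chi$ primitivity-preserving and $\psi$ having a $\psi$-power word in $\{p,q\}^+$. This likely requires a finer case analysis driven by Lemma~\ref{le:P_mu_structure} and the classification in Lemma~\ref{le:Holub_classification}: when $P^\morph=\{c\}$ one exploits that $\morph(c)$ is a proper power $z^k$ and that $z$ itself, or a short word mapping onto it, can serve to peel off one primitivity-preserving layer; when $P^\morph\supseteq\R{w}$ with $|w|\ge 2$ one uses that $\morph(w)=z^k$ together with the fact that every such $\morph$ (by the cited classification) factors as a primitivity-preserving morphism followed by one of a small explicit list of ``bad'' morphisms such as the period-doubling morphism $\pdmorph=(ab,aa)$, for which $b\in P^\pdmorph\cap\{ab,aa\}^+$ fails but a suitable conjugate/variant lands in $\{p,q\}^+$. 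Getting the alphabet-preserving factorization to exist in \emph{all} non-primitivity-preserving cases — rather than just producing it after enlarging the alphabet — is the delicate point, and I expect it to lean essentially on the Sturmian-morphism machinery and Lemma~\ref{le:Holub_classification}.
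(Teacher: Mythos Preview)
Your forward direction is essentially the paper's argument: if $\chi$ failed to be primitivity-preserving then $\morph(w)=\psi(z)^k$ would be a power, and if some primitive $y\in\{p,q\}^+$ lay in $P^\psi$ then writing $y=\chi(x)$ with $x$ primitive (since $\chi$ is injective and primitivity-preserving) gives $\morph(x)$ non-primitive. The injectivity of $\psi$ you worry about is automatic in the binary setting: a non-injective binary $\psi$ is cyclic, which would force $\morph$ cyclic too.

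The converse is where you go off the rails. You write that the trivial factorization $\chi=\morph$, $\psi=\mathrm{id}$ ``does not violate anything'' because $P^{\mathrm{id}}=\emptyset$ --- but you have forgotten that the condition also requires $\chi$ to be primitivity-preserving. If $\morph$ is not primitivity-preserving and you take $\chi=\morph$, then $\chi$ is not primitivity-preserving, and the condition fails right there. That finishes the contrapositive in one line; no Sturmian machinery, no Lemma~\ref{le:P_mu_structure}, no Holub classification is needed. Your entire ``main obstacle'' is an artifact of this oversight.

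The paper's own proof of the converse is phrased directly rather than contrapositively, and is equally short: given any factorization satisfying the hypotheses and a primitive $w$, condition (i) gives $\chi(w)$ primitive, and then $\chi(w)\notin P^\psi$ (from $P^\psi\cap\{p,q\}^+=\emptyset$) forces $\psi(\chi(w))=\morph(w)$ primitive.
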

\begin{proof}
    For the first direction, suppose by contraposition that either $\chi$ is not primitivity-preserving or $P^\psi\cap \{p,q\}^+ \neq \emptyset$.
    If $\chi$ is not primitivity-preserving, observe that there exists a primitive word $w\in\{a,b\}^*$ such that $\morph(w) = \psi(\chi(w)) = \psi(z^n) = \psi(z)^n$, for some primitive word $z$ and $n\geq2$.
    If $P^\psi\cap \{p,q\}^+ \neq \emptyset$, then there is at least one word $w\in \{a,b\}^*$ such that $\chi(w)\in P^\psi\cap \{p,q\}^+$, that is, $\morph(w) = \psi(\chi(w)) = z^n$ for some primitive word $z$ and some $n\geq2$.
    
    For the second direction, by hypothesis, we have that (i) a word $w$ is primitive if and only if $\chi(w)$ is primitive, and (ii) $\chi(w)\notin P^\psi$ for all $w\in\{a,b\}^*$. By combining these two assumptions, $w$ is primitive if and only if $\morph(w)=\psi(\chi(w))$ is primitive, and the thesis follows.
\end{proof}

\begin{example}
    Let $\morph=(abaa,aaab)$. It is easy to verify that $\morph=\pdmorph\circ\tau$, where $\pdmorph$ and $\tmmorph$ are the period-doubling morphism and the Thue--Morse morphism, respectively. 
    We have that $\pdmorph$ is not primitivity-preserving and  $P^{\pdmorph}=\{b\}$ (see Lemma~\ref{le:Holub_classification} in the Appendix~\ref{ap:Lyndon}).
    Since $\tmmorph$ is a primitivity-preserving morphism and $b\notin\{ab,ba\}^+$, from Proposition~\ref{prop:pp_morphism_decomposition} it follows that $\morph$ is primitivity-preserving too.
\end{example}

\begin{example}
Let $\psi = (aba,b)$, and consider the morphism $\morph= (abab, baba)=\psi\circ \tmmorph$, 
where $\tmmorph$ is the Thue--Morse morphism, which is primitivity-preserving. In this case, $\psi$ is not primitivity-preserving (since $\psi(ab)=(ab)^2$), nor is $\morph$. 
\end{example}

\subsection{Recognizable morphisms}

In this subsection, we focus on some structural and combinatorial properties of morphisms that generate words admitting a unique factorization in circular factors, similarly to the notion of circular code \cite{codesautomata}. 


Let $\La\subseteq\Sigma^*$. An injective morphism $\morph:\Sigma^*\rightarrow\Gamma^*$ is \emph{recognizable} on $\morph(\La)$ if for every non-empty word $w\in\morph(\La)$ and every word $w'\in\R{w}$, there exist, and are unique, $p\in\Gamma^+$, $q\in\Gamma^*$, $z\in\Sigma^*$, and $c\in\Sigma$, such that $w' = q\morph(z)p$ and $pq=\morph(c)$.
If $\La=\Sigma^*$, we simply say that $\morph$ is recognizable.


In other words, every image under a recognizable morphism $\morph$ has a unique circular factorization in words of $\morph(\La)$. Equivalently, a recognizable morphism on $\La$ can be regarded as an injective map on the necklaces over $\La$, i.e., for all $x,y \in \La$, it holds that $\R{\morph(x)} = \R{\morph(y)}$ if and only if $\R{x}=\R{y}$.

\begin{figure}[h!]
  \centering

  \begin{subfigure}[t]{0.3\textwidth}
    \centering
    \includegraphics[width=\linewidth]{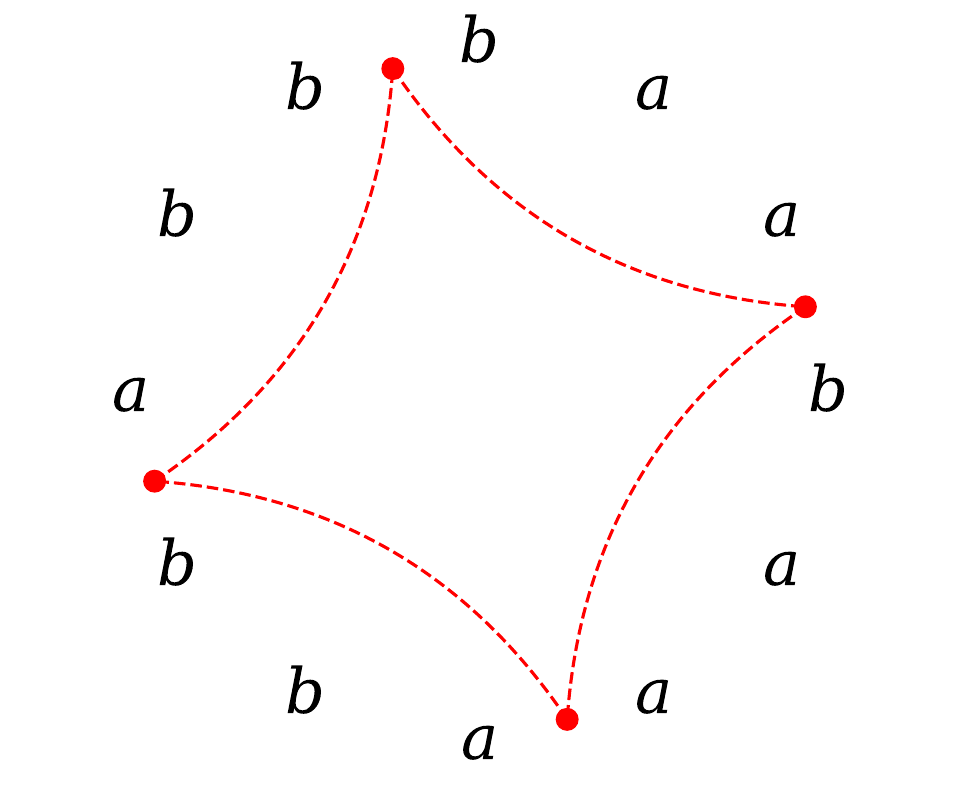}
    \caption{\centering $w=baaabbabbbaa$}
    \label{fig:baa_abb}
  \end{subfigure}
  \qquad
  \begin{subfigure}[t]{0.3\textwidth}
    \centering
    \includegraphics[width=\linewidth]{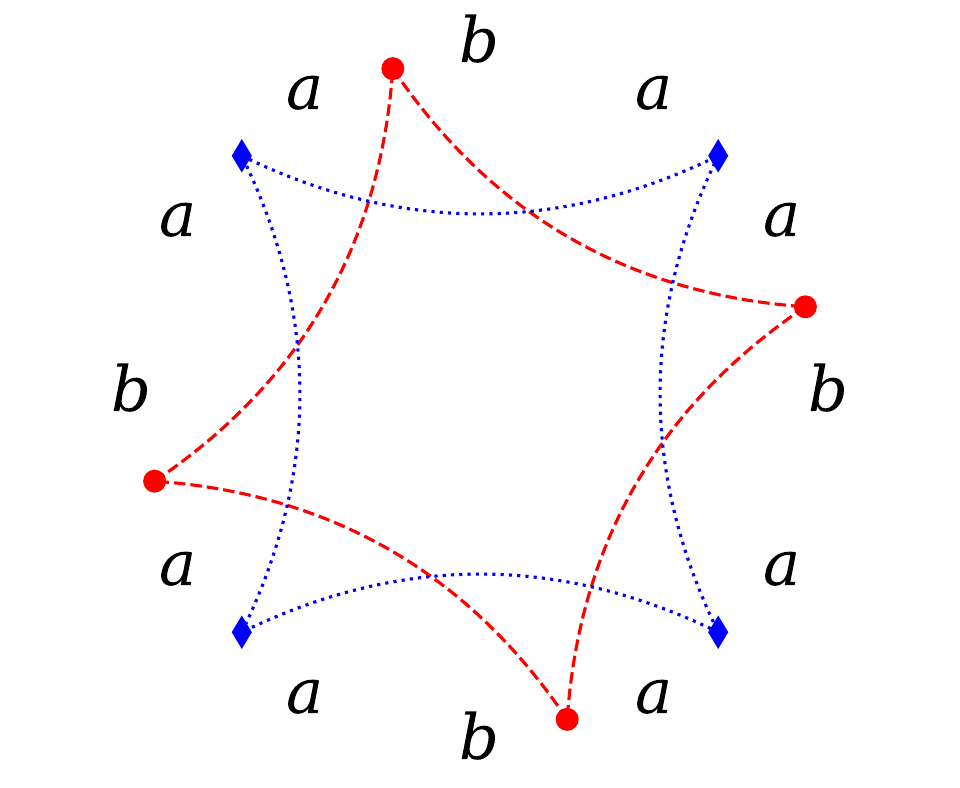}
    \caption{\centering $w=baabaabaabaa$}
    \label{fig:baa_aba}
  \end{subfigure}

  \caption{On the left, the unique circular factorization of $w=baaabbabbbaa$ into $\morph_1(a)=baa$ and $\morph_1(b)=abb$. On the right, two distinct circular factorizations of $w=baabaabaabaa$ into $\morph_2(a)=baa$ and $\morph_2(b)=aba$, respectively in blue and red.}
  \label{fig:recognizable_morphisms}
\end{figure}

\begin{example}
Let us consider the injective morphism $\morph_1=(baa, abb)$. Such a morphism is recognizable since every word in $\morph_1(\{a,b\}^*)$ has a unique circular factorization into the words $\morph_1(a)$ and $\morph_1(b)$, as shown in Figure \ref{fig:baa_abb}.
\end{example}

The recognizability of a morphism is well studied in the context of bi-infinite words and symbolic dynamics~\cite{BealPR_EJC24}. 
Here, it is adapted to necklaces, or circular words, which can be seen as periodic bi-infinite words. Note that most of the results known in the literature on bi-infinite words focus on the aperiodic case. Therefore, the results provided in this section can also be interpreted as contributions toward the less-explored setting of periodic bi-infinite words. 

The following lemma establishes the close relationship between recognizable morphisms on $\morph(\Sigma^*)$ and a property related to the so-called very pure codes, which are properly included in the class of pure codes \cite{Restivo74}. 

\begin{lemma}[{\cite[Proposition 7.1.1]{codesautomata}}]
\label{le:rec=circ}
An injective morphism $\morph:\Sigma^*\rightarrow\Gamma^*$ is recognizable if and only if for every $u,v\in \Gamma^*$, if $uv,vu\in \morph(\Sigma^*)$ then $u,v\in \morph(\Sigma^*)$.
\end{lemma}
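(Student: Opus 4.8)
The plan is to derive this as the circular analogue of the classical characterization of very pure codes, following the spirit of the proof of \cite[Proposition~7.1.1]{codesautomata}. Since $X:=\morph(\Sigma)$ is a code (Proposition~\ref{prop:codesmorphism}), I would first record a reformulation of circular factorizations: a quadruple $(p,q,z,c)$ with $w'=q\morph(z)p$ and $pq=\morph(c)$ is exactly a complete tiling of the necklace $w'$ by codewords of $X$, in which $\morph(c)$ occupies the window made of the last $|p|$ positions of $w'$ together with its first $|q|$ positions, and $\morph(z_1),\dots,\morph(z_m)$ (with $z=z_1\cdots z_m$) tile the remaining arc. This gives a bijection between circular factorizations of $w'$ and complete circular codeword tilings of $w'$: from such a tiling one reads off $\morph(c)$ as the unique codeword covering the last position of $w'$, and then $c,p,q,z$ are forced (recall an injective morphism is non-erasing and injective on $\Sigma$). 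In particular, two distinct circular factorizations of $w'$ correspond to two tilings with distinct boundary sets, so the statement reduces to: $\morph$ is recognizable if and only if every word of $\morph(\Sigma^*)$ admits, up to rotation, a unique circular codeword tiling; this is what I would relate to the stated closure property of $\morph(\Sigma^*)$.

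For the ``only if'' direction, I would argue by contraposition. Assume $uv,vu\in\morph(\Sigma^*)$ but, say, $u\notin\morph(\Sigma^*)$; the cases $u=\varepsilon$ or $uv=\varepsilon$ are trivial, so $u\neq\varepsilon\neq uv$. Write $uv=\morph(x)$ with $x=x_1\cdots x_k\in\Sigma^+$ and put $w':=vu\in\R{\morph(x)}$. I then exhibit two distinct circular factorizations of $w'$. Writing $vu=\morph(y_1\cdots y_\ell)$ and splitting off the last block yields $(p,q,z,c)=(\morph(y_\ell),\varepsilon,\,y_1\cdots y_{\ell-1},\,y_\ell)$, with empty $q$-component. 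On the other hand, $u$ is a proper prefix of $\morph(x)$ not in $X^+$, so the position $|u|$ falls strictly inside some block $\morph(x_j)$; writing $\morph(x_j)=p'q'$ with $p',q'$ both nonempty accordingly, we get $w'=vu=q'\,\morph(x_{j+1}\cdots x_k x_1\cdots x_{j-1})\,p'$ and $p'q'=\morph(x_j)$, a circular factorization with nonempty $q$-component $q'$. These differ, so $\morph$ is not recognizable.

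For the ``if'' direction, fix a nonempty $\morph(x)$ and $w'\in\R{\morph(x)}$. Existence of a circular factorization is immediate from $\morph(x)=\morph(x_1)\cdots\morph(x_k)$. For uniqueness, suppose $w'$ admits two distinct circular factorizations, hence two complete circular tilings $T_1\neq T_2$; choose a position where their boundary sets differ, rotate $w'$ within $\R{\morph(x)}$ so that this position becomes $0$, and assume w.l.o.g.\ that $0$ is a $T_1$-boundary but lies strictly inside a $T_2$-codeword $B'$. Then $B'=\beta\alpha$, where $\alpha\neq\varepsilon$ is a prefix of $w'$ and $\beta\neq\varepsilon$ is a suffix of $w'$, and $w'=\alpha\,r\,\beta$ where $r\in\morph(\Sigma^*)$ is the concatenation of the remaining $T_2$-codewords (possibly empty). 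Since $0$ is a $T_1$-boundary, reading $w'$ from position $0$ along $T_1$ gives $w'\in\morph(\Sigma^*)$, and we also have $r\beta\alpha=r\,B'\in\morph(\Sigma^*)$; the closure hypothesis applied to the pair $(\alpha,\,r\beta)$ then gives $\alpha\in\morph(\Sigma^*)$, and applied to $(\alpha r,\,\beta)$ --- using $\alpha r\beta=w'\in\morph(\Sigma^*)$ and $\beta\alpha r=B'\,r\in\morph(\Sigma^*)$ --- gives $\beta\in\morph(\Sigma^*)$. Since $\alpha,\beta$ are nonempty they lie in $X^+$, so $B'=\beta\alpha$ is at once a single codeword of $X$ and a product of at least two codewords of $X$, contradicting that $X=\morph(\Sigma)$ is a code. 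Hence the circular factorization of $w'$ is unique and $\morph$ is recognizable.

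The step I expect to be the main obstacle is making the correspondence between circular factorizations and complete circular codeword tilings fully rigorous, in particular the index bookkeeping around the wrap-point and the degenerate case where $B'$ is the entire necklace; once that is in place, the heart of the ``if'' direction is the double application of the closure property, which forces both $\alpha$ and $\beta$ into $\morph(\Sigma^*)$ and thereby collides with $\morph(\Sigma)$ being a code.
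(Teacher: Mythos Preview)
The paper does not prove this lemma; it simply cites it as \cite[Proposition~7.1.1]{codesautomata}, so there is no in-paper argument to compare against. Your proof is correct and is essentially the classical proof of that proposition: the contrapositive for the forward direction exhibits two circular tilings of $vu$ (one with a codeword boundary at position~$0$ and one without), and the converse uses two applications of the closure property to force both halves of a wrapping codeword into $\morph(\Sigma^*)$, contradicting that $\morph(\Sigma)$ is a code. The bookkeeping you flag (the bijection between quadruples $(p,q,z,c)$ and circular tilings, the trivial case $v=\varepsilon$, and the degenerate case $r=\varepsilon$) all goes through as you indicate.
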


In the case of binary injective morphisms, the next lemma provides a useful characterization of recognizable morphisms. The proof is given in Appendix \ref{ap:Lyndon}.

\begin{lemma}
\label{prop:purenotcircular=rotation}
    Every primitivity-preserving binary morphism $\morph=(u,v)$ is recognizable, unless $u$ and $v$ are conjugates.
\end{lemma}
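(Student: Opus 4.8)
The plan is to prove the contrapositive in the form: if $\morph=(u,v)$ is primitivity-preserving and is \emph{not} recognizable, then $u$ and $v$ are conjugate. First note that, $\morph$ being primitivity-preserving, the images $u=\morph(a)$ and $v=\morph(b)$ are primitive (a proper power among them would violate primitivity-preservation) and distinct (else $\morph$ would not be injective); hence, by Lemma~\ref{le:pure=primitive}, every word in $\{u^nv^m\mid n,m\ge1\}$ is primitive. I will use this together with the classical Lyndon--Sch\"utzenberger facts recalled in Appendix~\ref{ap:Lyndon}. Next, by Lemma~\ref{le:rec=circ}, the failure of recognizability yields words $\alpha,\beta\in\Gamma^*$ with $\alpha\beta,\beta\alpha\in\morph(\Sigma^*)=\{u,v\}^*$ but not both $\alpha,\beta$ in $\{u,v\}^*$; I may assume $\alpha\notin\{u,v\}^*$, so $\alpha,\beta\neq\varepsilon$.

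Write $\alpha\beta=\morph(s)=\morph(s_1)\cdots\morph(s_m)$ and $\beta\alpha=\morph(t)=\morph(t_1)\cdots\morph(t_n)$. Since $\alpha\notin\{u,v\}^*$, the prefix $\alpha$ of $\morph(s)$ cannot end at a block boundary, hence ends strictly inside some block $\morph(s_j)$: writing $\morph(s_j)=pp'$ with $p,p'\neq\varepsilon$ we get $\alpha=\morph(s_1\cdots s_{j-1})\,p$ and $\beta=p'\,\morph(s_{j+1}\cdots s_m)$. Likewise the prefix $\beta$ of $\morph(t)$ cannot end at a block boundary --- otherwise the complementary suffix, which equals $\alpha$, would lie in $\{u,v\}^*$ --- so $\morph(t_k)=qq'$ with $q,q'\neq\varepsilon$, $\beta=\morph(t_1\cdots t_{k-1})\,q$ and $\alpha=q'\,\morph(t_{k+1}\cdots t_n)$. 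Thus $\morph(s)$ and $\morph(t)$ are two $\{u,v\}$-factorizations of one necklace whose cut positions disagree, and the displayed identities form a system of word equations in $u,v,p,p',q,q'$ and the images of the ``border'' factors $s_1\cdots s_{j-1},\,s_{j+1}\cdots s_m,\,t_1\cdots t_{k-1},\,t_{k+1}\cdots t_n$.

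The base case is $s=c$ a single letter, i.e.\ $\alpha=p$, $\beta=p'$, $\morph(c)=pp'\in\{u,v\}$. Then $\beta\alpha=p'p$ is a \emph{proper} rotation of the primitive codeword $\morph(c)$, hence primitive, and it lies in $\{u,v\}^*$; a length-versus-primitivity count (a product of two or more codewords of total length $|u|$ or $|v|$ would be a nontrivial power of the shorter one) forces $p'p$ to be a single codeword, and, being a proper rotation of $\morph(c)$, it must be the \emph{other} codeword --- which is then a rotation of $\morph(c)$, giving $u\sim v$. To reach this base case, I would take a witness $(\alpha,\beta)$ with $|\alpha\beta|$ minimal. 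If $s$ is not a single letter, then some border factor on the $s$-side is nonempty, so $|pp'|=|\morph(s_j)|<|\alpha\beta|$ with $pp'\in\{u,v\}^*$; if the rotation $p'p$ also lies in $\{u,v\}^*$ then as above $u\sim v$ and we are done, while otherwise $(p,p')$ is not by itself a witness and one must instead exploit the word equations --- trimming a nonempty border factor on the $s$- or $\beta$-side while keeping both cyclic factorizations inside $\{u,v\}^*$ --- to produce a strictly shorter witness, contradicting minimality.

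The delicate part, which I expect to be the main obstacle, is precisely this reduction: the trimming splits into cases according to the relative lengths of $p,p'$ (resp. $q,q'$) and of the border images, each case requiring a Fine--Wilf periodicity argument on overlapping prefixes and suffixes of $\morph(t)$, closed using that no $u^nv^m$ is a proper power (and, in the borderline configurations, the classification of binary non-primitivity-preserving morphisms, Lemma~\ref{le:Holub_classification}). A possibly cleaner alternative, avoiding the case bookkeeping, is to argue entirely with the two bi-infinite $\{u,v\}$-parses of the periodic word of period $\alpha\beta$ and show directly that two distinct such parses force $u$ and $v$ to have a common conjugate.
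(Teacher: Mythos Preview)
Your setup is sound --- using Lemma~\ref{le:rec=circ} to obtain $\alpha,\beta$ with $\alpha\beta,\beta\alpha\in\{u,v\}^*$ but $\alpha\notin\{u,v\}^*$, and your base case $|s|=1$ is correct. The gap is exactly where you flag it: the reduction to the base case is not carried out. Taking a minimal witness and ``trimming a nonempty border factor while keeping both cyclic factorizations inside $\{u,v\}^*$'' is a hope, not an argument. When $p'p\notin\{u,v\}^*$ you have no obvious shorter witness at hand, and the case analysis you anticipate (relative lengths of $p,p',q,q'$; Fine--Wilf on overlapping border images; invoking Lemma~\ref{le:Holub_classification} in borderline cases) is precisely the substance of the lemma. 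Without it, nothing prevents a minimal witness from having $|s|\ge 2$. As written, the proof is incomplete.

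The paper's proof bypasses minimality entirely. Starting from a word $w$ with two misaligned $\{u,v\}$-factorizations, it first observes that both $u$ and $v$ must occur in each circular factorization (else a proper prefix and suffix of some codeword commute, forcing a power). It then splits on which $X_\morph$-factors appear: if $u^2$ is an $X_\morph$-factor of $w$, \cite[Proposition~A]{RestR85} gives that $u^2v$ is a power; if instead some $uv^\ell u$ is an $X_\morph$-factor, \cite[Proposition~B]{RestR85} gives that some $uv^m$ is a power; in the remaining case the two factorizations each contain exactly one $u$, forcing two rotations of $uv^{k-1}$ to coincide. Each branch contradicts Lemma~\ref{le:pure=primitive}. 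So the heavy lifting is imported from \cite{RestR85}, and no inductive descent on witnesses is needed. If you want to make your route work, you would essentially be reproving those propositions; otherwise, citing them is the cleaner path.
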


\begin{example}
Consider the injective morphism $\morph_2=(baa, aba)$. Since $\morph_2(a)$ and $\morph_2(b)$ are conjugates, by Lemma \ref{prop:purenotcircular=rotation} $\morph_2$ is not recognizable on $\morph_2(\{a,b\}^*)$, as shown in Figure \ref{fig:baa_aba}, where two distinct circular factorizations of the word $baabaabaabaa$ are indicated. Indeed, $\morph_2(aaaa)$ and $\morph_2(bbbb)$ are equal up to rotations. Figure \ref{fig:blocchi-ab-ba} shows two distinct circular factorizations of $(ab)^6$ into $\tmmorph(a)$ and $\tmmorph(b)$. Hence, $\tmmorph$ is also not recognizable. One can note that $\morph_2=\Tilde{\fibmorph}\circ \tmmorph$, where $\Tilde{\fibmorph}=(ba,a)$, confirming the characterization established in the following theorem.
\end{example}

The following result shows an important structural property of the primitivity-preserving morphisms that are not recognizable. In particular, we prove that they can always be obtained by composing another morphism with the Thue--Morse morphism.

\begin{theorem}
\label{th:primitivitynoncircular=thuemorse}
    Let $\morph:\{a,b\}^*\rightarrow\Gamma^*$ be a primitivity-preserving binary morphism. Then exactly one of the following cases occurs:
    \begin{enumerate}
        \item $\morph$ is recognizable;
        \item $\morph = \psi \circ \tau$ for some injective morphism $\psi$,  where $\tmmorph$ is the Thue--Morse morphism.
    \end{enumerate}
\end{theorem}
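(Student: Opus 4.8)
The statement says that a primitivity-preserving binary morphism $\morph$ is either recognizable or decomposes as $\psi\circ\tmmorph$, and that these alternatives are mutually exclusive. The natural strategy is to invoke Lemma~\ref{prop:purenotcircular=rotation}, which already says $\morph=(u,v)$ is recognizable \emph{unless} $u$ and $v$ are conjugates. So the only case left to analyze is when $\morph(a)=u$ and $\morph(b)=v$ are conjugates but $\morph$ is (possibly) not recognizable; I must show this forces $\morph=\psi\circ\tmmorph$. Conversely, I must check that when $\morph=\psi\circ\tmmorph$ the morphism is not recognizable, which simultaneously establishes the "exactly one" exclusivity.

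**Forward direction.** Suppose $\morph$ is not recognizable. By Lemma~\ref{prop:purenotcircular=rotation}, $u=\morph(a)$ and $v=\morph(b)$ must be conjugates. Write $u=xy$ and $v=yx$ for some words $x,y$; I need $x,y$ both nonempty (else $u=v$, contradicting injectivity via Proposition~\ref{prop:codesmorphism} and Lemma~\ref{le:codedontcommute}, since $\{u,v\}$ would not be a code). Then define $\psi=(x,y)$. Compute $\psi(\tmmorph(a))=\psi(ab)=\psi(a)\psi(b)=xy=u$ and $\psi(\tmmorph(b))=\psi(ba)=yx=v$, so $\morph=\psi\circ\tmmorph$. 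I also need $\psi$ injective: this follows because $\morph=\psi\circ\tmmorph$ is injective (primitivity-preserving morphisms are injective) and $\tmmorph$ is injective/surjective enough — more carefully, $\{x,y\}$ is a code because $x,y$ do not commute (if $xy=yx$ then $u=v$, again contradicting that $\{u,v\}$ is a code), so by Lemma~\ref{le:codedontcommute} and Proposition~\ref{prop:codesmorphism}, $\psi$ is injective. This handles case~2.

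**Converse / exclusivity.** I must show that if $\morph=\psi\circ\tmmorph$ with $\psi$ injective, then $\morph$ is not recognizable — this rules out both alternatives holding at once. Writing $\psi=(x,y)$ we get $\morph(a)=xy$ and $\morph(b)=yx$, which are conjugates; if $x$ or $y$ is empty then $\morph$ is cyclic, hence not injective, contradicting that $\psi\circ\tmmorph$ is injective (since $\psi$ is injective and $\tmmorph$ is injective, the composition is injective, so $\morph$ is injective and therefore acyclic). So both $x,y$ are nonempty and $\morph(a),\morph(b)$ are genuine conjugates, not equal. Now I exhibit two distinct circular factorizations: consider the word $w=\morph((ab)^2)=\morph(abab)=(xy)(yx)(xy)(yx)$. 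As a circular word this equals $(xy\,yx\,xy\,yx)$, and rotating by $|x|$ positions gives $(y\,yx\,xy\,yx\,x)=(yx)(xy)(yx)(xy)$, which is a circular factorization into the blocks $\morph(b)\morph(a)\morph(b)\morph(a)=\morph((ba)^2)$; the preimages $(ab)^2$ and $(ba)^2$ have the same necklace $\R{(ab)^2}$, yet the two factorizations of $w$ are distinct as factorizations (they are offset). More cleanly: take instead $w=\morph(aabb)=xy\,xy\,yx\,yx$ versus its rotation realizing $\morph(abba)$ — since $aabb$ and $abba$ are \emph{not} conjugate (different necklaces — $aabb$ has necklace $\{aabb,abba,bbaa,baab\}$, hmm these are conjugate). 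I should instead use that $\tmmorph$ itself is not recognizable (two distinct circular factorizations of $(ab)^6$ into $\tmmorph(a)=ab,\tmmorph(b)=ba$, as noted in the excerpt around Figure~\ref{fig:blocchi-ab-ba}): pick $x,y$ such that $\R{\tmmorph^{-1}}$ fails, i.e. there are $x',y'\in\Gamma^*$ with $x'y',y'x'\in\morph(\{a,b\}^*)$ but $x'\notin\morph(\{a,b\}^*)$, obtained by applying $\psi$ to the witnessing decomposition for $\tmmorph$; then by Lemma~\ref{le:rec=circ}, $\morph$ is not recognizable. The main obstacle is getting this converse clean: the cleanest route is to transport the non-recognizability witness of $\tmmorph$ (via Lemma~\ref{le:rec=circ}, $\exists\,u,v$ with $uv,vu\in\tmmorph(\{a,b\}^*)$, $u\notin\tmmorph(\{a,b\}^*)$) through $\psi$ and argue $\psi(u)\notin\morph(\{a,b\}^*)$ using injectivity of $\psi$ and the code structure — verifying that last non-membership is the fiddly point, and I would argue it by length/prefix considerations on the code $\{xy,yx\}$ together with the corresponding fact for $\{ab,ba\}$.
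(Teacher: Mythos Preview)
Your forward direction is correct and matches the paper: if $\morph$ is not recognizable, Lemma~\ref{prop:purenotcircular=rotation} forces $\morph(a)=xy$, $\morph(b)=yx$ with $x,y\neq\varepsilon$, and $\psi=(x,y)$ does the job (injectivity of $\psi$ via $xy\neq yx$ and Lemma~\ref{le:codedontcommute}).

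The exclusivity direction is where you take a detour. You correctly observe that $\morph=\psi\circ\tmmorph$ forces $\morph(a)=\psi(a)\psi(b)$ and $\morph(b)=\psi(b)\psi(a)$ to be conjugates, but then you try to manufacture an explicit non-recognizability witness from scratch and get tangled (the $(ab)^2$ attempt, the $aabb$ attempt, then the ``transport through $\psi$'' idea which you leave unfinished). Two remarks:
\begin{itemize}
\item The paper's route is simply to use Lemma~\ref{prop:purenotcircular=rotation} in both directions. The word ``unless'' in its statement is in fact an ``if and only if'': the proof of that lemma (in the appendix) explicitly begins by noting that the statement is equivalent to ``$\morph$ is not recognizable iff $\morph=(pq,qp)$''. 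Once you have that biconditional, the whole theorem is immediate: $\morph$ recognizable $\iff$ $\morph(a),\morph(b)$ not conjugate $\iff$ $\morph\neq\psi\circ\tmmorph$.
\item If you insist on a direct witness, it is much simpler than what you sketch. With $x=\psi(a)$, $y=\psi(b)$ you have $xy,yx\in\morph(\{a,b\}^*)$, and $x\notin\morph(\{a,b\}^*)$ simply because $0<|x|<|xy|=|yx|$, so $x$ cannot be a concatenation of blocks $xy,yx$. Lemma~\ref{le:rec=circ} then gives non-recognizability in one line. There is no need to pull the Thue--Morse witness through $\psi$ or to worry about prefix considerations.
\end{itemize}
So the proposal is essentially correct but leaves the mutual-exclusivity half incomplete; the missing step is one line once you either read Lemma~\ref{prop:purenotcircular=rotation} as the biconditional it is, or apply Lemma~\ref{le:rec=circ} with the obvious length argument.
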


\begin{proof}
We show that, under the hypothesis of the theorem, it holds that $\morph=(uv,vu)$ if and only if $\morph = \psi\circ \tmmorph$ for some injective morphism $\psi$, and by Lemma~\ref{prop:purenotcircular=rotation}, the thesis directly follows. 
For the first direction, one can define $\psi=(u,v)$, and therefore $\morph=(\psi(\tmmorph(a)),\psi(\tmmorph(b))) = (\psi(ab),\psi(ba)) = (uv, vu)$.
For the other direction, if $\morph=\psi\circ\tau$, then $\morph = (\psi(\tau(a)), \psi(\tau(b))) = (\psi(ab),\psi(ba)) = (\psi(a)\psi(b), \psi(b)\psi(a))$,  and the thesis follows. 
\end{proof}

\subsection{Synchronizing morphisms}

An important notion in the context of injective morphisms is that of synchronization pair, which intuitively marks a position within a factor of a morphic image where the boundary between two codewords can be uniquely identified. Synchronization provides a way to ``align'' a segment of the morphic image of a circular word with the images of the letters of the alphabet. 

    Let $\morph:\Sigma^*\rightarrow\Gamma^*$, $\La\subseteq \Sigma^*$, and $u\in \cfact(\morph(\La))$.
    We say that $(u_1,u_2)$ is a \emph{synchronization pair} of $u$ on $\morph(\La)$ if $u=u_1u_2$ and, for all $v_1,v_2\in \Gamma^*$ and $f\in \cfact(\La)$, $v_1uv_2=\morph(f)$ implies $v_1u_1=\morph(f_1)$ and $u_2v_2 = \morph(f_2)$, for some $f_1,f_2\in \cfact(\La)$ such that $f_1f_2=f$.

Observe that a morphism $\morph$ is recognizable on $\morph(\La)\subseteq\morph(\Sigma^*)$ if and only if every word $w\in\morph(\La)$ admits at least one synchronization pair, since from it one can uniquely recover the preimage $w'=\morph^{-1}(w)$, up to rotations.

The following notion of synchronization with delay gives a quantitative measure of the width of a window sliding along the morphic image of a circular word that guarantees the detection of a synchronization point.  

    We say that a morphism $\morph:\Sigma^*\rightarrow\Gamma^*$ is \emph{synchronizing with delay $k>0$} for $w\in\morph(\Sigma^*)$ if every circular factor of $w$ of length at least $k$ admits a synchronization pair. Given $\La\subseteq\Sigma^*$, we say that $\morph$ is \emph{synchronizing with delay $K>0$} for $\morph(\La)$ if it is synchronizing with a finite delay for every $w\in \La$ and
    $$\sup\left\{ \min_{x\in \La}\{k \mid \morph \text{ is synchronizing with delay $k$ for $\morph(x)$} \}\right\}\leq K.$$

It has been proved \cite[Theorem 5.1]{Restivo74} that a morphism is recognizable if and only if it is synchronizing with finite delay for $\morph(\Sigma^*)$. The following example shows that the recognizability of a morphism on a proper subset of $\morph(\Sigma^*)$ does not necessarily imply being synchronizing with finite delay for that subset. 

\begin{example}
    \label{ex:thuemorsenonrecognizable}
    Let $\tmmorph$ be the Thue--Morse morphism.
 Figure \ref{fig:blocchi-ab-ba} shows that the Thue--Morse morphism $\tmmorph$ is not recognizable. In fact, $\tmmorph(aaaaaa)$ and $\tmmorph(bbbbbb)$ are equal up to rotations and produce distinct circular factorizations. 
However, $\tmmorph$ is recognizable on $\tmmorph(\La)$, where $\La=\{a^n b,b^na\mid n>0\}$, as shown in Figure \ref{fig:TM_recognizable}. Observe that even though $\tmmorph$ is not recognizable, $\tmmorph$ is recognizable on $\tmmorph(\La)$, since $(\tmmorph(a),\tmmorph(b))$ and $(\tmmorph(b),\tmmorph(a))$ are synchronization pairs that occur in every word of $\tmmorph(\La)$. In the figure, the unique circular factorization of $\tmmorph(a^5b)=(ab)^5ba$ is depicted; the two black squares identify the synchronization pairs $(b, b)$ and $(a, a)$. Moreover, $\tmmorph$ is synchronizing with delay $11$ on $(ab)^5ba$; more in general it is synchronizing with delay $2n+1$ for $(ab)^nba$ and for $(ba)^nab$. However, $\tmmorph$ is not synchronizing with finite delay for $\tmmorph(\La)$, since the supremum of all minimum finite delays for all the words in $\La$ is unbounded. 

Let $\La'=\tmmorph(\{a,b\}^*) = \{ab,ba\}^*$. Unlike the previous cases, the synchronization pairs $(a,a)$ and $(b,b)$ occur in every word of $\cfact(\tmmorph(\La'))$ of length at least $5$, hence $\tmmorph$ is synchronizing with delay $5$ for $\tmmorph(\La')$. In fact, as shown in Figure \ref{fig:TM_finite_delay}, every factor of $\tmmorph(abbaab)=abbabaababba$ having length at least $5$ contains a black square that identifies a synchronization pair.
\end{example}

\begin{figure}[h!]
  \centering
    
  \begin{subfigure}[t]{0.3\textwidth}
    \centering
    \includegraphics[width=\linewidth]{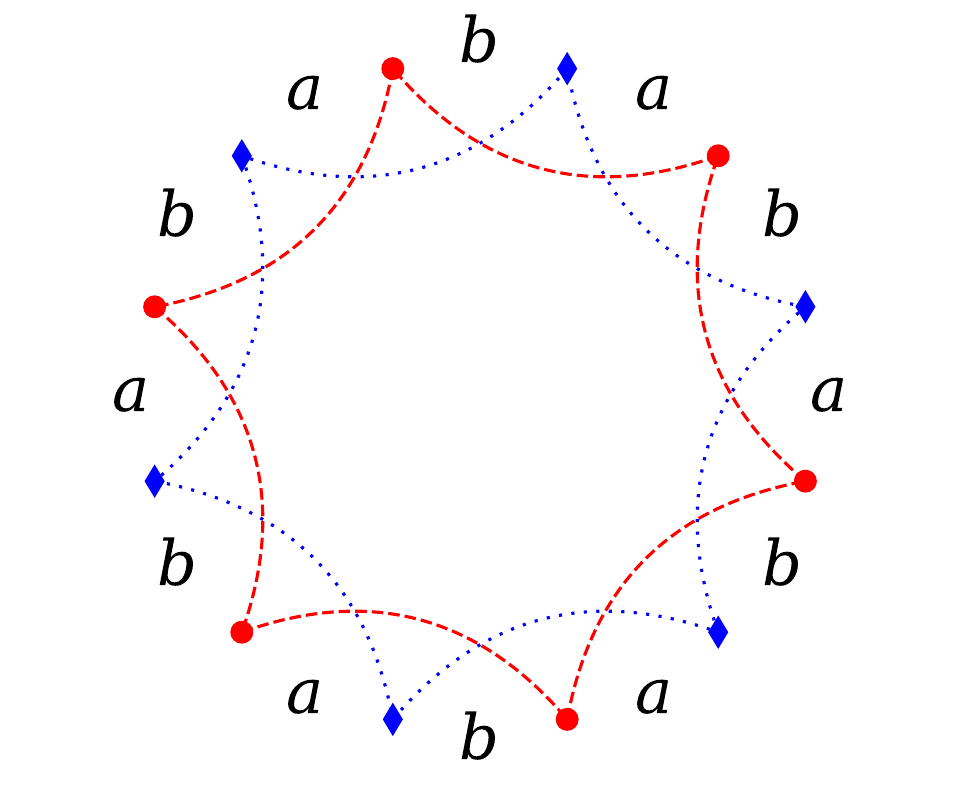}
    \caption{\centering $w=abababababab$}
    \label{fig:blocchi-ab-ba}
  \end{subfigure}
  \hfill
  \begin{subfigure}[t]{0.3\textwidth}
    \centering
    \includegraphics[width=\linewidth]{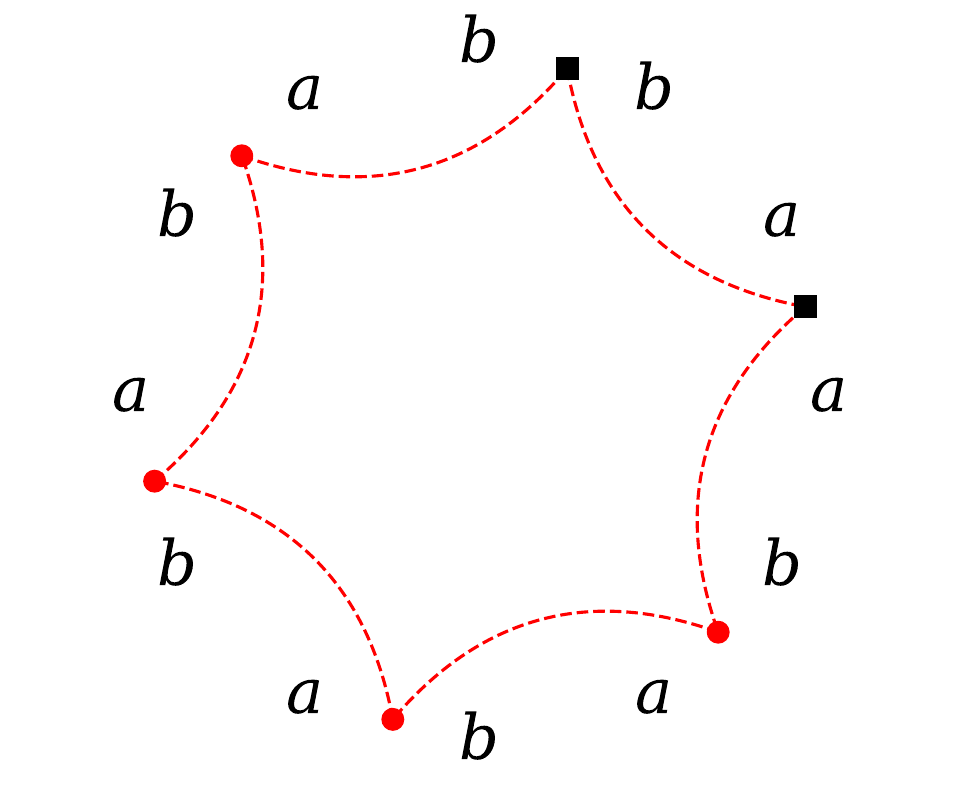}
    \caption{\centering $w=abababababba$}
    \label{fig:TM_recognizable}
  \end{subfigure}
  \hfill
  \begin{subfigure}[t]{0.3\textwidth}
    \centering
    \includegraphics[width=\linewidth]{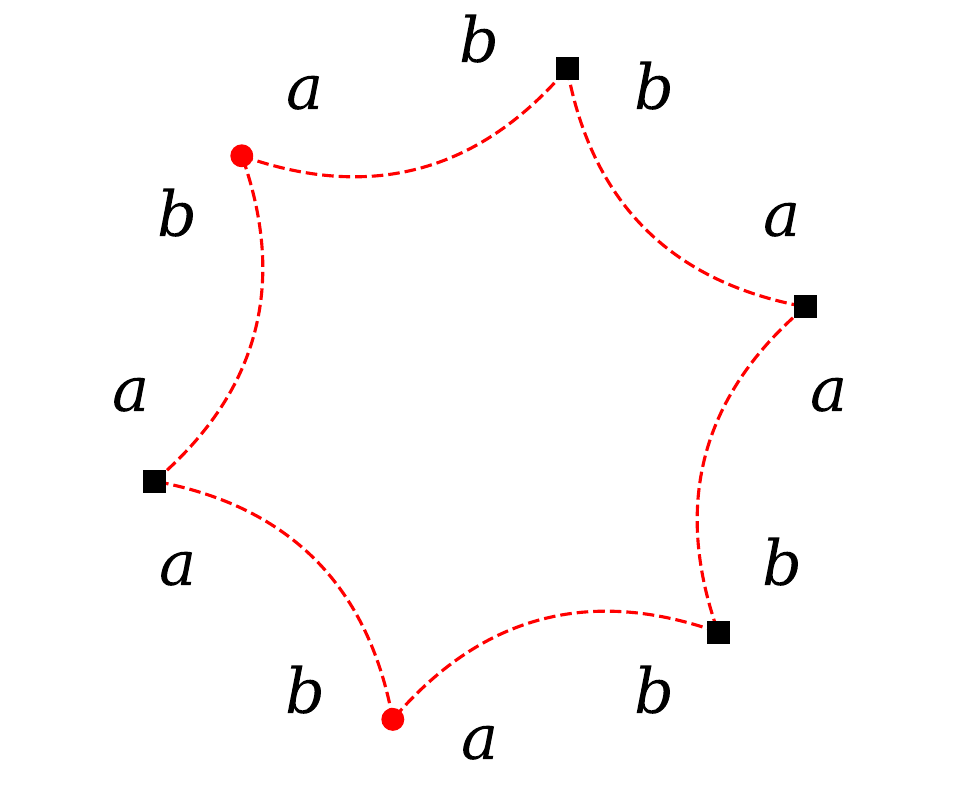}
    \caption{\centering$w=abbabaababba$}
    \label{fig:TM_finite_delay}
  \end{subfigure}

  \caption{Circular factorizations into $\tmmorph(a)=ab$ and $\tmmorph(b)=ba$ are depicted, where $\tmmorph$ is the Thue--Morse morphism. On the left, two distinct circular factorizations of $(ab)^6$ in blue and red, respectively; in the center, the unique circular factorizations of $w=abababababba$; on the right, the unique circular factorizations of $w=abbabaababba$. Each black square identifies a synchronization pair.}
  \label{fig:sync_morphisms}
\end{figure}

We now give a new combinatorial characterization of synchronizing morphisms with finite delay on $\morph(\La)$, for any $\La\subseteq \{a,b\}^*$.
This characterization is based on the powers of single letters occurring in $\La$ in the case of non-recognizable primitivity-preserving morphisms, while it is based on the $\morph$-power words in the case of non-primitivity-preserving morphisms.



\begin{lemma}
\label{le:whenNonRecAreSynch}
    Let $\morph:\{a,b\}^*\rightarrow\Gamma^*$ be a non-recognizable primitivity-preserving morphism and let $\cfact_a=\cfact(\La)\cap\{a\}^*$ and $\cfact_b=\cfact(\La)\cap\{b\}^*$, where $\La\subseteq\{a,b\}^*$. 
    Then $\morph$ is synchronizing with finite delay on $\morph(\La)$ if and only if at least one of the sets $\cfact_a$ or $\cfact_b$ is finite.
\end{lemma}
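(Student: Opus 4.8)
The statement concerns a non-recognizable primitivity-preserving binary morphism $\morph$. By Lemma~\ref{prop:purenotcircular=rotation}, non-recognizability forces $\morph(a)$ and $\morph(b)$ to be conjugates, so by Theorem~\ref{th:primitivitynoncircular=thuemorse} we may write $\morph = \psi\circ\tmmorph$ for some injective morphism $\psi$; equivalently $\morph=(uv,vu)$ for suitable non-empty $u,v$, with $uv$ (and hence $vu$) primitive because $\morph$ is primitivity-preserving. The plan is to analyze synchronization pairs of circular factors of $\morph(\La)$ directly in terms of this $uv$/$vu$ structure, and to identify exactly which circular factors \emph{fail} to admit a synchronization pair — these will turn out to be the circular factors of $\morph(a^n)=(uv)^n$ and $\morph(b^m)=(vu)^m$ that are long enough to be ambiguous, mirroring the Thue--Morse picture in Figure~\ref{fig:blocchi-ab-ba}.

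First I would establish the ``only if'' direction by contraposition: assume both $\cfact_a$ and $\cfact_b$ are infinite, i.e. $a^n\in\cfact(\La)$ and $b^m\in\cfact(\La)$ for arbitrarily large $n,m$. The key combinatorial observation is that a circular factor $w$ of $(uv)^n$ that is also a circular factor of $(vu)^m$ — concretely, any sufficiently long circular factor of $(uvuv\cdots)$ — admits \emph{no} synchronization pair, because such a $w$ can be completed to $\morph(a^k)$ on one side and to $\morph(b^\ell)$ on the other with incompatible cut points, exactly as in Example~\ref{ex:thuemorsenonrecognizable}. Since the relevant words $(uv)^n$ have unbounded length, for every candidate delay $k$ one finds a word in $\La$ (namely a large power of $a$ or of $b$) whose image has a circular factor of length $\geq k$ with no synchronization pair; hence $\morph$ is not synchronizing with finite delay on $\morph(\La)$. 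I would make the ``incompatible cut points'' argument rigorous by appealing to primitivity of $uv$: the two circular factorizations of $(uv)^n$ (one into copies of $uv$, one into copies of $vu$) are genuinely distinct because $uv\neq vu$ (otherwise $\morph$ would be cyclic, contradicting injectivity via Lemma~\ref{le:codedontcommute}), and a synchronization pair inside the overlap would force the two factorizations to agree there.

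For the ``if'' direction, assume WLOG $\cfact_a$ is finite, say $a^n\notin\cfact(\La)$ for all $n>N$. I would show $\morph$ is synchronizing with finite delay $K=K(|u|,|v|,N)$ on $\morph(\La)$. The idea: take any $x\in\La$ and any circular factor $w$ of $\morph(x)$ of length at least $K$. If $x$ contains the letter $b$ at bounded ``spacing'' — which is guaranteed once $a$-powers in $x$ are bounded by $N$ — then any long enough circular factor of $\morph(x)=\psi(\tmmorph(x))$ must straddle an image block boundary coming from an occurrence of $b$ in $x$, i.e. it must contain a factor of the form witnessing the transition $\cdots\morph(a)\morph(b)\cdots$ or $\cdots\morph(b)\morph(a)\cdots$ or $\cdots\morph(b)\morph(b)\cdots$; in the $uv/vu$ alphabet, such a transition point is a genuine synchronization pair because the only ambiguity was the $(uv)^\infty$ vs $(vu)^\infty$ periodic pattern, and the presence of a bounded-spacing $b$ breaks that periodicity. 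Quantitatively, a window of length $K = (N+2)(|u|+|v|)$ suffices to contain such a boundary. I would verify the synchronization-pair property of the boundary by checking the definition against all completions $v_1 w v_2=\morph(f)$: the local pattern around the boundary pins down where $f$ transitions between $a$ and $b$.

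The main obstacle I anticipate is the ``incompatible cut points'' lemma in the ``only if'' direction — namely, proving cleanly that a long circular factor shared by $(uv)^\infty$ and $(vu)^\infty$ admits no synchronization pair. One has to rule out that \emph{some} internal cut happens to be valid for \emph{all} completions, which requires carefully exhibiting two completions (an $a$-power completion and a $b$-power completion) that disagree at that cut; this is where the primitivity of $uv$ and the relation $uv\neq vu$ do the real work, via the classical characterization (Lemma~\ref{le:codedontcommute} and the Lyndon--Schützenberger results in Appendix~\ref{ap:Lyndon}) that non-commuting words generate a free subsemigroup with no nontrivial periodic overlaps. A secondary technical point is getting the delay bound in the ``if'' direction uniform over all $x\in\La$, which is why the hypothesis is phrased via finiteness of $\cfact_a$ rather than of the $a$-powers appearing in individual words.
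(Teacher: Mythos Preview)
Your proposal is correct and follows essentially the same approach as the paper's proof: both exploit the structure $\morph=(pq,qp)$ coming from Lemma~\ref{prop:purenotcircular=rotation}, argue in the ``only if'' direction that factors lying in $\{pq\}^*\cup\{qp\}^*$ admit no synchronization pair when both $\cfact_a$ and $\cfact_b$ are infinite (because the $a$-power and $b$-power completions force incompatible cut positions, offset by $|p|$), and in the ``if'' direction show that a window of length $(N+2)\,|pq|$ must contain a synchronization pair once $a$-powers in $\cfact(\La)$ are bounded by $N$. The paper's proof is terser---it simply asserts that $(pq,qp)$ and $(qp,pq)$ are synchronization pairs and that $((pq)^N p,\,q)$ becomes one once $a^{N+1}\notin\cfact(\La)$---and does not spell out the ``incompatible cut points'' verification you flag as the main obstacle, but the underlying argument is identical to yours.
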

\begin{proof}
    By Lemma~\ref{prop:purenotcircular=rotation}, there exist $p,q\in\Gamma^*$, with $p\neq q$, such that $\morph=(pq,qp)$. Since $\morph$ is injective, $pq\neq qp$; hence
    all words in $\R{pq}$ are primitive, and $(pq,qp)$ and $(qp,pq)$ are synchronization pairs.
    For the first direction, observe that if both sets $\cfact_a$ and $\cfact_b$ are infinite, then all the factors in $\cfact(\morph(\La))\cap(\{pq\}^*\cup\{qp\}^*)$ have no synchronization pairs.
    For the other direction, let us assume that $\cfact_a$ is finite (the case $\cfact_b$ finite can be treated analogously).
    Let $t=\max\{i\geq0\mid a^i\in\cfact_a\}$.
    Then, if $(pq)^{t+1}\in\cfact(\morph(\La))$, the pair $((pq)^tp,q)$ is synchronizing of $(pq)^{t+1}$ on $\morph(\La)$, since $(pq)^{t+1} = p\morph(b^t)q = \morph(a^{t+1})$ but $a^{t+1}\notin\cfact(\La)$. 
    Finally, since in every factor in $\cfact(\morph(\La))$ of length $k = |\morph(a^{t+2})|$ there is an occurrence of one of the factors with a synchronization pair listed above, the thesis follows.
\end{proof}


By using analogous techniques, one can prove that for any non-primitivity-preserving morphism, there exists a $k>0$ such that each $k$-length factor $w$ in $\cfact(\morph(\Sigma^*))$ has a synchronization pair, unless $w\in\cfact(\morph(z^*))$ for some $z\in P^\morph$.
The structure of the non-primitivity-preserving morphisms, detailed in Lemma~\ref{le:Holub_classification} (see Appendix~\ref{ap:Lyndon}), is used. 

\begin{lemma}
\label{le:non_prim_sync_pair}
    Let  $\morph:\{a,b\}^*\rightarrow\Gamma^*$ be a non-primitivity-preserving morphism. Then, there exists an integer $k>0$ such that every factor $w\in\Gamma^k\cap(\cfact(\morph(\Sigma^*)) \setminus \cfact(\{\morph(z^*)\mid z\in P^\morph\}))$ has a synchronization pair.
\end{lemma}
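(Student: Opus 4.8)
The plan is to exploit the explicit classification of non-primitivity-preserving binary morphisms provided by Lemma~\ref{le:Holub_classification} in the appendix, which — together with Lemma~\ref{le:P_mu_structure} — tells us that the $\morph$-power words form either a single conjugacy class $\R{w}$ (possibly plus one extra letter) or a subset of $\{a,b\}$. Fix such a morphism $\morph=(u,v)$. First I would split the argument according to whether $W\cap \Pow{\{a,b\}^*}=\emptyset$ (Case~\ref{Pmu_1} of Lemma~\ref{le:P_mu_structure}) or not (Case~\ref{Pmu_2}). In Case~\ref{Pmu_1}, the set $P^\morph$ is contained in $\{a,b\}$, so $\cfact(\{\morph(z^*)\mid z\in P^\morph\})$ is just $\cfact(u^*)$, $\cfact(v^*)$, or their union; outside these periodic factors, every sufficiently long circular factor of $\morph(\Sigma^*)$ spans a codeword boundary that cannot be reinterpreted, and one argues — much as in the proof of Lemma~\ref{le:whenNonRecAreSynch} — that once we see enough letters we necessarily detect a synchronization pair. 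In Case~\ref{Pmu_2} there is a unique (up to rotation) $\morph$-power word $w$ with $\morph(w)=z^k$, and $P^\morph=\R{w}$ or $\R{w}\cup\{c\}$; here the "bad" periodic part is $\cfact(z^*)$ (since $\morph(z^*)$ and $\morph(w^*)$ generate the same circular factors) together with possibly $\cfact(c^*)$.

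The core of the argument is the choice of the delay $k$ and the verification that a window of that width always straddles a boundary forced to be a synchronization point. Concretely, I would take $k$ to be something like $2\max(|u|,|v|)+|z|$ (with $z=u$ or $z=v$ in Case~\ref{Pmu_1}, and $z$ the primitive root of $\morph(w)$ in Case~\ref{Pmu_2}), large enough that any circular factor of length $k$ of an image $\morph(x)$ either (i) lies entirely inside a maximal run of the form $z^*$ — equivalently, inside $\cfact(\morph(z^*))$ for $z\in P^\morph$, which is exactly the excluded case — or (ii) properly contains at least one full codeword $u$ or $v$ bordered on at least one side by the other codeword. In case (ii), I would use the primitivity-preserving-style rigidity: by Lemma~\ref{le:pure=primitive}/Lemma~\ref{u*v*}, the only way the image factorization into $\{u,v\}$ can fail to be unique is precisely when we are confined to powers of a single $\morph$-power word, so outside that situation the boundary between the observed $u$ and $v$ is canonical and gives a synchronization pair in the sense of the definition. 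Formally, one shows that if $v_1 u v_2=\morph(f)$ for $f\in\cfact(\Sigma^*)$ then the factorization of $\morph(f)$ over $\{u,v\}$ is forced to respect the observed boundary, so a pair $(u_1,u_2)$ splitting at that boundary is a synchronization pair.

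The main obstacle I anticipate is bookkeeping in Case~\ref{Pmu_1b} and Case~\ref{Pmu_1c}, where one or both of $u,v$ are themselves non-primitive (e.g. $u=p^s$ with $p$ primitive). Then $\cfact(\morph(z^*))$ for $z\in P^\morph$ collapses into $\cfact(p^*)$, and a long run of $p$'s is genuinely ambiguous as to how many copies of $u$ it encodes — but this ambiguity is *internal to a single letter's image* and is precisely what the excluded set $\cfact(\{\morph(z^*)\mid z\in P^\morph\})$ removes; the delicate point is checking that as soon as the window reaches outside such a run (i.e. meets the other codeword), the count of $p$'s — hence the codeword boundary — becomes determined, so one must track the *boundaries* of maximal $p$-runs rather than the runs themselves. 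This is where the explicit forms from Lemma~\ref{le:Holub_classification} are used to pin down exactly which runs can occur and how long the "confusing" prefixes/suffixes can be, yielding the uniform bound $k$. Once $k$ is fixed, the remaining verification that every length-$k$ circular factor either falls in the excluded set or contains a synchronizing boundary is a finite case analysis over the families listed in Lemma~\ref{le:Holub_classification}, and I would present it compactly rather than exhaustively.
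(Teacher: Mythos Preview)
Your proposal is correct and follows essentially the same approach the paper indicates. The paper does not actually supply a full proof of this lemma: it only states that it is obtained ``by using analogous techniques'' to those of Lemma~\ref{le:whenNonRecAreSynch}, together with the structural classification of non-primitivity-preserving morphisms from Lemma~\ref{le:Holub_classification}; your case split via Lemma~\ref{le:P_mu_structure}, your identification of the excluded periodic regions as the $\cfact(z^*)$ for $z\in P^\morph$, and your finite case analysis over the families listed in Lemma~\ref{le:Holub_classification} are precisely what the paper has in mind.
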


From the previous lemma, the following result can be derived:

\begin{theorem}
\label{le:whenNonPrimPresAreSynch}
    Let  $\morph:\{a,b\}^*\rightarrow\Gamma^*$ be a non-primitivity-preserving morphism and let $\La\subseteq\{a,b\}^*$.
    Then $\morph$  is synchronizing with finite delay on $\morph(\La)$ if and only if the set $\cfact(\La)\cap \{w^*\mid w\in P^\morph\}$ is finite.
\end{theorem}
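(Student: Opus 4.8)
The plan is to derive Theorem~\ref{le:whenNonPrimPresAreSynch} directly from Lemma~\ref{le:non_prim_sync_pair}, which already does the combinatorial heavy lifting: it produces a uniform window length $k$ such that every circular factor of $\morph(\Sigma^*)$ of length $k$ that is \emph{not} buried inside some $\cfact(\morph(z^*))$ with $z\in P^\morph$ carries a synchronization pair. So the two directions of the equivalence should both reduce to a careful bookkeeping of which ``bad'' (synchronization-free) circular factors actually occur in $\cfact(\morph(\La))$.

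For the backward direction, assume $S:=\cfact(\La)\cap\{w^*\mid w\in P^\morph\}$ is finite. First I would observe that any circular factor of $\morph(\La)$ either contains, as a circular factor, some word of $\Gamma^k$ meeting the hypothesis of Lemma~\ref{le:non_prim_sync_pair} --- in which case it has a synchronization pair with bounded delay $k$ --- or else all its length-$k$ circular factors lie in $\cfact(\{\morph(z^*)\mid z\in P^\morph\})$. Using the structure of $P^\morph$ from Lemma~\ref{le:P_mu_structure} (it is a union of rotation classes, possibly together with a single letter), and the fact that $\morph$ applied to a long power $w^m$ of a fixed primitive $w$ produces a word that is essentially periodic with period $|\morph(w)|$, I would argue that if a word $x\in\La$ has $\morph(x)$ with \emph{no} synchronization pair for long circular factors, then $x$ itself must be (a rotation of) a long power of some $w\in P^\morph$; but such $x$ are excluded once $S$ is finite, except for the finitely many bounded-length ones. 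Hence the supremum of minimal delays over $\La$ is finite, bounded by $k$ together with the (finite) maximum over the finitely many exceptional words in $S$ of the delay needed for their images — giving synchronization with finite delay on $\morph(\La)$.

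For the forward direction, I would prove the contrapositive: if $S$ is infinite, then $\morph$ is not synchronizing with finite delay on $\morph(\La)$. Since $P^\morph$ is a finite union of rotation classes plus at most one letter, an infinite $S$ forces some fixed $w\in P^\morph$ (with $|w|\ge 2$, since powers of a single letter give only boundedly long distinct circular factors unless the letter itself sits inside arbitrarily long powers — which is the $\cfact_a$/$\cfact_b$ infinite phenomenon, and here folds into $w=a$ or $w=b$ being in $P^\morph$) to satisfy $w^m\in\cfact(\La)$ for arbitrarily large $m$. Then $\morph(w^m)=\morph(w)^m$ is a power of the primitive word $z$ with $\morph(w)=z^t$, $t>1$, so $\morph(w^m)=z^{tm}$; its circular factors of length up to roughly $|z|\cdot tm$ are circular factors of $z^\infty$ and, because $z$ is primitive and $\morph(w)$ is a genuine $\morph$-power, these purely periodic stretches admit no synchronization pair (a synchronization pair there would force a non-trivial decomposition of a power into morphic images, contradicting that $w$ is a witness of non-primitivity-preservation — this is exactly where Lemma~\ref{u*v*} and the uniqueness of $z^k$ are used). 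Letting $m\to\infty$ makes the minimal delay for $\morph(w^m)$ unbounded, so the supremum over $\La$ is infinite.

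The main obstacle I anticipate is the backward direction's claim that the \emph{only} way a long circular factor of $\morph(\La)$ can fail to have a synchronization pair is to lie inside $\cfact(\morph(z^*))$ for some $z\in P^\morph$ \emph{and} to force the preimage to be a long power — i.e., ruling out the possibility that short powers $z^j$ of $z\in P^\morph$ with $j$ bounded but $z$ itself long could still generate synchronization-free factors of unbounded length by being chained with other blocks. Here one must use that a synchronization-free circular factor of length $\gg k$ is, by Lemma~\ref{le:non_prim_sync_pair}, covered by length-$k$ windows each lying in some $\cfact(\morph(z^*))$, and then argue (via primitivity of $z$ and the at-most-one $\morph$-power structure from Lemma~\ref{u*v*}) that a single $z$ must work for the whole factor, so the factor genuinely comes from a long power $z^m$; quantifying this overlap argument cleanly is the delicate point. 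The single-letter cases $a\in P^\morph$ or $b\in P^\morph$ must also be handled, but these reduce to the $\cfact_a$/$\cfact_b$ finiteness condition already treated in Lemma~\ref{le:whenNonRecAreSynch} and pose no new difficulty.
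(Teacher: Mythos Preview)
Your proposal is correct and follows the same route as the paper: the paper itself gives no proof beyond the sentence ``From the previous lemma, the following result can be derived,'' so both you and the authors reduce Theorem~\ref{le:whenNonPrimPresAreSynch} to Lemma~\ref{le:non_prim_sync_pair}. You go considerably further than the paper by sketching both directions and correctly isolating the one genuine technical step (the overlap argument showing that a long synchronization-free circular factor must lie entirely inside $\cfact(\morph(z^*))$ for a \emph{single} $z\in P^\morph$, and hence forces a long power of that $z$ in $\cfact(\La)$).
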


\section{Sensitivity of the measure \texorpdfstring{$r$}{r} to the application of morphisms}\label{sec:sensitivity}

Let $\mu$ be a morphism and $w$ a word.
In~\cite{Fici23} we defined:  $$\Delta^+_\morph(w)=\rbwt(\morph(w)) -\rbwt(w)$$ and  $$\Delta^{\times}_\morph(w)=\frac{\rbwt(\morph(w))}{\rbwt(w)}.$$

Notice that $\Delta^+_\morph(w)$ may be negative for some word $w$. For example, let $\morph$ be the morphism over a 3-letter alphabet $\{a,b,c\}$ defined as $\morph= (b,a,c)$ and let $w=bcba$. One has that $r(w)=|\rle(\bwt(bcba))|=|\rle(bcab)|=4$ and $r(\morph(w))=r(acab)=|\rle(\bwt(acab))|=|\rle(cbaa)|=3$. However, when $\morph$ is defined over a binary alphabet, one can prove that $\Delta^+_\morph(w)$ is always non-negative \cite[Theorem 14]{Fici23}.



\begin{definition}The {\em BWT additive sensitivity function} and {\em BWT multiplicative sensitivity function} for a morphism $\mu$ are,  respectively, the functions $$AS_\mu(n) = \max_{w \in \Sigma^n}(\Delta_{\mu}^+(w))\text{ \ and \ } MS_\mu(n) = \max_{w \in \Sigma^n}(\Delta_{\mu}^\times(w))$$
\end{definition}

Note that the additive sensitivity function is always a non-negative function, as for every $n$, $\Delta_{\mu}^+(a^n)\geq 0$ for any letter $a$.

\begin{example}
Let us consider the period-doubling morphism $\pdmorph$. Let us compute the value of the BWT additive sensitivity function for $\pdmorph$ when $n=5$. From Table~\ref{table:pd5}, it is possible to conclude that $AS_\pdmorph(5)=MS_\pdmorph(5)=2$.
\end{example}
\begin{table} 
\centering
\begin{tabular}{|c|c|c|c|c|c|}
\hline
$w$ & $\bwt(w)$ & $r(w)$ & $\pdmorph(w)$ & $\bwt(\pdmorph(w))$ & $r(\pdmorph(w))$\\
\hline
$aaaab$ & $baaaa$ & $2$ & $ababababaa$ & $babbbaaaaa$ & $4$\\
$aaabb$ & $baaba$ & $4$ & $abababaaaa$ & $baaabbaaaa$ & $4$\\
$aabab$ & $bbaaa$ & $2$ & $ababaaabaa$ & $bbaabaaaaa$ & $4$\\
$aabbb$ & $babba$ & $4$ & $ababaaaaaa$ & $baaaaabaaa$ & $4$\\
$ababb$ & $bbbaa$ & $2$ & $abaaabaaaa$ & $babaaaaaaa$ & $4$ \\
$abbbb$ & $bbbba$ & $2$ & $abaaaaaaaa$ & $baaaaaaaaa$ & $2$\\
\hline
\end{tabular}
\caption{The first column contains the list of all words of length $5$, up to rotations. This is not restrictive, since rotations of the same word have the same value of $r$. The columns $r(w)$ and $r(\pi(w))$ are used to compute $AS_{\pi}(5)$.}\label{table:pd5}
\end{table}

The following lemma shows that cyclic morphisms produce words with a fixed number of BWT-runs, whatever the words on which they are applied. 

\begin{lemma}\label{le:cyclic_sensitivity}
Let $\morph:\{a,b\}^*\rightarrow\Gamma^*$ be a cyclic morphism. Then, there exist two constants $k^+_{\morph}, k^\times_\morph$, which depend on $\morph$, such that $\bwas(n) = k^+_\morph$ and $\bwms(n) = k^\times_\morph$, for all $n\geq2$.
\end{lemma}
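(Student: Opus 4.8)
The plan is to unpack what it means for a binary morphism $\morph:\{a,b\}^*\to\Gamma^*$ to be cyclic: there is a word $z\in\Gamma^+$ such that $\morph(a)=z^{i}$ and $\morph(b)=z^{j}$ for some integers $i,j\geq 0$ (with $(i,j)\neq(0,0)$, since a morphism of a binary alphabet cannot send both letters to $\varepsilon$ without being degenerate — and in any case the stated range $n\geq 2$ rules out empty images). Moreover, by replacing $z$ with its primitive root, I may assume $z$ is primitive. Then for any word $w\in\{a,b\}^n$ with Parikh vector $(|w|_a,|w|_b)$ we have $\morph(w) = z^{\,m}$ where $m = i\,|w|_a + j\,|w|_b$. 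Crucially, $\morph(w)$ depends on $w$ only through $m$, and by the standard fact recalled in the Preliminaries (if $\bwt(z)=a_1\cdots a_{|z|}$ then $\bwt(z^p)=a_1^p\cdots a_{|z|}^p$, hence $r(z^p)=r(z)$ for all $p\geq 1$), we get $r(\morph(w)) = r(z^m) = r(z)$ whenever $m\geq 1$.

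First I would handle the easy sub-case where one of $i,j$ is $0$, say $j=0$ (so $\morph(b)=\varepsilon$); then $m = i|w|_a$, which is $0$ exactly when $w=b^n$, and positive otherwise. I would argue this is not actually possible for the problem at hand if we insist the morphism be non-erasing, but if erasing morphisms are allowed then for $w=b^n$ we have $\morph(w)=\varepsilon$, a degenerate case; still, since the lemma only quantifies over $n\geq 2$ and we may set $k^+_\morph$ and $k^\times_\morph$ appropriately, the statement survives — but cleanly, I would simply note that cyclic binary morphisms of interest are non-erasing (as they arise from acyclic/injective considerations elsewhere) or invoke whatever non-erasing convention the paper uses, so $i,j\geq 1$ and $m\geq n\geq 2 > 0$ always. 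Then $r(\morph(w)) = r(z)$ for every $w\in\{a,b\}^n$, a constant independent of $w$.

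Next, for the left-hand side $r(w)$ itself, I would observe that over a binary alphabet the number of BWT-runs of a word of length $n\geq 2$ takes only finitely many values, but more to the point $\Delta^+_\morph$ and $\Delta^\times_\morph$ are maximized over $\Sigma^n$; so $AS_\morph(n) = \max_{w\in\{a,b\}^n}\bigl(r(z) - r(w)\bigr) = r(z) - \min_{w\in\{a,b\}^n} r(w)$ and $MS_\morph(n) = \max_{w\in\{a,b\}^n} r(z)/r(w) = r(z)/\min_{w\in\{a,b\}^n} r(w)$. For every $n\geq 2$, the minimum of $r(w)$ over binary words of length $n$ is achieved by $w=a^n$ (or $b^n$), which has $\bwt(a^n)=a^n$ and hence $r(a^n)=1$; so $\min_{w\in\{a,b\}^n} r(w) = 1$ for all $n\geq 2$. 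Therefore $AS_\morph(n) = r(z) - 1$ and $MS_\morph(n) = r(z)$ for all $n\geq 2$, and we may set $k^+_\morph = r(z) - 1$ and $k^\times_\morph = r(z)$, both depending only on $\morph$ (via its cyclic root $z$), as required.

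The only genuinely delicate point is the bookkeeping around erasing images and around whether the minimum of $r(w)$ over length-$n$ binary words is truly $1$ for all $n\geq 2$ — the latter is immediate from $r(a^n)=1$, so the real obstacle, such as it is, is just being careful that $\morph$ restricted to length-$n$ inputs never produces the empty word (handled by the non-erasing convention or by the $n\geq 2$ restriction together with $i,j\geq 1$) and that the primitive-root reduction is legitimate (it is, since $z^m$ and $(z')^{m'}$ with $z'$ the primitive root represent the same word and the $r$-value is root-invariant). With those observations in place the proof is essentially a one-line computation once the formula $\morph(w)=z^{i|w|_a+j|w|_b}$ is written down.
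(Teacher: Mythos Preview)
Your proof is correct and follows the same approach as the paper: both observe that $r(\morph(w))=r(z)$ for every $w$ (since $\morph(w)$ is always a power of the cyclic root $z$) and then maximize $\Delta^+_\morph$ and $\Delta^\times_\morph$ over $\{a,b\}^n$ by minimizing $r(w)$. The one difference is that you take the minimizer $w=a^n$ with $r(a^n)=1$, obtaining $k^+_\morph=r(z)-1$ and $k^\times_\morph=r(z)$, whereas the paper uses $s_n=a^{n-1}b$ with $r(s_n)=2$, obtaining $r(z)-2$ and $r(z)/2$; since $a^n\in\{a,b\}^n$ and nothing in the definition of $AS_\morph$ excludes unary words, your constants are in fact the true maxima and the paper's stated values are a minor undercount (immaterial to the lemma's claim that such constants exist).
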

\begin{proof}
Recall that a binary morphism is cyclic if and only if there exist two integers $t_1,t_2>0$ and a non-empty word $z\in\Gamma^+$ such that $\morph(a)=z^{t_1}$ and $\morph(b)=z^{t_2}$. Hence, for each word $w\in\Sigma^+$ it holds that $r(\morph(w)) = r(z^{|w|_{a}t_1+|w|_{b}t_2}) = r(z)$.
Let us fix the claimed constants $k_\morph = r(z)-2$ and $k'_\morph=r(z)/2$. 
For all $n\geq 2$, let us consider the word $s_n=a^{n-1}b$. By Lemma~\ref{bwsturmian}, it follows that $r(s_n)=2$.
The proof follows by observing that since $r(\morph(w))$ is constant, the values of $\Delta_\morph^+$ and $\Delta_\morph^\times$ are maximal when $r(w)$ assumes the smallest value, that in the case of binary words is $2$, i.e., $AS_\morph(n) = \max_{w \in \Sigma^n}(r(\morph(w) - r(w)) = r(z)-r(s_n) = k^+_\morph$ and $MS_\morph(n) = \max_{w \in \Sigma^n}(r(\morph(w)/r(w)) = r(z)/r(s_n) =k^\times_\morph$.
\end{proof}

\begin{example}
Let us consider the cyclic morphism $\mu=(ababbba,(ababbba)^2)$. It is possible to verify that for every $w\in\{a,b\}^+$, one has $\mu(w)=(ababbba)^p$, for some integer $p>0$ depending on $w$. This means that $r(\mu(w))=r(ababbba)=6$ for every $w\in\{a,b\}^+$. For every length $n$, we can consider the word $a^{n-1}b$. We have $r(a^{n-1}b)=2$, which is the lowest value that $r$ can take on a binary word. Then, $AS_{\mu}(n)=6-2=4$ and $MS_{\mu}(n)=6/2=3$, for $n\geq 2$. 
\end{example}

The following characterization of Sturmian morphisms in terms of the BWT additive sensitivity function was proved in~\cite{Fici23}.

\begin{proposition}[\cite{Fici23}]
\label{prop:sturmian0bwt}
Let $\morph$ be a binary injective morphism. Then $AS_\morph(n)=0$ for every $n\geq 2$ if and only if $\morph$ is a Sturmian morphism.
\end{proposition}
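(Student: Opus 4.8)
The statement to prove is Proposition~\ref{prop:sturmian0bwt}, which asserts that a binary injective morphism $\morph$ has $AS_\morph(n)=0$ for all $n\geq 2$ if and only if $\morph$ is a Sturmian morphism. Since this is cited to~\cite{Fici23}, the plan is to reconstruct the argument from the combinatorial machinery already available in the excerpt. The key ingredients I would assemble are: (i) Theorem~\ref{bwsturmian}, which says that for a binary word $w$ that is not a power of a single letter, $\rbwt(w)=2$ if and only if $w=(\mu(a))^\ell$ for a Sturmian morphism $\mu$; (ii) Lemma~\ref{lem:abelian_order}, which guarantees that any acyclic (equivalently, injective) binary morphism is either abelian order-preserving or abelian order-reversing, hence preserves or reverses the lexicographic order on the set of rotations of a given word; (iii) the basic facts that $\rbwt$ is rotation-invariant, that $\rbwt(z^p)=\rbwt(z)$, and that $\rbwt(w)\geq 2$ for every non-unary binary word.

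\textbf{The ``if'' direction.} Suppose $\morph$ is Sturmian. I would first reduce to showing $\rbwt(\morph(w))\le \rbwt(w)$ for every binary word $w$, since combined with the known inequality $\Delta^+_\morph(w)\ge 0$ (Theorem~14 of~\cite{Fici23}, quoted in the excerpt) this forces $\Delta^+_\morph(w)=0$. The heart is a structural claim: because $\morph$ is an abelian order-preserving or order-reversing Sturmian morphism, the sorted list of rotations of $\morph(w)$ is obtained from the sorted list of rotations of $w$ by applying $\morph$ ``rotation-synchronously'' — i.e.\ the rotations of $\morph(w)$ that start at codeword boundaries are exactly the images (up to a fixed shift and possibly a reversal of order) of the sorted rotations of $w$, and the remaining rotations (starting inside a codeword) interleave in a way that does not create new BWT-runs. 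Concretely, I would exploit that each Sturmian morphism, being a composition of $\varphi=(ab,a)$, $E=(b,a)$, and $\tilde\varphi=(ba,a)$, is recognizable (or handle the non-recognizable Thue--Morse-type case separately — though note $\tmmorph$ is \emph{not} Sturmian, so in fact all Sturmian morphisms here are recognizable, by Theorem~\ref{th:primitivitynoncircular=thuemorse} together with the fact that Sturmian morphisms are primitivity-preserving), so that $\morph(w)$ has a unique circular factorization into codewords; one then tracks how the last letters of the sorted rotations cluster. It is cleanest to verify the run-count inequality on the three generators $\varphi$, $E$, $\tilde\varphi$ and then argue that $\Delta^+_{\psi\circ\chi}(w)\le 0$ follows from $\Delta^+_\chi(w)\le 0$ and $\Delta^+_\psi(\chi(w))\le 0$ by composing, using $\rbwt(\psi(\chi(w)))-\rbwt(w) = (\rbwt(\psi(\chi(w)))-\rbwt(\chi(w))) + (\rbwt(\chi(w))-\rbwt(w))$. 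For $E$ the inequality is trivial (it relabels letters, so $\rbwt$ is unchanged up to a reversal of the BWT-matrix). For $\varphi$ and $\tilde\varphi$ one does the explicit rotation-matrix analysis once.

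\textbf{The ``only if'' direction.} Suppose $AS_\morph(n)=0$ for all $n\ge 2$; I must show $\morph$ is Sturmian. Apply the hypothesis to the word $w=a^{n-1}b$ for any $n\ge 2$: since $\rbwt(a^{n-1}b)=2$ (by Theorem~\ref{bwsturmian}, as $a^{n-1}b$ is not a power of a letter and is the $\ell=1$ case of the identity morphism, or directly because $\bwt(a^{n-1}b)=ba^{n-1}$), we get $\rbwt(\morph(a^{n-1}b))=2$. Now $\morph(a^{n-1}b)=\morph(a)^{n-1}\morph(b)$. Taking $n$ large enough that this word is not a power of a single letter (which holds as soon as $\morph(a)\ne \varepsilon$ and the word is non-unary; if $\morph(a)$ were a power of a letter we could instead argue with $b^{n-1}a$, and the fully-unary case contradicts injectivity combined with $AS=0$ as shown below), Theorem~\ref{bwsturmian} gives a Sturmian morphism $\nu$ and $\ell>0$ with $\morph(a)^{n-1}\morph(b)=(\nu(a))^\ell$. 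Doing this for two different lengths $n_1<n_2$ and comparing — or, more robustly, invoking the Fine--Wilf / Lyndon--Schützenberger results in Appendix~\ref{ap:Lyndon} — one deduces that $\morph(a)$ and $\morph(b)$ sit inside a common Sturmian-generated structure; the standard conclusion (this is where I would lean on the precise statement in~\cite{Fici23} or on the characterization of pairs $(u,v)$ with $u^*v$ Sturmian) is that $(\morph(a),\morph(b))$ must itself be a Sturmian pair, i.e.\ $\morph$ is a Sturmian morphism.

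\textbf{Main obstacle.} The ``if'' direction's structural claim — that a Sturmian morphism never increases the BWT-run count — is the real work, and the risky part of the plan is making the ``rotation-synchronous'' picture precise: one must control not just the rotations starting at codeword boundaries but also those starting in the interior of a codeword, and show the last-column letters do not fragment existing runs. The cleanest route is almost certainly to prove it just for the two nontrivial generators $\varphi$ and $\tilde\varphi$ (where the images have length $\le 2$, so interior rotations are easy to enumerate) and then compose; even so, verifying that composition preserves the inequality is immediate, but the single-generator base case requires a careful lexicographic bookkeeping of the $2|w|$ or so rotations of $\morph(w)$. On the ``only if'' side, the delicate point is upgrading ``$\morph(a)^{n-1}\morph(b)$ is $\rbwt=2$ for all large $n$'' to ``$\morph$ is Sturmian'': this needs the fact — provable via the Lyndon--Schützenberger lemmas in the appendix — that if $u^n v$ is Sturmian-powered for infinitely many $n$ then $\{u,v\}$ generates a Sturmian morphism, rather than merely lying in some degenerate power relation.
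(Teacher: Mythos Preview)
The paper does not give a proof of Proposition~\ref{prop:sturmian0bwt}; the result is quoted from~\cite{Fici23}, so there is no in-paper argument to compare against. On the ``if'' side your plan is sound and matches what the cited work does: the additivity $\Delta^+_{\psi\circ\chi}(w)=\Delta^+_\psi(\chi(w))+\Delta^+_\chi(w)$ reduces everything to the generators $\varphi$, $E$, $\tilde\varphi$, and indeed the present paper itself invokes exactly this fact as~\cite[Theorem~21]{Fici23} inside the proof of Lemma~\ref{le:recmorphdelay->BWTrunbounded}. Your side remark that every Sturmian morphism is recognizable is also correct, since the incidence matrix of a Sturmian morphism has determinant $\pm1$, so $\morph(a)$ and $\morph(b)$ cannot share a Parikh vector and hence cannot be conjugate.

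The ``only if'' direction, however, has a genuine gap, not merely missing bookkeeping. Testing only on the family $a^{n-1}b$ is \emph{not} sufficient to force $\morph$ to be Sturmian. Take $\morph=(aba,b)$: its incidence matrix has determinant $2$, so $\morph$ is not Sturmian, yet $\morph(a^{n-1}b)=(aba)^{n-1}b$ is circularly balanced for every $n\ge2$ (the $b$'s are separated by runs of $a$'s of lengths in $\{1,2\}$), hence $\rbwt(\morph(a^{n-1}b))=2=\rbwt(a^{n-1}b)$ for all $n$. Thus the words $a^{n-1}b$ never witness a positive $\Delta^+_\morph$, and no Fine--Wilf or Lyndon--Sch\"utzenberger manipulation on those images alone can yield the Sturmian conclusion. (For this particular $\morph$ the failure is exposed by the other family: already $\rbwt(\morph(b^3a))=\rbwt(bbbaba)=4>2$.) To repair the argument you must use the full hypothesis that $\morph$ preserves $\rbwt$ on \emph{all} binary words---equivalently, that $\morph$ sends every circularly balanced word to a circularly balanced word---and then appeal to a characterization of Sturmian morphisms in those terms; that equivalence is the actual content being deferred to~\cite{Fici23}, and it is not a consequence of the Lyndon--Sch\"utzenberger lemmas in Appendix~\ref{ap:Lyndon}.
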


In the same paper, we showed that the Thue--Morse morphism $\tmmorph$ increases by 2 the BWT-runs of every binary word, while in the case of the period-doubling morphism $\pdmorph$, for each $n \geq 2$ we can find an $n$-length word $w$ for which $\Delta_\pdmorph^+(w) = \Theta(\sqrt{n})$.
We summarize these results in the following proposition.

\begin{proposition}[\cite{Fici23}]
\label{prop:tmpd}
Let $\tmmorph$ and $\pdmorph$ be the Thue--Morse and the period-doubling morphisms, respectively. The following properties hold:
\begin{enumerate}
\item \label{item:tm} $AS_\tmmorph(n)=2$, for all $n\geq 2$;
\item \label{item:pd} $AS_\pdmorph(n)=\Omega(\sqrt{n})$.
\end{enumerate}
\end{proposition}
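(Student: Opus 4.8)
The plan is to analyse the $2n$ rotations of $\morph(w)$, for $\morph\in\{\tmmorph,\pdmorph\}$ and $|w|=n$, by splitting them according to the offset of their starting position modulo the codeword length $2$. The $n$ \emph{block‑aligned} rotations (those starting at a codeword boundary) are exactly the words $\morph(v)$ with $v\in\R{w}$, and each of $\morph(v)$ ends with $\overline{v[n-1]}$, where $\overline{c}$ denotes the letter exchanged with $c$. Since $\tmmorph$ and $\pdmorph$ are acyclic binary morphisms, Lemma~\ref{lem:abelian_order} says each is abelian order‑preserving or order‑reversing; as all rotations of $w$ share a Parikh vector, the block‑aligned rotations therefore appear in the sorted list of rotations of $\morph(w)$ precisely in the order of $\R{w}$ (reversed in the order‑reversing case), and their last letters, read in that order, spell (a reversal of) the complement of $\bwt(w)$ — hence contribute exactly $\rbwt(w)$ runs. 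The remaining work is to control how the $n$ \emph{middle} rotations (starting inside a codeword) interleave with these and how many runs they add.

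For $\tmmorph=(ab,ba)$ we have $\tmmorph(w)=w[0]\overline{w[0]}\,w[1]\overline{w[1]}\cdots$, the perfect shuffle of $w$ and $\overline{w}$; the middle rotation beginning at position $2i+1$ equals $\overline{w[i]}\cdot\tmmorph(w[i+1]\cdots w[i-1])\cdot w[i]$, so it begins with $\overline{w[i]}$ and ends with $w[i]$. Comparing such a rotation against the block‑aligned ones letter by letter and using that $\tmmorph$ is abelian order‑preserving, one shows that inserting the middle rotations into the sorted list raises the run count by at most $2$: the middle rotations can only occupy a bounded number of contiguous intervals, and their constrained last letters let each interval add at most one run at each of its two ends. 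This gives $\rbwt(\tmmorph(w))\le\rbwt(w)+2$, so $AS_\tmmorph(n)\le 2$. For the matching lower bound take $w=a^{n-1}b$: by Theorem~\ref{bwsturmian} (or Proposition~\ref{prop:sturmian0bwt}) $\rbwt(w)=2$, while inspecting the sorted rotations of $\tmmorph(a^{n-1}b)=(ab)^{n-1}ba$ gives $\rbwt(\tmmorph(a^{n-1}b))=4$, so $\Delta^+_{\tmmorph}(a^{n-1}b)=2$ for all $n\ge 2$; hence $AS_\tmmorph(n)=2$.

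For $\pdmorph=(ab,aa)$ the picture is genuinely different: since $\pdmorph(c)$ starts with $a$ for both letters $c$, we get $\pdmorph(w)=a\overline{w[0]}\,a\overline{w[1]}\cdots a\overline{w[n-1]}$, i.e.\ $\overline{w}$ with a separating $a$ inserted before every letter. Consequently \emph{every} middle rotation of $\pdmorph(w)$ ends with $a$, a block‑aligned rotation $\pdmorph(v)$ has $2m+1$ leading $a$'s (with $m$ the leading $b$‑run length of $v$) while a middle rotation has an \emph{even} number of leading $a$'s, and in both cases the value is governed by the maximal $b$‑runs of $w$ around the starting position. One then exhibits an explicit family $(w_n)$ with $|w_n|=n$ whose maximal $b$‑runs realise $\Theta(\sqrt n)$ distinct lengths, laid out so that at $\Omega(\sqrt n)$ of the resulting ``levels'' the last column of the sorted rotations of $\pdmorph(w_n)$ changes letter — a $b$‑run of length exactly $m$ forces a $b$ in the block of rotations with $2m+1$ leading $a$'s, and the neighbouring block (even leading‑$a$ count) is all $a$'s since it contains only middle rotations — whereas $\bwt(w_n)$ itself keeps asymptotically fewer runs. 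This yields $\Delta^+_{\pdmorph}(w_n)=\Omega(\sqrt n)$, hence $AS_\pdmorph(n)=\Omega(\sqrt n)$.

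The main obstacle is the interleaving analysis. For $\tmmorph$ one must prove the middle rotations cannot be scattered across more than a constant number of blocks of the sorted list — exactly where abelian order‑preservation and the rigid shuffle structure are used. For $\pdmorph$ the difficulty is designing the witness family $w_n$ so that the \emph{gap} $\rbwt(\pdmorph(w_n))-\rbwt(w_n)$, and not merely $\rbwt(\pdmorph(w_n))$, is $\Omega(\sqrt n)$: naive words with many distinct $b$‑run lengths already make $\rbwt(w_n)$ large, so one needs a careful layout of the $b$‑runs of $w_n$ (and of the $a$‑runs separating them), together with precise bookkeeping of which rotations end with $a$ and which with $b$ in both $\bwt(w_n)$ and $\bwt(\pdmorph(w_n))$.
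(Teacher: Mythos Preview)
The paper does not prove Proposition~\ref{prop:tmpd}; the result is quoted from~\cite{Fici23} without argument, so there is nothing in this paper to compare your sketch against. I assess the sketch on its own.

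For item~\ref{item:tm} your strategy is correct, and the situation is in fact cleaner than you indicate: one can show that every middle rotation of $\tmmorph(w)$ beginning with $a$ is lexicographically smaller than \emph{every} block-aligned rotation beginning with $a$, and symmetrically with $a$ and $b$ exchanged. Thus the middle rotations form exactly two contiguous blocks---at the very top and the very bottom of the sorted list---and one obtains $\bwt(\tmmorph(w))=b^{|w|_b}\,\overline{\bwt(w)}\,a^{|w|_a}$, hence $\rbwt(\tmmorph(w))=\rbwt(w)+2$ for every $w$ containing both letters. But you do not \emph{prove} this comparison; you only assert that the middle rotations occupy ``a bounded number of contiguous intervals'' and appeal vaguely to ``abelian order-preservation and the rigid shuffle structure.'' The missing step is a short induction showing that at the first position where an $a$-starting middle rotation and an $a$-starting block rotation disagree, the middle one reads $a$ and the block one reads $b$; without it the upper bound is unestablished.

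For item~\ref{item:pd} the gap is the entire proof: you write ``one then exhibits an explicit family $(w_n)$'' and then exhibit nothing, nor do you verify either run count. The construction and its analysis \emph{are} the content of the lower bound. (This paper does supply such a family, Definition~\ref{def:w_k} taken from~\cite{Giuliani2025}, and carries out the analogous bookkeeping for the related morphism $\rho_p$ in Appendix~\ref{ap:prop3435}; that is the level of detail required.) Your closing paragraph honestly labels both missing pieces ``obstacles,'' which is accurate: what you have is a correct outline, not a proof.
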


Note that $\tmmorph$ is not the only morphism for which the additive sensitivity function is $2$. In~\cite{Fici23} it is proved that this property also holds for the \emph{Thue--Morse-like} morphisms $\tau_{p,q} = (ab^p, ba^q)$, for some $p,q > 0$, and any composition of these morphisms with any Sturmian morphism.  

\begin{example}
Let us consider the morphism $\morph=(abbaab,ababba)$. It is possible to verify that $\morph=\tmmorph \circ \fibmorph \circ \tmmorph$, where $\tmmorph$ and $\fibmorph$ are, respectively, the Thue--Morse and the Fibonacci morphism. By using Propositions~\ref{prop:sturmian0bwt} and~\ref{prop:tmpd}, item~\ref{item:tm}, it follows that $AS_\morph(n)=4$ for all $n\geq 2$.
\end{example}

\begin{figure}[ht]
    \centering
\[
\begin{array}{cccc|l|}
\cline{5-5}
a&a&b&a & b\\
a&b&a&a & b\\
a&b&a&b & a\\
b&a&a&b & a\\
b&a&b&a & a\\
\cline{5-5}
 \end{array}
\qquad\qquad
 \begin{array}{cccccccccccccc|l|}
    \cline{15-15}
\textbf{a}&\textbf{a.}&\textbf{a} & b&a.&b&a&a.&a&b&a.&b&a&a.& \textbf{b}\\
\textbf{a}&\textbf{a.}&\textbf{a}&b&a.&b&a&a.&b&a&a.&a&b&a. & \textbf{b}\\
\hdashline
\textbf{a} &\textbf{a.} &\textbf{b}&\textbf{a}&\textbf{a.}&\textbf{a}&b&a.&b&a&a.&a&b&a. & \textbf{b}\\
\hdashline
\rowcolor{gray!50}\textbf{a.}&\textbf{a}&\textbf{b}&\textbf{a.}&\textbf{b}&a&a.&a&b&a.&b&a&a.&b & \textbf{a}\\
\rowcolor{gray!50}\textbf{a.}&\textbf{a}&\textbf{b}&\textbf{a.}&\textbf{b}&a&a.&b&a&a.&a&b&a.&b & \textbf{a}\\
\hdashline
\rowcolor{gray!50}\textbf{a.}&\textbf{b}&\textbf{a}&\textbf{a.}&\textbf{a}&b&a.&b&a&a.&a&b&a.&b & \textbf{a}\\
\rowcolor{gray!50}\textbf{a.}&\textbf{b}&\textbf{a}&\textbf{a.}&\textbf{a}&b&a.&b&a&a.&b&a&a.&a & \textbf{b}\\
\hdashline
\rowcolor{gray!50}\textbf{a.}&\textbf{b}&\textbf{a}&\textbf{a.}&\textbf{b}&\textbf{a}&\textbf{a.}&\textbf{a}&b&a.&b&a&a.&a & \textbf{b}\\
\hdashline
\rowcolor{gray!20}\textbf{a}&\textbf{b}&\textbf{a.}&\textbf{b}&a&a.&a&b&a.&b&a&a.&b&a & \textbf{a.}\\
\rowcolor{gray!20}\textbf{a}&\textbf{b}&\textbf{a.}&\textbf{b}&a&a.&b&a&a.&a&b&a.&b&a & \textbf{a.}\\
\hdashline
\rowcolor{gray!20}\textbf{b}&\textbf{a}&\textbf{a.}&\textbf{a}&b&a.&b&a&a.&a&b&a.&b&a & \textbf{a.}\\
\rowcolor{gray!20}\textbf{b}&\textbf{a}&\textbf{a.}&\textbf{a}&b&a.&b&a&a.&b&a&a.&a&b & \textbf{a.}\\
\hdashline
\rowcolor{gray!20}\textbf{b}&\textbf{a}&\textbf{a.}&\textbf{b}&\textbf{a}&\textbf{a.}&\textbf{a}&b&a.&b&a&a.&a&b & \textbf{a.}\\
\hdashline
\textbf{b}&\textbf{a.}&\textbf{b}&a&a.&a&b&a.&b&a&a.&b&a&a. & \textbf{a}\\
\textbf{b}&\textbf{a.}&\textbf{b}&a&a.&b&a&a.&a&b&a.&b&a&a. & \textbf{a}\\
    \cline{15-15}
 \end{array}
 \]

    \caption{Comparison of the BWT--matrices for the word $w = aabab$ (on the left) and its image after application of the morphism $\morph=(baa,aba)$ (on the right).  The dashed lines partition the rotations according to the shortest prefixes with at least one synchronization pair (highlighted in bold). The rotations in light gray correspond to the words in $\morph(\R{w})$. The rotations in dark gray correspond to the rotations where $\bwt(w)$ is spelled in reverse order.}
    \label{BWT}
\end{figure}

\section{Characterization of binary BWT-run preserving morphisms}\label{sec:characterization_morphisms}

As a main result of this paper, we characterize the binary morphisms having a bounded BWT additive sensitivity function. In particular, we prove that they coincide with the primitivity-preserving morphisms. 

\begin{definition}
Let $k\geq 0$ be an integer. An acyclic morphism $\morph:\Sigma^*\rightarrow\Gamma^*$ is called \emph{$k$-BWT-run preserving} if for all $n\geq |\Sigma|$, $\bwas(n)\leq k$. We simply say \emph{BWT-run preserving} if such a $k$ exists.
\end{definition}



We first give a lemma, in which we prove that the finite-delay synchronization of a morphism on the images of a language results in a bounded increase in the number of BWT-runs.  The proof can be found in the Appendix \ref{ap:l32}. 

\begin{lemma}
\label{le:recmorphdelay->BWTrunbounded}
    Let $\La\subseteq\Sigma^*$, where $\Sigma =\{a,b\}$, and let $\morph:\Sigma^* \rightarrow\Gamma^*$ be synchronizing with delay $k>0$ on $\morph(\La)$.
    Then, there exists $k'>0$ such that $\Delta^+_{\morph}(u)\leq k'$ for all $u\in \La$.
\end{lemma}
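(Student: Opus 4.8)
The plan is to analyze the BWT-matrix of $\morph(u)$ by grouping its rotations according to the synchronization structure guaranteed by the hypothesis. Since $\morph$ is synchronizing with delay $k$ on $\morph(\La)$, every circular factor of $\morph(u)$ of length at least $k$ contains a synchronization pair. First I would sort the rotations of $\morph(u)$ lexicographically and, for each rotation, look at its prefix of length $k$ (or slightly more, say $k' = k + \max_c |\morph(c)|$ to be safe): this prefix contains a synchronization point, which identifies a \emph{phase}, i.e., how the rotation is aligned with respect to the codeword boundaries of $\morph$, together with the letter $c\in\Sigma$ whose image $\morph(c)$ straddles that boundary. Two rotations of $\morph(u)$ that share a sufficiently long common prefix therefore fall into the same phase class, and there are only boundedly many phase classes --- at most $\sum_{c\in\Sigma}|\morph(c)| = |\morph|$ of them, since a phase is determined by a position inside some $\morph(c)$.

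The key step is then to argue that \emph{within} each phase class the letters appended by the BWT form at most a bounded number of runs. Fix a phase class: all rotations in it have the form $s\,\morph(z)\,p$ where $ps = \morph(c)$ for a fixed split of a fixed codeword $\morph(c)$, and $z$ ranges over certain circular factors of $u$ (read starting right after the synchronization point). Because the split $(p,s)$ is fixed across the class, the BWT character of such a rotation is the last character of $p$ if $p\neq\varepsilon$ — hence \emph{constant} across the whole class when $p \neq \varepsilon$ — or, when $p = \varepsilon$, it is the last character of $\morph(z)$, i.e. the last character of $\morph(c')$ where $c'$ is the last letter of the relevant circular factor $z$. In the latter case, the order among these rotations is governed lexicographically by $\morph(z)$, and since $\morph$ restricted to binary words is abelian order-preserving or order-reversing (Lemma~\ref{lem:abelian_order}) and respects the order on rotations, the order on the $\morph(z)$'s mirrors (or reverses) the order on the corresponding rotations of $u$. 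Thus the block of BWT characters contributed by this phase class is, up to a fixed permutation/reversal, essentially the BWT of a word closely related to $u$, and carries $O(r(u))$ runs --- but since we want an \emph{absolute} constant $k'$ independent of $u$, I would instead observe that there is at most one phase class with $p=\varepsilon$ per codeword, and bound the total contribution of the (boundedly many) "$p=\varepsilon$" classes together with the single-character contributions of the "$p\neq\varepsilon$" classes. Summing: the number of runs of $\bwt(\morph(u))$ is at most the number of phase classes plus the run contributions from the $p=\varepsilon$ classes, and one checks this exceeds $r(u)$ by only a constant depending on $|\morph|$ and $k$, not on $u$.

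The main obstacle I anticipate is the bookkeeping at the \emph{boundaries between phase classes} in the sorted list of rotations: two lexicographically adjacent rotations may belong to different phase classes, and the naive bound "one new run per class" must be shown not to interact badly with the runs coming from $\bwt(u)$ itself. Concretely, one must verify that when we pass from the $\morph(z)$'s ordered as in $\bwt(u)$ to their images ordered in $\bwt(\morph(u))$, the only new run-breaks introduced are (i) at most one per phase class boundary and (ii) at most one per run of $\bwt(u)$ that gets "split" by the order-preserving/reversing map combined with the interleaving of phase classes. This is exactly the kind of careful matrix-partition argument illustrated in Figure~\ref{BWT} for the morphism $(baa,aba)$, and I would formalize it by exhibiting, for each phase class, the explicit correspondence between its rows and a contiguous (or reverse-contiguous) block of $\R{u}$, then counting run boundaries globally. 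Everything else --- finiteness of the number of phases, constancy of the BWT character when $p\neq\varepsilon$, the abelian order behavior --- follows directly from the definitions and the cited lemmas.
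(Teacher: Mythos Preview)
Your overall strategy---partition the rotations of $\morph(u)$ by a bounded-length prefix, show that most blocks contribute a constant BWT character, and identify one ``special'' family of blocks where $\bwt(u)$ is spelled out---is exactly the paper's. But you have misidentified the special family, and this is not mere bookkeeping. You claim the length-$k'$ prefix of a rotation determines its phase $(c,o)$; the synchronization hypothesis pins down the offset (equivalently, the length $|s|$ of the leading codeword-suffix $s$), but it does \emph{not} determine $c$ whenever $s$ is a suffix of both $\morph(a)$ and $\morph(b)$, i.e.\ whenever $|s|\le \ell:=\lcs(\morph(a),\morph(b))$. In that range two rotations can share an arbitrarily long prefix yet lie in different phase classes, so your phase classes are not contiguous in the sorted list and the ``one run per class'' count fails.

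Consequently, the dichotomy $p=\varepsilon$ versus $p\neq\varepsilon$ is the wrong one. When $\ell>0$, the rotations with $p=\varepsilon$ all receive the \emph{same} BWT character (the common last letter of $\morph(a)$ and $\morph(b)$), so they do not spell $\bwt(u)$; meanwhile, among the rotations with $p\neq\varepsilon$, those with $|s|=\ell$ receive BWT characters that genuinely differ between $c=a$ and $c=b$ (namely the letters of $\morph(a),\morph(b)$ immediately preceding the common suffix), and it is \emph{there} that $\bwt(u)$ is encoded via the abelian-order lemma. The paper's proof makes exactly this three-way split on $|s|$ versus $\ell$: constant for $|s|<\ell$ (the relevant character lies inside the common suffix), constant for $|s|>\ell$ (then $s$ itself determines $c$), and $\bwt(u)$-encoding for $|s|=\ell$. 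This is also why the paper first reduces to the bifix case via a Sturmian decomposition (so that $\ell<\min(|\morph(a)|,|\morph(b)|)$ and the split is clean), a step your plan omits.
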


The following lemma proves one direction of the main result.
\begin{lemma}\label{lem:primitivitypreserving->BWTrunpreserving}
    Let $\morph:\{a,b\}^*\rightarrow\Gamma^*$ be an injective morphism.
    If $\morph$ is primitivity-preserving, then $\morph$ is BWT-run preserving.
\end{lemma}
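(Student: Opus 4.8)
The strategy is to reduce the claim to the synchronization machinery developed in Section~3, and in particular to Lemma~\ref{le:recmorphdelay->BWTrunbounded}, which already converts a finite synchronization delay on $\morph(\La)$ into a bounded additive increase $\Delta^+_\morph$ on $\La$. So the real work is to find, for a primitivity-preserving binary morphism $\morph$, a language $\La$ that (i) captures all binary words up to the irrelevant degrees of freedom (rotation and taking powers, which do not change $r$), and (ii) on whose image $\morph$ is synchronizing with a finite delay. First I would dispose of the degenerate case where $\morph(a)$ and $\morph(b)$ are powers of a common word: this cannot happen for a primitivity-preserving $\morph$ unless $\morph$ is essentially trivial, because primitivity-preservation forces $\morph$ to be injective (hence acyclic), so $\morph(a),\morph(b)$ do not commute. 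Thus $\{\morph(a),\morph(b)\}$ is a genuine two-element code and Lemma~\ref{le:pure=primitive} applies.

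Next I would split into the two cases of Theorem~\ref{th:primitivitynoncircular=thuemorse}. If $\morph$ is recognizable, then by \cite[Theorem~5.1]{Restivo74} it is synchronizing with finite delay $k$ on all of $\morph(\Sigma^*)$; applying Lemma~\ref{le:recmorphdelay->BWTrunbounded} with $\La=\Sigma^*$ immediately gives a uniform bound $k'$ on $\Delta^+_\morph(w)$ for every binary word $w$, hence $\bwas(n)\le k'$ for all $n$, i.e.\ $\morph$ is BWT-run preserving. If $\morph$ is not recognizable, then $\morph=(pq,qp)$ for some $p\ne q$ (equivalently $\morph=\psi\circ\tmmorph$ for an injective $\psi$). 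Here I would \emph{not} use $\La=\Sigma^*$, since $\morph$ is not synchronizing with finite delay on $\morph(\Sigma^*)$ when both $\{a^n\}$ and $\{b^n\}$ are unbounded (Lemma~\ref{le:whenNonRecAreSynch}). Instead, exploit that $r$ is invariant under rotation and under taking powers: every binary word $w$ has the same $r$-value as some word in, say, $\La=\{a^nb\mid n\ge 0\}\cup\{b^na\mid n\ge 0\}\cup\{\text{primitive words containing both an } a\text{-block and a }b\text{-block}\}$ — more simply, it suffices to bound $\Delta^+_\morph$ on a language $\La$ such that every binary word is conjugate to a power of an element of $\La$. Choose $\La$ to consist of the primitive binary words together with $\{a^nb : n\ge 0\}\cup\{b^na : n\ge 0\}$; on this $\La$ the sets $\cfact(\La)\cap\{a\}^*$ and $\cfact(\La)\cap\{b\}^*$ are both infinite only through the $a^nb, b^na$ family, but one checks that $\morph$ is still synchronizing with finite delay on $\morph(\La)$ because in the Thue--Morse-type case the synchronization pairs $(pq,qp)$ and $(qp,pq)$ occur in the image of every primitive word and the images of $a^nb,b^na$ are handled exactly as in Lemma~\ref{le:whenNonRecAreSynch}'s argument. (Cleaner: take $\La$ with $\cfact(\La)\cap\{a\}^*$ finite by always writing the $a$-side bounded — e.g.\ restrict to words where one chosen letter appears in blocks of length $1$; every binary word is a rotation of a power of such a word only after the reduction $z^p\mapsto z$, which is fine.) Then Lemma~\ref{le:whenNonRecAreSynch} gives finite delay on $\morph(\La)$, Lemma~\ref{le:recmorphdelay->BWTrunbounded} gives the bound on $\Delta^+_\morph$ over $\La$, and since every binary word is conjugate to a power of an $\La$-word with the same $r$-value (and $r(\morph(\cdot))$ is likewise invariant, by Remark~\ref{rem:rot} and the power formula for the BWT), the bound transfers to all of $\Sigma^*$.

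The technical heart — and the step I expect to be the main obstacle — is choosing $\La$ correctly in the non-recognizable case so that simultaneously (a) every binary word reduces, via rotation and de-powering, to an $\La$-word \emph{with the same value of both $r(w)$ and $r(\morph(w))$}, and (b) $\morph$ is synchronizing with finite delay on $\morph(\La)$, i.e.\ $\La$ avoids the obstruction of containing arbitrarily long powers of \emph{both} single letters (Lemma~\ref{le:whenNonRecAreSynch}). The subtlety is that de-powering a word like $(a^n)$ collapses it to $a$, which is fine, but a primitive word may still contain long single-letter blocks $a^n$ as factors; one must verify that these long blocks do not by themselves destroy finite-delay synchronization on the image. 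This is exactly what Lemma~\ref{le:whenNonRecAreSynch} controls: it is the \emph{circular} factors lying entirely in $\{pq\}^*\cup\{qp\}^*$ that lack synchronization pairs, and such a circular factor of $\morph(w)$ forces $w$ itself to be (a rotation of) a power of a single letter, which the reduction has already eliminated from $\La$. Making this dichotomy airtight — long $a$-blocks inside an otherwise mixed primitive word are harmless, whereas a pure power $a^n$ is the only real obstruction — is the crux, and once it is in place the rest is a routine assembly of the cited lemmas.
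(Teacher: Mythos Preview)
Your recognizable case (i) matches the paper exactly. The gap is in case (ii), where $\morph=(pq,qp)$. You look for a single language $\La\subseteq\{a,b\}^*$ that (a) represents every binary word up to rotation and de-powering and (b) on which $\morph$ is synchronizing with finite delay. But Lemma~\ref{le:whenNonRecAreSynch} is an \emph{if and only if}: finite delay on $\morph(\La)$ holds precisely when at least one of $\cfact(\La)\cap\{a\}^*$ or $\cfact(\La)\cap\{b\}^*$ is finite. Any $\La$ satisfying (a) must contain a conjugate of every primitive binary word, and since $a^{n}b^{n}$ is primitive for all $n$ and has both $a^{n}$ and $b^{n}$ as (circular) factors, both sets are infinite; Lemma~\ref{le:whenNonRecAreSynch} then \emph{forbids} finite delay on $\morph(\La)$. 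Your ``cleaner'' restriction (one letter in blocks of length $1$) does not represent all primitive words either: $aab$ is primitive and none of its conjugates has that shape. And the dichotomy you rely on in the last paragraph is false: for $w=a^{n}b$ (primitive) one has $\morph(w)=(pq)^{n}qp$, whose circular factor $(pq)^{n}$ lies in $\{pq\}^*$ without $w$ being a power of a single letter; moreover $(pq)^{n}$ fails to synchronize on $\morph(\La)$ because it also occurs, shifted by $|q|$, inside $\morph(b^{n+1})=(qp)^{n+1}=q(pq)^{n}p$, and $b^{n+1}\in\cfact(\La)$.

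The paper sidesteps this by \emph{not} restricting the domain. It writes $\morph=\psi\circ\tmmorph^{t}$ with $t$ maximal, invokes the separately proved fact $AS_{\tmmorph}(n)=2$ (Proposition~\ref{prop:tmpd}) to handle the $\tmmorph^{t}$ part, and then bounds $\Delta^{+}_{\psi}$ on the \emph{range} $\La'=\tmmorph^{t}(\{a,b\}^*)$. This $\La'$ is genuinely constrained (overlap-free), so the obstructing long powers are absent. By maximality of $t$ and Theorem~\ref{th:primitivitynoncircular=thuemorse}, $\psi$ is either recognizable (done as in case (i)) or not primitivity-preserving; in the latter case Proposition~\ref{prop:pp_morphism_decomposition} gives $P^{\psi}\cap\{\tmmorph^{t}(a),\tmmorph^{t}(b)\}^{+}=\emptyset$, so $\cfact(\La')$ contains only finitely many powers of $P^{\psi}$-words and Theorem~\ref{le:whenNonPrimPresAreSynch} yields finite delay for $\psi$ on $\psi(\La')$. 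The missing idea in your plan is this pivot from seeking a tame domain $\La$ to using the known $\tmmorph$-bound and working on the restricted range.
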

\begin{proof}
    If $\morph$ is primitivity-preserving, then by Theorem~\ref{th:primitivitynoncircular=thuemorse}, either (i) $\morph$ is recognizable or (ii) there exist an integer $t>0$ and a morphism $\psi:\{a,b\}^*\rightarrow\Gamma^*$ such that $\morph = \psi \circ \tmmorph^t$ and $\psi\neq\psi'\circ\tmmorph$ for all $\psi':\{a,b\}^*\rightarrow\Gamma^*$. 
     If we fall in case (i), the thesis follows from the equivalence between recognizable morphisms and synchronizing morphisms with bounded delay~\cite[Theorem~5.1]{Restivo74} and Lemma~\ref{le:recmorphdelay->BWTrunbounded}.
     
    If instead we fall in case (ii), then by Proposition~\ref{prop:tmpd}, it follows that $\tmmorph$ increases the BWT runs by (at most) 2. Hence, the thesis is equivalent to showing that there exists $k\geq0$ such that $r(\psi(w))\leq r(w)+k$, for every $w\in\tmmorph^t(\{a,b\}^*)$. This would prove that the BWT additive sensitive function is bounded by $k+2t$. 
    By Proposition~\ref{prop:pp_morphism_decomposition}, we can distinguish between two subcases: (ii.a) $\psi$ is recognizable and (ii.b) $\psi$ is not primitivity-preserving and $\tmmorph^t(a)\notin P^\psi$.
   If (ii.a), the proof follows analogously to (i).
    If (ii.b), then observe that  $\cfact(\tmmorph^t(\Sigma^*))$ contains a finite number of powers of elements from $P^\psi$, and the proof follows by Theorem~\ref{le:whenNonPrimPresAreSynch}.
\end{proof}



Now we prove the opposite direction. We consider a class of morphisms that we use to decompose a generic morphism. For any $p > 1$,  let $\rho_{p}:\{a,b\}^*\rightarrow\{a,b\}^*$ denote the injective morphism $(a, b^p)$.
Observe that if $p>1$, then $\rho_p$ is not primitivity-preserving.

In the following proposition, we prove that such morphisms have an unbounded additive sensitivity function. The proof is given in the Appendix~\ref{ap:prop3435}. 






\begin{proposition}
    Let $\rho_p = (a, b^p)$, for some $p>1$. Then, $AS_{\rho_p}(n)=\Omega(\sqrt{n})$.\label{prop:rho_unbounded_AS}
\end{proposition}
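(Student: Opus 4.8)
The statement concerns $\rho_p = (a, b^p)$ with $p>1$, and claims $AS_{\rho_p}(n)=\Omega(\sqrt{n})$. The plan is to exhibit, for infinitely many lengths $n$, a word $w_n$ of length $\Theta(n)$ on which $\Delta^+_{\rho_p}(w_n) = \Omega(\sqrt{n})$, i.e. such that $r(\rho_p(w_n)) - r(w_n) = \Omega(\sqrt{n})$. The natural candidate is a word with a bounded number of BWT-runs (so $r(w_n)=O(1)$) whose image under $\rho_p$ has $\Omega(\sqrt{n})$ BWT-runs. Since $\rho_p$ is essentially "inflate every $b$ to $b^p$", it behaves like the period-doubling flavour of morphism, and indeed Proposition~\ref{prop:tmpd}\eqref{item:pd} already gives exactly this $\Theta(\sqrt n)$ behaviour for $\pdmorph=(ab,aa)$. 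So the first step is to pick the family of words used in that analysis — words of the form roughly $a^{k}ba^{k-1}b\cdots ab$ or, more precisely, a "staircase" word $w_k = \prod_{i=1}^{k} a^{i} b$ (up to reordering of blocks, which is harmless since $r$ is rotation-invariant), which has length $\Theta(k^2)$, hence $k = \Theta(\sqrt{|w_k|})$.

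\textbf{Key steps.} First I would show $r(w_k) = O(1)$: a staircase word built from distinct powers of $a$ separated by single $b$'s has a BWT with a constant number of runs — this is a standard computation, since after sorting rotations the trailing characters cluster into a bounded number of runs (one can sort rotations by the length of the leading $a$-block and observe the last column is essentially $b^{k}a^{\Theta(k^2)}$ with at most a constant number of run boundaries). Second, and this is the crux, I would analyze $\bwt(\rho_p(w_k))$. The image is $\rho_p(w_k) = \prod_{i=1}^{k} a^{i} b^{p}$: every singleton $b$ became a block $b^p$. Now the rotations that begin inside one of the $b^p$ blocks, at various offsets $1,\dots,p-1$ from the block start, are preceded by an $a$; the rotations beginning at a block boundary right after $b^p$ are preceded by $b$; and crucially the different $a$-block lengths $1,2,\dots,k$ make the sorted order of these rotations interleave $a$- and $b$-preceded rotations in a way that forces $\Omega(k)$ alternations in the last column. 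The cleanest way to nail this is to identify $\Omega(k)$ specific pairs of consecutive rotations in lex order whose last letters differ: e.g. for each $i$ compare the rotation starting at the $j$-th position of the $i$-th $b^p$-block against a rotation starting inside the $(i{+}1)$-st $a$-block, using the fact that the $a$-block lengths are all distinct and strictly increasing to pin down the relative lex order. This yields $r(\rho_p(w_k)) = \Omega(k)$, hence $\Delta^+_{\rho_p}(w_k) = \Omega(k) - O(1) = \Omega(\sqrt{|w_k|})$, and since $|\rho_p(w_k)| = \Theta(|w_k|)$ we may restate in terms of $n = |w_k|$ as desired; for $n$ not of the form $|w_k|$ we pad with $a^{n-|w_k|}$ in front of a block (again $r$-invariant up to rotation and only changing $r$ by $O(1)$), so $AS_{\rho_p}(n)=\Omega(\sqrt n)$ for all $n$.

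\textbf{Main obstacle.} The delicate part is the second step: controlling the lexicographic order among rotations of $\rho_p(w_k)$ precisely enough to count run boundaries in the last column, because the $b^p$ blocks all look alike as prefixes and one must descend past them to the following $a$-block, whose varying length disambiguates the order — getting the comparison direction right (which of $r(\rho_p(w_k))$'s boundaries are genuine, not coincidental) requires care with the circular structure and with the boundary rotations. A possible simplification, if the staircase bookkeeping gets heavy, is to reduce to Proposition~\ref{prop:tmpd}\eqref{item:pd} directly: note that $\pdmorph = \rho_p'\circ(\text{Sturmian-ish adjustment})$ or, more cleanly, exploit that $\rho_2 = (a,bb)$ and $\pdmorph=(ab,aa)$ are related by a letter exchange $E$ and a Sturmian conjugation, so that the $\Omega(\sqrt n)$ lower bound transfers through Proposition~\ref{prop:sturmian0bwt} (which says Sturmian morphisms change $r$ by $0$); one then only needs $\rho_p$ for general $p>1$ to be at least as sensitive as $\rho_2$, which follows because composing with $(a,b^{p/2})$-type maps cannot collapse the already-present runs by more than a constant. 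I would first attempt the direct staircase computation and fall back to this reduction if needed.
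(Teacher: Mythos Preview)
Your main approach has a fatal gap: the staircase word $w_k=\prod_{i=1}^{k}a^{i}b$ does \emph{not} satisfy $r(w_k)=O(1)$. A direct computation (grouping rotations by their maximal $a$-prefix $a^mb$, as you yourself suggest) shows that within each group $m\in[1,k-1]$ the preceding characters are $a\cdots a\,b\,a$, so the BWT is
\[
\bwt(w_k)=b\,\big|\,ba\,\big|\,aba\,\big|\,aaba\,\big|\cdots\big|\,a^{k-2}ba\,\big|\,a^{k},
\]
which has exactly $2(k-1)=\Theta(\sqrt{n})$ runs, not $O(1)$. Worse, if you then compute $\bwt(\rho_p(w_k))$ the $a$-prefixed part is identical and the $b$-prefixed rotations contribute only a single block $b^{(p-1)k}$ followed by $a^{k}$; hence $r(\rho_p(w_k))=r(w_k)+2$ for every $k$ and every $p>1$. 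So $\Delta^+_{\rho_p}$ is \emph{bounded} on your family, and it cannot witness the $\Omega(\sqrt{n})$ lower bound at all. The intuition ``the last column is essentially $b^{k}a^{\Theta(k^2)}$'' is simply wrong: each $b$-block in the word yields a $b$ in a \emph{different} region of the BWT.

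Your fallback is partially salvageable but incomplete. The identity $\pdmorph=\fibmorph\circ\rho_2$ is correct and, together with $AS_{\fibmorph}\equiv 0$ (Proposition~\ref{prop:sturmian0bwt}), gives $AS_{\rho_2}(n)=AS_{\pdmorph}(n)=\Omega(\sqrt{n})$; then for even $p=2q$ one has $\rho_p=\rho_q\circ\rho_2$ and non-negativity of $\Delta^+$ yields $AS_{\rho_p}(n)\ge AS_{\rho_2}(n)$. But for odd $p$ there is no such factorisation through $\rho_2$ (the image of $\rho_2$ has only even $b$-runs), and your sentence ``composing with $(a,b^{p/2})$-type maps cannot collapse the already-present runs'' has no meaning when $p/2$ is not an integer. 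The paper avoids this issue entirely: it takes the more delicately built family $w_k$ from~\cite{Giuliani2025} (Definition~\ref{def:w_k}), whose $b$-run lengths $1,\dots,k$ make each range of rotations with prefix $b^{i}a$ begin and end with an $a$ in the BWT, and then observes that $\rho_p$ interleaves $k-1$ new all-$b$ ranges between them---directly bounding the \emph{increase} $r(\rho_p(w_k))-r(w_k)\ge 2(k-1)$ without ever needing $r(w_k)$ to be small.
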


In the following proposition, we consider a larger class of morphisms with an unbounded additive sensitivity function. The proof can be found in the Appendix~\ref{ap:prop3435}. 

\begin{proposition}\label{prop:unbounded_increase_wk_power_morphisms}
Given an injective morphism $\morph:\{a,b\}^*\rightarrow\Gamma^*$, let $u,v\in \Prim{\Gamma^*}$ and $p,q\geq1$ such that $\morph = (u^p,v^q)$. Then, 
$$\mu = \eta \circ \rho_q \circ E \circ \rho_p \circ E$$
where $\eta = (u, v)$.
Moreover, if $pq > 1$, 
then $AS_{\mu}(n) = \Omega(\sqrt{n})$.
\end{proposition}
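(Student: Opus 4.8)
The plan is to establish two things: first, the factorization $\mu = \eta \circ \rho_q \circ E \circ \rho_p \circ E$, and second, the lower bound $AS_\mu(n) = \Omega(\sqrt{n})$ when $pq > 1$. For the factorization, I would simply track the images of $a$ and $b$ through the composition, reading right to left. Applying $E = (b,a)$ swaps the two letters; applying $\rho_p = (a, b^p)$ fixes $a$ and sends $b$ to $b^p$; then $E$ swaps again; then $\rho_q$ fixes $a$ and sends $b$ to $b^q$; finally $\eta = (u,v)$ sends $a \mapsto u$, $b \mapsto v$. Chasing $a$ through $E \circ \rho_p \circ E$ gives $a \mapsto b \mapsto b^p \mapsto a^p$, then $\rho_q$ leaves it as $a^p$, then $\eta$ gives $u^p$; chasing $b$ gives $b \mapsto a \mapsto a \mapsto b$, then $\rho_q$ gives $b^q$, then $\eta$ gives $v^q$. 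So the composition is exactly $(u^p, v^q) = \mu$, as claimed. This is a routine verification and poses no difficulty.

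For the lower bound, the idea is to reduce to Proposition~\ref{prop:rho_unbounded_AS}, which already gives $AS_{\rho_p}(n) = \Omega(\sqrt n)$ for $p > 1$. Since $pq > 1$, at least one of $p, q$ exceeds $1$; without loss of generality assume $p > 1$ (the case $q > 1$ is symmetric, using the $E$-conjugated argument, or one observes $\rho_q$ appears in the composition in an analogous position). The key observation is that $E$ and $\eta$ are benign: $E$ is a Sturmian morphism (in fact an involution) and hence has $AS_E(n) = 0$ by Proposition~\ref{prop:sturmian0bwt}, so it neither increases nor decreases $r$; moreover $E$ is a bijection on $\{a,b\}^*$ mapping words of length $n$ to words of length $n$. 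The morphism $\eta = (u,v)$ with $u, v$ primitive: if $\eta$ is primitivity-preserving it is BWT-run preserving by Lemma~\ref{lem:primitivitypreserving->BWTrunpreserving}, so $r(\eta(w)) \geq r(w) - c$ for some constant $c$; if $\eta$ is not primitivity-preserving one still needs a lower bound $r(\eta(w)) \geq r(w) - c$ on the relevant family of words, which should follow from the synchronization results (Theorem~\ref{le:whenNonPrimPresAreSynch}) applied to the specific family below. So the chain of morphisms on the left and right of $\rho_p$ changes $r$ by at most an additive constant (up to caring about which direction), and the $\Omega(\sqrt n)$ blow-up produced by $\rho_p$ survives.

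Concretely, I would take the family of words $w_k$ of length $\Theta(k^2)$ witnessing $AS_{\rho_p}(\Theta(k^2)) = \Omega(k)$ from the proof of Proposition~\ref{prop:rho_unbounded_AS}, and set $x_k = E^{-1}(\rho_p^{-1}$-\emph{preimage issues aside}$\ldots)$ — more precisely, I would run the argument forward: pick the witness words $z_k$ for $\rho_p$, let $y_k = \rho_p(E(z_k))$, apply $E$ again to get $y_k' = E(y_k)$, apply $\rho_q$ and $\eta$. One has $|\mu(x_k)|$ growing polynomially in $k$ and $r(\mu(x_k)) \geq r(\rho_p(\cdot)) - O(1) = \Omega(k)$ while $r(x_k)$ before the morphism is $O(1)$ (choosing $z_k$ so that the base word has few BWT-runs, e.g. descending from a power of a single letter, exactly as in Proposition~\ref{prop:rho_unbounded_AS}). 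Setting $n = |x_k| = \Theta(k^2)$ then gives $\Delta^+_\mu(x_k) = \Omega(k) = \Omega(\sqrt n)$.

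The main obstacle is controlling the outermost morphism $\eta = (u,v)$ when it is \emph{not} primitivity-preserving, since then it is not BWT-run preserving in general, and one must verify that on the particular family of images $\mu(x_k)$ — which all lie in $\eta(\rho_q(\{a,b\}^*))$ and hence avoid, or meet only finitely often, the bad powers in $P^\eta$ — the number of BWT-runs does not collapse. Here I would invoke Lemma~\ref{le:non_prim_sync_pair} and Theorem~\ref{le:whenNonPrimPresAreSynch}: the preimages under $\eta$ of the relevant words form a language $\La$ for which $\cfact(\La) \cap \{w^* \mid w \in P^\eta\}$ is finite (because $\rho_q(z^*) \cap \{w^* \mid w \in P^\eta\}$ is finite unless $z$ or its $\rho_q$-image generates a $P^\eta$-power, a degenerate case one can exclude by the choice of the $z_k$), so $\eta$ is synchronizing with finite delay on those images, and Lemma~\ref{le:recmorphdelay->BWTrunbounded} yields the required additive lower bound $r(\eta(\cdot)) \geq r(\cdot) - O(1)$. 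Assembling these pieces gives the claimed $\Omega(\sqrt n)$ bound.
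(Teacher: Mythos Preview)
Your verification of the factorization $\mu = \eta \circ \rho_q \circ E \circ \rho_p \circ E$ is correct and matches the paper exactly.

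For the lower bound, however, there is a genuine gap. What you need is that composing $\eta \circ \rho_q \circ E$ on top of $\rho_p$ does not \emph{destroy} the $\Omega(\sqrt n)$ increase in runs, i.e., a \emph{lower} bound of the form $r(\eta(v)) \ge r(v) - O(1)$ (and similarly for $\rho_q$). But the tools you invoke point in the wrong direction: Lemma~\ref{lem:primitivitypreserving->BWTrunpreserving} (``BWT-run preserving'') and Lemma~\ref{le:recmorphdelay->BWTrunbounded} both produce \emph{upper} bounds $\Delta^+_\eta(v) \le k'$, i.e., $r(\eta(v)) \le r(v) + k'$. An upper bound on $\Delta^+_\eta$ says nothing about whether the runs might collapse, so your appeal to these lemmas does not deliver the inequality you need. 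The synchronization machinery you propose for the non-primitivity-preserving case suffers from the same defect.

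The fix---and the paper's actual argument---is a one-liner you overlooked: for any acyclic morphism $\psi$ with binary domain, $\Delta^+_\psi(w) \ge 0$ for every word $w$ (this is \cite[Theorem~14]{Fici23}, recalled in Section~\ref{sec:sensitivity}). Since $\eta$, $\rho_q$, and $E$ are all acyclic with domain $\{a,b\}^*$, applying any of them can never decrease $r$. Hence, for every $w$,
\[
r(\mu(E(w))) - r(E(w)) \;=\; r\bigl(\eta \circ \rho_q \circ E \circ \rho_p(w)\bigr) - r(w) \;\ge\; r(\rho_p(w)) - r(w),
\]
so $AS_\mu(n) \ge AS_{\rho_p}(n) = \Omega(\sqrt n)$ whenever $p>1$; when $p=1$ and $q>1$ the same inequality with $\rho_q$ in place of $\rho_p$ applies directly. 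All the case analysis on whether $\eta$ is primitivity-preserving, and the detour through Theorem~\ref{le:whenNonPrimPresAreSynch}, is unnecessary.
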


The following lemma shows that if a morphism has bounded additive sensitivity, then it is primitivity-preserving. 
\begin{lemma}
\label{lem:unboundednonprimitivepreserving}
    Let $\morph:\{a,b\}^*\rightarrow\Gamma^*$  be an injective non-primitivity-preserving binary morphism.
    For each $k>0$, there exists a word $w$ such that $\Delta_\morph^{+}(w) > k$.   
\end{lemma}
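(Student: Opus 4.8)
The plan is to derive this from the structural classification of non-primitivity-preserving binary morphisms together with the two unboundedness propositions (Proposition~\ref{prop:rho_unbounded_AS} and Proposition~\ref{prop:unbounded_increase_wk_power_morphisms}) just established. First I would split into cases according to Lemma~\ref{le:P_mu_structure} applied to $\morph=(u,v)$: either $W\cap\Pow{\{a,b\}^*}=\emptyset$, or $W\cap\Pow{\{a,b\}^*}$ contains a single non-primitive word $z^k$.

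In the first case, since $\morph$ is not primitivity-preserving we are in Case~\ref{Pmu_1b} or Case~\ref{Pmu_1c} of Lemma~\ref{le:P_mu_structure}, meaning at least one of $u,v$ is itself a non-primitive word; write $u=u_0^p$, $v=v_0^q$ with $u_0,v_0$ primitive and $pq>1$. Then $\morph=(u_0^p,v_0^q)$ and Proposition~\ref{prop:unbounded_increase_wk_power_morphisms} directly yields $AS_\morph(n)=\Omega(\sqrt n)$ (one must check that the decomposition $\morph=\eta\circ\rho_q\circ E\circ\rho_p\circ E$ applies with $\eta=(u_0,v_0)$ an injective morphism — injectivity of $\eta$ follows since $u_0,v_0$ do not commute, as otherwise $u,v$ would commute, contradicting injectivity of $\morph$ via Lemma~\ref{le:codedontcommute} and Proposition~\ref{prop:codesmorphism}). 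Hence for each $k$ there is a word $w$ with $\Delta_\morph^+(w)>k$. In the second case, there is a unique (up to rotation) primitive $w$ with $\morph(w)=z^\ell$, $\ell>1$; here I would use Remark~\ref{rem:rot} (so all rotations of $w$ also map to powers) and reduce to the previous analysis: the morphism restricted along $w$ behaves like a cyclic-image situation producing arbitrarily long runs of a fixed block when iterated on long powers $w^m$ or, more precisely, on words $w^m c$ whose pre-image BWT-run count stays bounded (by Theorem~\ref{bwsturmian}-type reasoning applied to $w$) while $r(\morph(w^m c))$ grows. The cleanest route is to again extract a power structure: if $w\neq a,b$ then some conjugate of $\morph$ maps a primitive word to a proper power, and composing with the classification in Lemma~\ref{le:Holub_classification} one recovers a factor of the form $\rho_p$ with $p>1$, invoking Proposition~\ref{prop:rho_unbounded_AS}.

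The main obstacle I anticipate is Case~\ref{Pmu_2} — when $u$ and $v$ are both primitive but $\morph$ still fails primitivity-preservation because of a single $\morph$-power word $w$ of length $\geq 2$. Here neither unboundedness proposition applies verbatim, since $\morph=(u,v)$ is not obviously a composition through a $\rho_p$. The fix is to work with the image language: by Lemma~\ref{le:Holub_classification} the non-primitivity-preserving morphisms of this type have a very rigid form, and one can exhibit an explicit family of words — essentially $w^m$ padded by a letter to make it primitive, or $w^m$ conjugated — on which $\Delta_\morph^+$ grows like $\Theta(\sqrt n)$ by the same "BWT of a power has blown-up runs, but the pre-image is nearly Sturmian" argument used for $\pdmorph$ in Proposition~\ref{prop:tmpd}\ref{item:pd}. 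Concretely: since $\morph(w)=z^\ell$, for suitable $m$ the word $\morph(w^m)=z^{\ell m}$ has $r(\morph(w^m))$ still equal to $r(z)$ (powers don't change $r$), so instead one uses $w^m d$ for the other letter $d$, getting $\morph(w^m d)=z^{\ell m}\morph(d)$, whose BWT accumulates $\Theta(\sqrt{\ell m})$ runs from the periodic part while $r(w^m d)$ stays bounded — this is the technical heart and mirrors the period-doubling construction. I would present the argument in two clean bullets: (1) reduce via Lemma~\ref{le:P_mu_structure} and Remark~\ref{rem:rot} to either a power-image morphism or a single $\morph$-power word; (2) in the former invoke Proposition~\ref{prop:unbounded_increase_wk_power_morphisms}, in the latter invoke either Proposition~\ref{prop:rho_unbounded_AS} after extracting a $\rho_p$ factor from Lemma~\ref{le:Holub_classification}, or reproduce the $\sqrt n$ lower-bound construction directly.
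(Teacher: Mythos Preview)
Your Case~1 (at least one of $u,v$ non-primitive) is correct and matches the paper's first paragraph. The gap is entirely in Case~2, where both $u=\morph(a)$ and $v=\morph(b)$ are primitive but some primitive $x$ with $|x|\ge 2$ satisfies $\morph(x)=z^t$, $t>1$. Both routes you sketch there are incomplete.

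The direct construction does not work as stated. You claim $r(w^m d)$ stays bounded while $r(\morph(w^m d))=r(z^{\ell m}\morph(d))$ grows like $\sqrt{\ell m}$, but neither is justified. For the pre-image, Theorem~\ref{bwsturmian} only applies to words of Sturmian shape, and nothing forces $w^m d$ to have that form; indeed if $w=bab$ and $d=a$ then already $wd=(ab)^2$ and no boundedness argument is available. For the image, $z^{N}\morph(d)$ is the image of $ab^{N}$ under the morphism $(\morph(d),z)$; if that morphism happens to be primitivity-preserving then Lemma~\ref{lem:primitivitypreserving->BWTrunpreserving} gives that $r(z^{N}\morph(d))$ is \emph{bounded}, the opposite of what you need. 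Your alternative route, ``recover a factor $\rho_p$ from Lemma~\ref{le:Holub_classification}'', is only a hope: the parametric forms there (e.g.\ $u=(pq)^m p$, $v=q(pq)^n$) do not visibly factor through any $\rho_p$, and you give no mechanism for extracting one.

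The idea you are missing is a composition trick that reduces Case~2 cleanly to Case~1. Take $x$ to be the Lyndon representative in $P^\morph$ and set $\psi=(a,x)$, $\eta=(\morph(a),z^t)$; then $\eta=\morph\circ\psi$. Since $\eta(b)=z^t$ is a proper power, Proposition~\ref{prop:unbounded_increase_wk_power_morphisms} applies directly to $\eta$ and gives words $w$ with $\Delta^+_\eta(w)=\Theta(\sqrt{n})$. On the other hand $a$ and $x$ are Lyndon with $a<x$, so $a^mx$ and $ax^m$ are Lyndon (hence primitive) for all $m\ge 1$; by Lemma~\ref{le:pure=primitive} the morphism $\psi$ is primitivity-preserving, and by Lemma~\ref{lem:primitivitypreserving->BWTrunpreserving} there is a constant $c$ with $\Delta^+_\psi\le c$. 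From the telescoping identity $\Delta^+_\eta(w)=\Delta^+_\morph(\psi(w))+\Delta^+_\psi(w)$ you conclude $\Delta^+_\morph(\psi(w))=\Theta(\sqrt{n})$, which is exactly what is required. This is what the paper does, and it avoids all the ad hoc estimates your sketch relies on.
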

\begin{proof}
    If $\morph(a)$ or $\morph(b)$ are not primitive, then the thesis follows by Proposition~\ref{prop:unbounded_increase_wk_power_morphisms}, so let us assume that $\morph(a),\morph(b)\in\Prim{\Gamma^*}$.  

    Recall that a Lyndon word is a primitive word that is lexicographically smaller than all its proper conjugates.
    Since $\morph$ is injective, not primitivity-preserving, and both images are primitive words, there exists some Lyndon word $x \in P^\morph$ such that $|x| > 1$, $\morph(x) = z^t$ for some $t > 1$, and $z$ is primitive. Let $\psi = (a,x)$ and $\eta = (\mu(a), z^t)$. Observe that: i) $\morph(\psi(a)) = \morph(a)$; and ii) $\morph(\psi(b)) = \morph(x) = z^t$. Hence, $\eta=\morph \circ \psi$. 
    Then, by Proposition~\ref{prop:unbounded_increase_wk_power_morphisms}, $\eta$ there exists a word $w\in\{a,b\}^n$ such that $\Delta_\eta^+(w)= \Theta(\sqrt{n})$.
    Since the concatenation $uv$ of two Lyndon words $u$ and $v$, with $u<v$, is a Lyndon word (see~\cite{lothaire1997combinatorics}), then, for every $m\geq1$, $a^mx$ and $ax^m$ are Lyndon words, hence, by Lemma~\ref{le:pure=primitive}, the morphism $\psi=(a,x)$ is primitivity-preserving, and by Lemma~\ref{lem:primitivitypreserving->BWTrunpreserving} $\psi$ is BWT-run preserving. 
    Finally, one has
    $\Delta_\eta^+(w) = \Delta_\morph^+(\psi(w)) + \Delta_\psi^+(w) = \Delta_\morph^+(\psi(w)) + O(1) = \Theta(\sqrt{n})$, and the thesis follows.
\end{proof}


\begin{example}
Let $\morph = (ba,ababaa)$. Let $x = aab$ and $z = babaa$, where $x$ is Lyndon and $z$ is primitive. It holds that $$\morph (x) = \morph(aab) = ba\cdot ba \cdot ababaa = (babaa)^2 = z^2.$$ 
We define the morphisms $\psi = (a, aab)$ and $\eta = (ba,(babaa)^2)$, as described in the proof of Lemma~\ref{lem:unboundednonprimitivepreserving}.
Indeed, $\eta = \morph \circ \psi$, as $\morph(\psi(a)) = \morph(a) = ba = \eta(a)$ and $\morph(\psi(b)) = \morph(aab) =(babaa)^2 = \eta(b)$. The morphism $\eta$ can be written as $\eta = (ba, babaa) \circ \rho_2$, 
and by Proposition~\ref{prop:unbounded_increase_wk_power_morphisms} there exists $w\in\{a,b\}^*$ such that $r(\eta(w)) - r(w) = \Theta(\sqrt{n})$. 
On the other hand, $\psi$ is primitivity-preserving, so it must be the case that $\rbwt(\morph(\psi(w)))-\rbwt(\psi(w)) = \Theta(\sqrt{n})$.
\end{example}

From Lemmas~\ref{lem:primitivitypreserving->BWTrunpreserving}  and~\ref{lem:unboundednonprimitivepreserving}, the main result of the paper can be derived.

\begin{theorem}
    Let $\morph:\{a,b\}^*\rightarrow\Gamma^*$ be an injective morphism.  Then $\morph$ is BWT-run preserving if and only if it is primitivity-preserving.
\end{theorem}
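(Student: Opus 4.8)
The plan is to observe that the final theorem is essentially an immediate corollary of the two lemmas stated just before it, so the ``proof'' amounts to assembling the two implications and checking that no case has been missed. The forward direction (primitivity-preserving $\Rightarrow$ BWT-run preserving) is exactly Lemma~\ref{lem:primitivitypreserving->BWTrunpreserving}, so I would simply cite it. The reverse direction is the contrapositive of Lemma~\ref{lem:unboundednonprimitivepreserving}: if $\morph$ is not primitivity-preserving, then for every $k$ there is a word $w$ with $\Delta^+_\morph(w) > k$, which says precisely that $\bwas(n)$ is unbounded, i.e.\ $\morph$ is not BWT-run preserving. So the whole argument is: BWT-run preserving $\Leftrightarrow$ (by Lemma~\ref{lem:unboundednonprimitivepreserving}, contrapositive) primitivity-preserving $\Leftrightarrow$ (by Lemma~\ref{lem:primitivitypreserving->BWTrunpreserving}) BWT-run preserving, closing the loop.

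First I would record that the statement concerns injective binary morphisms, and recall that in the binary case injectivity is equivalent to acyclicity (as noted after Proposition~\ref{prop:codesmorphism}), so the notion of ``$k$-BWT-run preserving'' from the definition applies to exactly this class. Then I would split into the two directions. For ``$\Leftarrow$'': assume $\morph$ is primitivity-preserving; Lemma~\ref{lem:primitivitypreserving->BWTrunpreserving} gives directly that $\morph$ is BWT-run preserving, i.e.\ there is a constant $k$ with $\bwas(n)\le k$ for all $n\ge 2$. For ``$\Rightarrow$'': assume $\morph$ is BWT-run preserving, so some constant $k$ bounds $\bwas(n)$. If $\morph$ were not primitivity-preserving, Lemma~\ref{lem:unboundednonprimitivepreserving} would produce, for this very $k$, a word $w$ with $\Delta^+_\morph(w) > k$; taking $n = |w|$ gives $\bwas(n) \ge \Delta^+_\morph(w) > k$, a contradiction. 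Hence $\morph$ is primitivity-preserving.

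The only thing to be slightly careful about is the degenerate case where $\morph$ is cyclic rather than acyclic — but such $\morph$ are not injective (for binary alphabets cyclic $\iff$ non-injective $\iff$ $\morph(ab)=\morph(ba)$), so they are excluded by hypothesis; alternatively one notes via Lemma~\ref{le:cyclic_sensitivity} that cyclic morphisms have constant sensitivity but are never primitivity-preserving (they map every primitive word to a power), so the equivalence would even fail, which is why injectivity is assumed. I do not anticipate a genuine obstacle here: the real work has been front-loaded into Lemmas~\ref{lem:primitivitypreserving->BWTrunpreserving} and~\ref{lem:unboundednonprimitivepreserving} (and, through them, into Theorem~\ref{th:primitivitynoncircular=thuemorse}, Theorem~\ref{le:whenNonPrimPresAreSynch}, Lemma~\ref{le:recmorphdelay->BWTrunbounded}, and Propositions~\ref{prop:rho_unbounded_AS}--\ref{prop:unbounded_increase_wk_power_morphisms}), so the proof of the theorem itself is a two-line synthesis. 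The proof I would write:

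\begin{proof}
Since $\morph$ is injective, it is acyclic, so the notion of being BWT-run preserving applies. If $\morph$ is primitivity-preserving, then it is BWT-run preserving by Lemma~\ref{lem:primitivitypreserving->BWTrunpreserving}. Conversely, suppose $\morph$ is BWT-run preserving, and let $k$ be such that $\bwas(n)\le k$ for all $n\ge 2$. If $\morph$ were not primitivity-preserving, then by Lemma~\ref{lem:unboundednonprimitivepreserving} there would exist a word $w$ with $\Delta^+_\morph(w) > k$; writing $n=|w|$, this gives $\bwas(n)\ge \Delta^+_\morph(w) > k$, a contradiction. Hence $\morph$ is primitivity-preserving.
\end{proof}
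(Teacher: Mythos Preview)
Your proposal is correct and mirrors the paper's own approach exactly: the theorem is stated as an immediate consequence of Lemmas~\ref{lem:primitivitypreserving->BWTrunpreserving} and~\ref{lem:unboundednonprimitivepreserving}, with one direction given directly and the other by contrapositive. Your write-up even makes the contrapositive step explicit, which the paper leaves implicit.
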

    
Finally, we can show that there exists a finite test case, as stated in the following theorem.

\begin{theorem}
    Let $\morph:\{a,b\}^*\rightarrow\Gamma^*$ be an acyclic morphism. 
    It is decidable in polynomial time in the size of $\morph$ whether $\morph$ is BWT-run preserving.
\end{theorem}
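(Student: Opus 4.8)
The plan is to reduce the decision problem to a concrete, checkable combinatorial criterion via the main theorem: $\morph$ is BWT-run preserving if and only if it is primitivity-preserving, so it suffices to decide in polynomial time whether an acyclic binary morphism $\morph=(u,v)$ is primitivity-preserving. First I would normalize: $\morph$ is acyclic iff $uv\neq vu$ (Lemma~\ref{le:codedontcommute} and Proposition~\ref{prop:codesmorphism}), which is checkable in linear time; if $\morph$ is cyclic we stop (it is not acyclic, or we treat the cyclic case separately by Lemma~\ref{le:cyclic_sensitivity}). Next, factor out the primitive roots: write $u=u_0^p$ and $v=v_0^q$ with $u_0,v_0$ primitive. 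Computing the primitive root of a word of length $\le|\morph|$ is polynomial (e.g. via the smallest period, computable in linear time by the failure function). If $pq>1$, i.e. $u$ or $v$ is non-primitive, then $\morph$ is not primitivity-preserving (since $\morph(a)=u$ or $\morph(b)=v$ is already non-primitive), and we answer ``no''. So we are left with the case $u,v$ both primitive and distinct.

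The key step is then to decide whether all words in $\{u^n v^m\mid n,m\ge1\}$ are primitive, which by Lemma~\ref{le:pure=primitive} is exactly the primitivity-preserving condition. Here I would invoke Lemma~\ref{u*v*}: the set $W=\{u^nv\mid n\ge1\}\cup\{uv^n\mid n>1\}$ contains at most one non-primitive word, and moreover the Lyndon--Sch\"utzenberger dichotomy (Theorem~\ref{lyndonS} in the appendix) tells us that if $u^nv^m$ is a proper power then necessarily $n=1$ or $m=1$. Therefore checking whether any $u^nv^m$ is a power reduces to checking the one-parameter families $\{u^n v\mid n\ge1\}$ and $\{uv^m\mid m\ge1\}$. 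The remaining obstacle is that these families are infinite, so we need an a priori polynomial bound $N=N(|\morph|)$ such that if $u^n v$ (resp.\ $uv^m$) is a power for some $n$ (resp.\ $m$), then it is a power for some index $\le N$. Such a bound follows from the fact that a proper power $z^k$ with $k\ge2$ has period $|z^k|/k\le|z^k|/2$; combined with the Fine--Wilf theorem applied to the overlap structure of $u^nv$, one shows that for $n$ large enough relative to $|u|+|v|$, primitivity of $u$ forces $u^nv$ to be primitive (a standard argument: a long run of $u$'s inside a period $\le\frac12|u^nv|$ would force the period to be a multiple of $|u|$ up to conjugacy, contradicting $uv\neq vu$). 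Concretely, $N=O(|\morph|)$ suffices, and then testing primitivity of each of the $O(|\morph|)$ words $u^nv$ and $uv^m$ (each of length $O(|\morph|^2)$) by a linear-time periodicity test gives an overall polynomial algorithm.

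I would present this as: (1) test acyclicity; (2) extract primitive roots and answer ``no'' if $u$ or $v$ is non-primitive; (3) establish the bound $N$ and test $u^n v$ for $1\le n\le N$ and $uv^m$ for $1\le m\le N$ for primitivity, using the Lyndon--Sch\"utzenberger reduction to justify that no mixed powers $u^nv^m$ with $n,m\ge2$ need be examined; (4) answer ``yes'' iff all these are primitive, invoking Lemma~\ref{le:pure=primitive} and then the main theorem to transfer back to BWT-run preservation. The main obstacle — and the part deserving the most care — is proving the explicit polynomial bound $N$ on the indices that need to be tested; everything else is routine linear-time word processing. One must argue, using Fine--Wilf and the non-commutation $uv\neq vu$, that non-primitivity of $u^nv$ for some $n$ entails non-primitivity for some small $n$, so that the search space is genuinely finite and polynomial. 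An alternative route, which I would mention as a remark, is to appeal directly to the structural classification of non-primitivity-preserving binary morphisms (Lemma~\ref{le:Holub_classification} in the appendix, derived from \cite{HolubRS23,RestR85,ShyrYu}): that classification exhibits the $\morph$-power words explicitly with bounded length, immediately yielding a finite test set and hence decidability in polynomial time in $|\morph|$.
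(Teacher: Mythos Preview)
Your proposal is correct and follows essentially the same route as the paper: reduce to primitivity-preservation via the main theorem, then use Lemma~\ref{le:pure=primitive} together with the Lyndon--Sch\"utzenberger constraint to cut the test down to the one-parameter families $u^n v$ and $uv^m$, and finally bound the indices that must be checked. The only difference is that the paper does not redo the Fine--Wilf argument for the index bound $N$; it cites Theorem~\ref{th:holub} (from \cite{HolubRS23}) directly, which gives the sharper bound $N=O(t_{\max}/t_{\min})$ and hence the complexity $O(t_{\max}^2/t_{\min})$ rather than your looser $O(|\morph|^3)$. Your ``alternative route'' remark is thus exactly what the paper does.
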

\begin{proof}
    By Lemma~\ref{le:pure=primitive}, to decide whether a given morphism $\morph=(u,v)$, for some $u,v\in\Gamma^+$, is primitivity-preserving, we have to check the primitiveness of all the possible non-trivial solutions of the equation $u^\ell v^m = z^n$.
    Let $t_{\max} = \max\{|u|,|v|\}$ and $t_{\min}=\min\{|u|,|v|\}$. Then, by Theorem~\ref{th:holub} (see Appendix~\ref{ap:Lyndon}), there are at most $O(t_{\max}/t_{\min})$ words to check, each of these having length $\Theta(|u|+|v|)$. Since the primitiveness can be checked in linear time in the size of the words, the total time complexity is $O(t_{\max}^2/t_{\min})$.
\end{proof}

\section{Morphisms with bounded multiplicative sensitivity}\label{sec:multiplicative_sensitivity}

Even though in the case of binary morphisms the additive sensitivity is not always bounded by a constant, it is natural to wonder whether the multiplicative sensitivity is.
As shown in the following example, this is not the case when the alphabet size is greater than 2.

\begin{example}\label{ex:big_alphabet_ms_unbounded}
Let $f_k^\$ = \varphi^{k}(a)\$$ be the $k$-th Fibonacci word with a letter $\$$ such that $\$ < a < b $ appended. Define $\morph$ as $\morph(\$) = \$$, $\morph(a) = ab$, and $\morph(b) = a$. Then $\morph(f_{2k}^\$) = f_{2k+1}^\$$. It is known that  $r(f_{2k+1}^\$)/r(f_{2k}^\$) = \Omega(\log n)$ \cite{Giuliani2025}. Hence, $MS_\mu(n) = \Omega(\log n)$.
\end{example}

We first show that $MS_{\rho_{p>1}}(n)$ is bounded.

\begin{lemma}\label{le:rho_p_bounded_MS}
Let $w\in\{a,b\}^*$ be a word that contains at least two $a$'s and one $b$. Let $t$ be the length of the longest \emph{circular run} of $b$'s in $w$ (i.e., the longest run of $b$'s in any string in $\R{w}$). 
It holds that $$r(\rho_p(w)) \le r(w) + 2\left|\cfact(w)  \cap \bigcup \{ab^ia \,|\,i \in [1,t]\}\right|.$$
Moreover, it holds $\Delta_{\rho_p}^+(w) \le 2r(w)$ and $\Delta_{\rho_p}^\times(w) \le 3$.
\end{lemma}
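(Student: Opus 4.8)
The plan is to track how a single application of $\rho_p = (a,b^p)$ transforms the rotations of $w$ and hence its BWT-matrix. Write $w = \rho_p^{-1}$-preimage style: since $\rho_p$ fixes $a$ and sends $b \mapsto b^p$, the image $\rho_p(w)$ is obtained from $w$ by replacing every $b$ by $b^p$; in particular $|\rho_p(w)| = |w|_a + p|w|_b$. The rotations of $\rho_p(w)$ split naturally into two groups: those that begin exactly at a boundary corresponding to an original position of $w$ (there are $|w|$ of these, and they are precisely $\{\rho_p(v) : v \in \R{w}\}$), and the ``internal'' rotations that start strictly inside some expanded block $b^p$. I would first argue that the internal rotations contribute almost nothing new to the run count: an internal rotation starting inside a block $b^p$ has the form $b^j \rho_p(\text{suffix})\,\rho_p(\text{prefix})\,b^{p-j}$, its last character is $b$, and among all internal rotations sharing the same ``host block'' the set of last characters is constant ($b$), while their relative lexicographic order is governed by the suffix starting right after the block. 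So internal rotations can only merge into existing $b$-runs or split a run at the point where the host block changes — and the number of distinct host blocks that can cause a split is controlled by how many distinct right-contexts $ab^i a$ (with $i \le t$) occur circularly in $w$.

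The key step is a careful bookkeeping of where new runs can be created in $\bwt(\rho_p(w))$ compared to $\bwt(w)$. The rotations of the first group, $\rho_p(\R{w})$, appear in $\bwt(\rho_p(w))$ in an order that is a refinement of the order of $\R{w}$ (using that $\rho_p$ is abelian order-preserving or order-reversing by Lemma~\ref{lem:abelian_order}, suitably restricted — actually $\rho_p$ is order-preserving since $a < b$), and the last character of $\rho_p(v)$ equals the last character of $v$ unless $v$ ends in $b$, in which case it still ends in $b$; so on this sub-sequence alone the run structure of $\bwt(w)$ is essentially preserved. Every genuinely new run boundary must occur at the interface between an internal rotation (which always emits $b$) and a first-group rotation, or between two internal rotations hosted in blocks that are preceded by different letters — and each such interface can be charged to a circular factor of $w$ of the form $ab^i a$ with $i \in [1,t]$: the left $a$ forces a non-$b$ character to appear before the run of internal-rotation $b$'s in some relevant rotation context, and the right $a$ (together with the block length) determines the insertion point. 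Since each such factor can be ``blamed'' for at most two new boundaries (one on each side, entering and leaving the inserted run of $b$'s), we get $r(\rho_p(w)) \le r(w) + 2|\cfact(w) \cap \bigcup_{i=1}^t \{ab^i a\}|$.

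For the two final inequalities: the number of distinct circular factors $ab^i a$ with $i \in [1,t]$ is at most the number of maximal circular runs of $b$'s in $w$ (one distinct ``$ab^\bullet a$'' pattern per run length present, but more crudely at most the number of runs), which is at most $r(w)$ — indeed each run of $b$'s contributes at most one run to $\rle(w)$, and the count of maximal $b$-runs is at most $\lfloor r(w)/2 \rfloor \cdot$-ish, certainly $\le r(w)$; plugging this into the first bound gives $r(\rho_p(w)) \le r(w) + 2r(w) = 3r(w)$, whence $\Delta^+_{\rho_p}(w) = r(\rho_p(w)) - r(w) \le 2r(w)$ and $\Delta^\times_{\rho_p}(w) = r(\rho_p(w))/r(w) \le 3$. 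The main obstacle I anticipate is making the charging argument fully rigorous: precisely matching each new run boundary in $\bwt(\rho_p(w))$ to a unique circular occurrence of some $ab^i a$, and verifying the ``at most two per factor'' accounting without double-counting when several internal rotations of different host blocks interleave in lexicographic order with first-group rotations. This requires a clean description of the sorted order of the internal rotations — which is exactly the sorted order of the right-contexts $\rho_p(\text{text after the block})$ — and a lemma that distinct right-contexts of the same block-length $i$ are already distinguished inside $w$, so no extra boundaries beyond those accounted for can appear.
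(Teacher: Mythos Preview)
Your approach to the main inequality is essentially the paper's: split the rotations of $\rho_p(w)$ into the $|w|$ rotations of the form $\rho_p(v)$ for $v\in\R{w}$ and the remaining ``internal'' ones, note that every internal rotation contributes the letter $b$ to the BWT, and bound the number of new run boundaries created when these $b$-blocks are inserted. The paper organizes this more cleanly than your host-block picture: it groups \emph{all} rotations of $\rho_p(w)$ by their prefix $b^j a$, observes that the classes with $j\equiv 0\pmod p$ are exactly the images $\rho_p(\R{w})$ and reproduce $\bwt(w)$ in order (since $\rho_p$ is order-preserving and the last letters of $\rho_p(a),\rho_p(b)$ are $a,b$), while every other prefix class forms an all-$b$ block sitting between two consecutive boundary classes. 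Because the boundary class for $b^{ip}a$ is itself all $b$'s whenever $ab^ia\notin\cfact(w)$, the surrounding internal blocks merge with it, and only those $i$ with $ab^ia\in\cfact(w)$ can create (at most two) new run boundaries. This is your charging argument made structural, and it dissolves the obstacle you flagged.

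There is, however, a genuine gap in your derivation of $\Delta^+_{\rho_p}(w)\le 2r(w)$. You bound $|\{i:ab^ia\in\cfact(w)\}|$ by the number of maximal circular $b$-runs in $w$, and then claim this is at most $r(w)$ because ``each run of $b$'s contributes at most one run to $\rle(w)$''. But $r(w)=|\rle(\bwt(w))|$, not $|\rle(w)|$: you are conflating runs in $w$ with runs in its BWT. For $w=(ab)^n$ there are $n$ maximal circular $b$-runs while $r(w)=2$, so your chain of inequalities breaks (even though the desired conclusion $|\{i:ab^ia\in\cfact(w)\}|=1\le 2$ still holds there). The paper's fix is to argue inside $\bwt(w)$ itself: for each $i$ with $ab^ia\in\cfact(w)$, the block of rotations of $w$ starting with $b^ia$ contains a rotation preceded by $a$, so a letter change in $\bwt(w)$ can be associated to that block; since these blocks are disjoint ranges of the BWT matrix, the associated letter changes are distinct, yielding $|\{i:ab^ia\in\cfact(w)\}|\le r(w)$ directly.
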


\begin{proof}
Let $t$ be the length of the maximal circular run of $b$'s in $w$. Since $\rho_p = (a,b^p)$ is order-preserving, the sequence obtained by taking the last character of each image of the lexicographically sorted rotations of $w$ spells exactly $\bwt(w)$. In fact, $a$ and $b$ are the last characters of $\rho_p(a)$ and $\rho_p(b)$. respectively.
Hence, the last characters of the range of rotations starting with $a$ in the BWT matrix of $\rho_p(w)$ spell exactly $\bwt(w)[1,|w|_a]$. Similarly, the last characters of the (disjoint) ranges of rotations starting with $b^{ip}a$ for $i \in [1,t]$ spell exactly $\bwt(w)[|w|_a+1,|w|]$. Strictly in between the ranges of rotations starting with $b^{(i-1)p}a$ and $b^{ip}a$ for some $i \in [1,t]$, there is a range of rotations starting with $b^{(i-1)p+s}a$ for each $s \in [1,p-1]$, all ending with the character $b$. 
In the worst case, each of these blocks of rotations can only increase the number of runs of $r(w)$ by $2$.
Hence, the additive increase is at most 2 times the number of circular factors of the form $ab^ia$ in $w$. This proves the first claim of the proposition. 

For the second claim, observe that a change of letter occurs in correspondence of each block of rotations starting with $b^ia$, for each $i$ such that $ab^ia\in\cfact(w)$. Hence, the second claim follows because $|\cfact(w) \cap \bigcup \{ab^ia \,|\,i \in [1,t]\}| \le r(w)$.
\end{proof}

We now give a sketch of the main result of this section. The complete proof will be deferred to the full version of this article.

\begin{theorem}\label{thm:multiplicative_sensitivity}
For every morphism $\mu: \{a,b\}^*\rightarrow \Gamma^*$, there exists an integer $k_\mu$ such that $MS_\mu(n) \le k_\mu$.
\end{theorem}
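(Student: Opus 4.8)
The plan is to reduce every binary morphism to a small number of "building-block" morphisms whose multiplicative sensitivity we already control, and then combine the bounds multiplicatively. First I would dispose of the easy cases: if $\morph$ is cyclic, Lemma~\ref{le:cyclic_sensitivity} already gives a constant bound; if $\morph$ is primitivity-preserving, then by the main theorem it is $k$-BWT-run preserving for some $k$, so $r(\morph(w))\le r(w)+k\le r(w)+k\cdot r(w)/2$ (using that $r(w)\ge 2$ on binary words that are not powers of a single letter, with the letter-power case handled separately since then $r(w)=1$ and $r(\morph(w))$ is a fixed constant), giving $MS_\morph(n)\le 1+k/2$. So the real content is the case of an injective, acyclic, non-primitivity-preserving morphism.

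For that case I would use the decomposition machinery already set up in Section~\ref{sec:characterization_morphisms}. By Lemma~\ref{lem:unboundednonprimitivepreserving}'s proof technique, such a $\morph$ factors (after possibly composing with $E$ and with a primitivity-preserving $\psi=(a,x)$ for a suitable Lyndon word $x\in P^\morph$) through the morphisms $\rho_p=(a,b^p)$. Concretely, writing $\morph(a)=u^p$, $\morph(b)=v^q$ with $u,v$ primitive, Proposition~\ref{prop:unbounded_increase_wk_power_morphisms} gives $\morph=\eta\circ\rho_q\circ E\circ\rho_p\circ E$ with $\eta=(u,v)$ primitivity-preserving (one must check $\eta$ is primitivity-preserving — this is where Lemma~\ref{le:pure=primitive} and the Lyndon-word argument from Lemma~\ref{lem:unboundednonprimitivepreserving} apply, ruling out the case $u,v$ conjugate by absorbing a Thue--Morse factor as in Theorem~\ref{th:primitivitynoncircular=thuemorse}). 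Since $MS$ behaves submultiplicatively under composition — if $r(\alpha(x))\le c_\alpha r(x)$ and $r(\beta(y))\le c_\beta r(y)$ for all relevant $x,y$, then $r(\beta(\alpha(x)))\le c_\beta c_\alpha r(x)$ — and $E$ is a renaming (so $MS_E\equiv 1$), it suffices to bound $MS$ for each of $\eta$ and each $\rho_p$. The factor $\eta$ is primitivity-preserving, hence bounded by the first paragraph; the factors $\rho_p$ are bounded by Lemma~\ref{le:rho_p_bounded_MS}, which gives $\Delta^\times_{\rho_p}(w)\le 3$ once $w$ contains at least two $a$'s and one $b$.

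The main obstacle is handling the boundary/degenerate inputs where the multiplicative bound could blow up: words on which $r(w)$ is as small as $1$ or $2$, words that are powers of a single letter, and words $w$ fed into $\rho_p$ that fail the hypothesis of Lemma~\ref{le:rho_p_bounded_MS} (fewer than two $a$'s, or no $b$). For these finitely-many "shapes" one argues directly: $\rho_p(b^n)=b^{pn}$ has $r=1$; $\rho_p(a^n b^{n'})$ and its rotations have $r$ bounded by a constant; and more generally any $w$ with $r(w)\le C_0$ for a fixed small $C_0$ has $r(\morph(w))$ bounded by $|\morph|\cdot C_0$ (each run of $w$ can only spawn $O(|\morph|)$ runs in $\morph(w)$, a crude but sufficient bound), so the ratio is again bounded by a constant depending only on $\morph$. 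Threading this case analysis through the composition — ensuring that the intermediate words $\rho_p(E(w))$, etc., still satisfy the hypotheses needed at the next stage, or else fall into a controlled degenerate family — is the delicate bookkeeping, and it is essentially why the full proof is deferred to the journal version; here I would state the reduction, invoke Lemmas~\ref{le:cyclic_sensitivity}, \ref{le:rho_p_bounded_MS}, the main characterization theorem, and Proposition~\ref{prop:unbounded_increase_wk_power_morphisms}, and indicate that the degenerate inputs contribute only an additive constant to $k_\morph$.
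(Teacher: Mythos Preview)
Your reduction has a genuine gap: the claim that the morphism $\eta=(u,v)$ obtained by taking the primitive roots of $\morph(a)$ and $\morph(b)$ is itself primitivity-preserving is false in general. Consider $\morph=(a,bab)$: both images are already primitive, so $p=q=1$ and $\eta=\morph$, yet $\morph(ab)=(ab)^2$, so $\eta$ is not primitivity-preserving. The same happens with the paper's own example $\morph=(ba,ababaa)$. The decomposition of Proposition~\ref{prop:unbounded_increase_wk_power_morphisms} only strips off the single-letter $\morph$--powers (Cases~\ref{Pmu_1b},~\ref{Pmu_1c},~\ref{Pmu_2b} of Lemma~\ref{le:P_mu_structure}); it does nothing when $P^\morph=\R{x}$ for some $|x|>1$. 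Neither Lemma~\ref{le:pure=primitive} nor the Lyndon-word construction from Lemma~\ref{lem:unboundednonprimitivepreserving} rescues this: the latter produces $\eta'=\morph\circ\psi=(\morph(a),z^t)$ with a \emph{non}-primitive second image, and $\psi$ sits on the wrong side of the composition for your submultiplicativity argument (you would need $\morph=\eta'\circ\psi^{-1}$, but $\psi$ is not surjective). Your aside that ``any $w$ with $r(w)\le C_0$ has $r(\morph(w))\le|\morph|\cdot C_0$'' is also unjustified---BWT-runs of $\morph(w)$ bear no simple relation to BWT-runs of $w$, and indeed bounding this ratio is precisely the content of the theorem.

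The paper's sketch handles exactly this residual case---$\eta$ injective, both images primitive, $P^\eta=\R{x}$ for some $|x|>1$---by a direct synchronization argument: Lemma~\ref{le:non_prim_sync_pair} guarantees that every sufficiently long circular factor of $\eta(w)$ that is \emph{not} a factor of $z^*$ (where $\eta(x)=z^t$) has a synchronization pair, so the rotations of $\eta(w)$ split into a bounded number of blocks whose last letters are determined, plus the rotations whose prefix lies in $\cfact(z^*)$. Those remaining rotations are then handled by an argument generalizing Lemma~\ref{le:rho_p_bounded_MS}, with the role of the letter $b$ played by $z$, yielding a multiplicative bound depending on $|z|$ rather than $3$. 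This is the piece your plan is missing.
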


\begin{proof}[Proof sketch of Theorem~\ref{thm:multiplicative_sensitivity}]
We assume $\morph$ is injective, as otherwise the result follows from Lemma~\ref{le:cyclic_sensitivity}. By Proposition~\ref{prop:unbounded_increase_wk_power_morphisms}, $\morph$ can be decomposed as $\mu = \eta \circ \rho_q \circ E \circ \rho_p \circ E$ with $\eta = (u, v)$ and $u,v \in Q(\Sigma^*)$. By Lemma \ref{le:rho_p_bounded_MS},  both $MS_{\rho_p}(n)\le 3$ and $MS_{\rho_q}(n)\le 3$, hence $MS_\morph(n)$ is bounded if and only if $MS_\eta(n)$ is bounded. If $\eta$ is primitivity-preserving, then by Lemma \ref{lem:primitivitypreserving->BWTrunpreserving} we are done. Hence, we are left to show the case when $\eta$ is not primitivity-preserving and both images are primitive. We give a sketch for this case.

Let $\mu =(u,v)$ be a non-primitivity-preserving injective morphism with $u,v \in Q(\Sigma^*)$. 
By Lemma~\ref{u*v*}, there exists a primitive word $x$ with $|x|>1$, such that $P^\mu=\R{x}$ and $\morph(x) = z^t$ with $z \in Q(\Sigma^*)$ and $t > 1$.

As a consequence of Lemma~\ref{le:non_prim_sync_pair}, there exists an integer $k>0$, which depends only on $\morph$, such that every rotation with a $k$-length prefix $y\notin\Gamma^k \cap \cfact(\{z\}^*)$ contains a synchronization pair. Hence, we can partition these rotations according to their length-$k$ prefix, and the characters preceding these rotations can be determined. 



The remaining rotations starting with a power of some rotation of $z$ are handled in a similar (though more complicated) fashion with respect to how rotations starting with a power of $b$ were handled in Lemma~\ref{le:rho_p_bounded_MS}. This yields an upper-bound for $MS_\mu$ depending on the value $|z|$ instead of $3$.
\end{proof}

\section{Conclusions and future work}\label{sec:final}

In this paper, we have provided a complete characterization of binary injective morphisms that preserve the number of BWT-runs up to a bounded additive increase.  We have shown that this class coincides with the class of binary primitivity-preserving morphisms. 

Primitivity-preserving morphisms could be considered a general effective tool for studying and evaluating repetitiveness measures, since such measures remain invariant, up to small constants, when applied to powers of a word. This suggests that such morphisms could be seen as a unifying framework for the analysis and comparison of different repetitiveness measures. 

It would be interesting to explore the design of compression and indexing techniques based on BWT-runs that operate directly on morphic encodings of highly repetitive text collections. This could have applications, for example, in the domain of privacy-preserving algorithms. Although our current approach allows for polynomial-time decision procedures for testing whether a given binary morphism is BWT-run preserving or, equivalently, primitivity-preserving, more efficient algorithms could yield significant improvements in terms of scalability and practical performance.

Furthermore, BWT-run sensitivity could support a new classification of morphisms, providing new insights for their structural behavior and the impact on repetitiveness measures.

Finally, we plan to investigate how to extend our results to morphisms over larger alphabets. 



\newpage
\appendix

\section{The theorem of Lyndon and Sch\"utzenberger and binary injective morphisms}
\label{ap:Lyndon}
Here, we report two classical results that are used in this paper to prove combinatorial properties of injective morphisms, and more specifically, of primitivity-preserving morphisms.

\begin{lemma}[\cite{lyndon}]
\label{commutepower}
Two words $u,v\in\Sigma^+$ commute if and only if there exist two integers $\ell,m\geq1$ and a word $z\in\Sigma^+$ such that $u=z^\ell$ and $v=z^m$.
\end{lemma}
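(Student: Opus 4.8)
The plan is to prove Lemma~\ref{commutepower} directly from its definition, splitting into the two implications. The backward direction is immediate: if $u = z^\ell$ and $v = z^m$ for some $z \in \Sigma^+$ and $\ell, m \geq 1$, then $uv = z^{\ell} z^{m} = z^{\ell+m} = z^{m}z^{\ell} = vu$, so $u$ and $v$ commute. The forward direction is the substantive one, and I would argue by induction on $|u| + |v|$. Suppose $uv = vu$ with $u, v \in \Sigma^+$. If $|u| = |v|$, then comparing the two sides character by character forces $u = v$, and we may take $z = u$, $\ell = m = 1$. Otherwise, assume without loss of generality $|u| < |v|$ (the symmetric case is identical). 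Since $uv = vu$ and $|u| < |v|$, the word $u$ is a prefix of $vu$, hence also a prefix of $uv$ beyond its first $|u|$ letters, which means $u$ is a prefix of $v$: write $v = uv'$ for some $v' \in \Sigma^+$ (nonempty because $|v| > |u|$).

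Substituting $v = uv'$ into $uv = vu$ gives $u(uv') = (uv')u$, i.e., $uuv' = uv'u$, and cancelling the common prefix $u$ yields $uv' = v'u$. Now $u, v' \in \Sigma^+$ and $|u| + |v'| < |u| + |v|$, so by the induction hypothesis there exist $z \in \Sigma^+$ and integers $\ell, k \geq 1$ with $u = z^{\ell}$ and $v' = z^{k}$. Then $v = uv' = z^{\ell} z^{k} = z^{\ell + k}$, so setting $m = \ell + k \geq 1$ we get $u = z^\ell$ and $v = z^m$, completing the induction.

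The only delicate point is the claim that $uv = vu$ with $|u| < |v|$ forces $u$ to be a prefix of $v$; this is a clean consequence of the fact that two words that agree and have comparable lengths, one being a prefix of a common longer word, must have the shorter a prefix of the longer. Concretely, $v$ is a prefix of $vu$, and $vu = uv$, so the length-$|v|$ prefix of $uv$ equals $v$; since $|u| < |v|$, that prefix starts with $u$, so $u$ is a prefix of $v$. I expect no real obstacle here: the argument is the classical Fine--Wilf style induction, and the base case and the cancellation step are both elementary. One could alternatively invoke the defect/periodicity machinery, but the self-contained induction on $|u|+|v|$ is the cleanest route and keeps the proof short.
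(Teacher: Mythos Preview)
Your proof is correct: the backward direction is immediate, and the forward direction is the standard induction on $|u|+|v|$, with the key step being that $uv=vu$ and $|u|<|v|$ force $u$ to be a prefix of $v$, which you justify cleanly. The paper itself does not supply a proof of this lemma; it simply records it as a classical result of Lyndon and Sch\"utzenberger with a citation to~\cite{lyndon}, so there is no in-paper argument to compare against.
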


More generally, the theorem of Lyndon and Sch\"utzenberger states that the word equation $u^\ell v^m=z^n$ has only trivial (i.e., periodic) solutions for $\ell,m,n\geq2$:

\begin{theorem}[\cite{lyndon}]
\label{lyndonS}
    Let $u,v,z\in\Sigma^+$ and $\ell, m, n\geq2$ such that $u^\ell v^m = z^n$. Then, there exists, and is unique, a primitive word $w$ such that $u=w^{t_1}$, $v = w^{t_2}$, and $z = w^{t_3}$, for some integers $t_1,t_2,t_3\geq1$.  
\end{theorem}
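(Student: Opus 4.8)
The plan is to reduce everything to the single claim that $u$ and $v$ commute, and then read off the rest from facts already available. Suppose $uv=vu$ is known. By Lemma~\ref{commutepower} there are $\rho\in\Sigma^+$ and $s_1,s_2\ge 1$ with $u=\rho^{s_1}$ and $v=\rho^{s_2}$; replacing $\rho$ by its primitive root we may take $\rho$ primitive and call it $w$, so $u=w^{t_1}$ and $v=w^{t_2}$ with $t_1,t_2\ge 1$. Then $z^n=u^\ell v^m=w^{t_1\ell+t_2 m}$, so $z^n$ is a power of the primitive word $w$; since the primitive root of $z^n$ coincides with that of $z$ while the primitive root of $w^N$ is $w$, uniqueness of primitive roots forces $z\in w^*$, say $z=w^{t_3}$ with $t_3\ge 1$. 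This yields the required data, and $w$ is unique because any admissible $w$ must be the primitive root of $u$.

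It remains to prove $uv=vu$, and here I would argue by induction on $|u|+|v|$, using the Fine--Wilf periodicity theorem (a word of length $N$ having periods $p$ and $q$ with $N\ge p+q-\gcd(p,q)$ also has period $\gcd(p,q)$). Put $W:=z^n=u^\ell v^m$. Reversing the equation gives $(\widetilde{v})^m(\widetilde{u})^\ell=(\widetilde{z})^n$, an equation of exactly the same form with the exponents still $\ge 2$, and $u,v$ commute iff $\widetilde{u},\widetilde{v}$ commute; so without loss of generality $|u^\ell|\ge|v^m|$, whence $|u^\ell|\ge|z|$ because $n\ge 2$. Then $z$ is a prefix of $u^\ell$, and since every prefix of $z^n$ of length at least $|z|$ has the period $|z|$, the word $u^\ell$ carries both the period $|u|$ and the period $|z|$. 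When the Fine--Wilf length bound holds on $u^\ell$ (in particular whenever $\ell|u|\ge|u|+|z|$), the word $u^\ell$, and hence its prefixes $u$ and $z$, have the common period $d:=\gcd(|u|,|z|)$; consequently $u$ and $z$ are powers of their common length-$d$ prefix $\xi$, and reading $z^n=u^\ell v^m$ modulo $\xi$ forces $v^m\in\xi^*$ as well, so $u$ and $v$ share the primitive root of $\xi$ and commute. The same reasoning goes through whenever $z$ is non-primitive: if $z=\xi_0^{\,j}$ with $j\ge 2$ then $|\xi_0|\le|z|/2$, and $|u^\ell|\ge|z|$ makes $(\ell-1)|u|\ge|\xi_0|$, so the Fine--Wilf bound for the pair $(|u|,|\xi_0|)$ is automatic. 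Hence from now on we may assume $z$ is primitive.

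The remaining obstacle — the heart of the proof — is the narrow regime where $z$ is primitive and the Fine--Wilf bound fails on both $u^\ell$ and $v^m$, i.e.\ $|z|\le\ell|u|<|z|+|u|$ and $|z|\le m|v|<|z|+|v|$. A short length count shows this forces $n\le 3$ (and, for $n=3$, $\min(\ell,m)=2$), so only finitely many shapes of equation remain. Here I would peel off the leading copy of $z$: writing $u^\ell=zy$ with $|y|=\ell|u|-|z|<|u|$, one gets $y\,v^m=z^{n-1}$, a shorter instance; to keep the recursion going one strengthens the induction hypothesis to equations of the form $\alpha\,v^m=z^{k}$ with $\alpha$ a prefix of $z^{k}$, noting that $|u|+|v|$ strictly decreases. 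For $n=2$, where no reduction in $n$ is available, the constraint $|y|<|u|$ makes $y$ simultaneously a prefix and a suffix of $u^\ell$ shorter than $u$, hence a border of $u$; combining this border with the $v$-periodicity inside the single period $z$ forces $u$ and $v$ to be powers of a common word by a direct overlap argument. Tiny base cases such as $|u|=|v|=1$ (where $u^\ell v^m$ being a nontrivial power forces $u=v$) are immediate. The statement is classical — it was proved in a free group in \cite{lyndon} — but the route sketched above stays inside $\Sigma^*$ and uses only Lemma~\ref{commutepower} and the Fine--Wilf theorem; organizing the borderline case analysis so that the induction terminates is the only real difficulty.
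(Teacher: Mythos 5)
This statement is quoted from Lyndon and Sch\"utzenberger \cite{lyndon} and the paper gives no proof of it (it appears in Appendix~\ref{ap:Lyndon} purely as a cited classical result), so there is no in-paper argument to compare yours against; I can only assess your sketch on its own terms.

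Your outline follows the standard modern route: reduce to $uv=vu$ via Lemma~\ref{commutepower} and uniqueness of primitive roots, then split on whether the Fine--Wilf bound applies. The reduction in the first paragraph is correct, and the easy regime (Fine--Wilf applicable to $u^\ell$ with periods $|u|$ and $|z|$, or $z$ non-primitive) is handled soundly; the length count showing the remaining regime forces $n\le 3$ also checks out (with the small caveat that when $|v^m|<|z|$ you cannot speak of the bound ``failing on $v^m$'' --- that subcase gives $n=2$ directly). The genuine gap is that the residual cases with $z$ primitive and $n\in\{2,3\}$ --- which are the entire substance of the theorem, everything before them being routine periodicity bookkeeping --- are not actually proved. ``Peel off the leading copy of $z$ and strengthen the induction hypothesis to $\alpha v^m=z^k$'' changes the shape of the equation (the new left factor $\alpha=y$ is no longer a power with exponent $\ge 2$, so the theorem's hypotheses no longer apply and you must prove a genuinely stronger statement whose formulation and base cases you have not given), and for $n=2$ the phrase ``a direct overlap argument'' is standing in for the hardest part of the classical proof (the case $u^2v^2=z^2$ with $z$ primitive in particular). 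As written, the proposal is a correct road map of where the difficulty lives, but it does not constitute a proof; you would either need to carry out the $n\in\{2,3\}$ analysis in full or, as the paper does, simply cite \cite{lyndon}.
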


Notice that in the equation of Theorem~\ref{lyndonS} we can suppose, without loss of generality, that $|u|\geq|v|$, since $u^\ell v^m = z^n$ if and only if $v^m u^\ell = (z')^n$ for a rotation $z'$ of $z$. 
However, there can be nontrivial solutions when $\ell=1$ or $m=1$, and $n>1$. As an example, take $u=abba$, $v=b$. Then $uv^2=abba\cdot b \cdot b = (abb)^2$. In other words, the equation $u^\ell v=z^n$ (or $uv^m=z^n$) can have nontrivial solutions.

We now give the proof of Lemma \ref{le:P_mu_structure} in which we characterize the structure of the set of $\morph$--power words of an injective morphism $\morph$.
\begin{proof}[Proof of Lemma~\ref{le:P_mu_structure}]
Case~\ref{Pmu_1a} coincides with the definition of primitivity-preserving morphism.
Case~\ref{Pmu_1b} holds when only one between $u$ and $v$ is not primitive.
Case~\ref{Pmu_1c} holds when both $u$ and $v$ are not primitive.

Cases~\ref{Pmu_2a} and~\ref{Pmu_2b} both follow by Lemma~\ref{u*v*}, where we distinguish the case when either both $u$ and $v$ are primitive or only one between $u$ and $v$ is not primitive.
We now prove that there can not be further cases, that is if we fall in case~\ref{Pmu_2} then either $u$ or $v$ must be primitive.
By contradiction, let us assume that there exist $p,q\in\{a,b\}^*$ and $s,t >1$ such that $u = p^s$ and $v = q^t$, and let $m,n\geq1$ such that $w = a^m b^n$, and therefore $\morph(w) = u^m v^n = p^{ms} q^{nt} = z^k\in W\cap \Pow{\{a,b\}^*}$, for some primitive word $z$ and $k>1$. By Lemma~\ref{lyndonS}, it follows that $p$ and $q$ are powers of $z$, and by transitive relation so do $u$ and $v$, in contradiction with $\morph$ being an injective morphism.
\end{proof}

The following lemma reformulates~\cite[Theorem 8]{HolubRS23} and provides a parametric solution to Theorem~\ref{lyndonS}. It characterizes the structure of binary injective morphisms that map primitive words of length greater than $2$ to non-primitive words.


\begin{lemma}
\label{le:Holub_classification}
    Let $\morph=(u,v)$ be an injective morphism for some words $u,v\in\{a,b\}^*$ with $|u|\geq|v|$, and let $W = \{u^nv \mid n\geq1\} \cup \{uv^n \mid n>1\}$. If $W\cap \Pow{\{a,b\}^*}\neq \emptyset$, then exactly one of the following cases occurs:
    \begin{enumerate}
        \item $u= (pq)^m p$ and $v= q(pq)^n$, for some non-commuting words $p,q$ and two integers $m, n\geq0$ such that $m+n\geq1$. In this case, $W\cap \Pow{\{a,b\}^*} = \{uv\}$ and $\R{ab}\subseteq P^\morph$;
        \item $u=(pq^{n})^m p$ and $ v=q$, for some non-commuting words $p,q$ and three integers $m,n\geq1$.  In this case, $W\cap \Pow{\{a,b\}^*} = \{uv^n\}$ and $\R{ab^n}\subseteq P^\morph$;
        \item $u = pq(q(pq)^m)^{k-1})^n pq(q(pq)^m)^{k-2}qp$ and $v= q(pq)^m$, for some non-commuting words $p,q$ and three integers $k\geq2$, $m\geq1$, and $n\geq0$.  In this case, $W\cap \Pow{\{a,b\}^*} = \{uv^k\}$ and $\R{ab^k}\subseteq P^\morph$;
        \item $u =(pq)^m p$ and $v= qppq$, for some non-commuting words $p,q$ and an integer $m\geq2$.  In this case, $W\cap \Pow{\{a,b\}^*} = \{u^2v\}$ and $\R{a^2b}\subseteq P^\morph$.
    \end{enumerate}
\end{lemma}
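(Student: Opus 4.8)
The plan is to obtain this statement as a reformulation of \cite[Theorem~8]{HolubRS23}, complemented by two short arguments that, in each family, single out the non-primitive word of $W$ and exhibit the corresponding $\morph$-power words. First I would record the structural reductions. Since $\morph$ is injective, $\{u,v\}$ is a code (Proposition~\ref{prop:codesmorphism}) and therefore $u$ and $v$ do not commute (Lemma~\ref{le:codedontcommute}); by the theorem of Lyndon and Sch\"utzenberger (Theorem~\ref{lyndonS}) no word $u^iv^j$ with $i,j\geq2$ is a proper power, so every non-primitive image of a primitive preimage of length $\geq 2$ already lies in $W$, and by Lemma~\ref{u*v*} there is exactly one such word under the hypothesis $W\cap\Pow{\{a,b\}^*}\neq\emptyset$.

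Then I would invoke \cite[Theorem~8]{HolubRS23}, which provides the complete parametric classification of pairs of non-commuting words $(u,v)$ such that some product $u^iv$ or $uv^j$ is non-primitive, organized (after the normalization $|u|\geq|v|$) into exactly the four families displayed above, together with their mutual exclusivity. This is the core input; the rest is verification.

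For each family it remains to pinpoint the non-primitive element of $W$ and to deduce $\R{w}\subseteq P^\morph$ for the indicated witness $w\in\{ab,ab^n,ab^k,a^2b\}$. A direct expansion gives $uv=(pq)^{m+n+1}$ in family~1, $uv^n=(pq^n)^{m+1}$ in family~2, $u^2v=\bigl((pq)^mp(pq)\bigr)^{2}$ in family~4, and, after collapsing the nested blocks, $uv^k=\bigl(pq\,(q(pq)^m)^{k-1}\bigr)^{n+2}$ in family~3; the exhibited exponents $m+n+1$, $m+1$, $2$, $n+2$ are all at least $2$ given the parameter ranges, so each of these words is a proper power, hence non-primitive, and by Lemma~\ref{u*v*} it is the unique non-primitive word of $W$. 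For the $\morph$-power words, in every case the witness $w$ is primitive (it contains a single $a$ in families~1--3, while $a^2b$ has prime length and is not unary) and $\morph(w)$ equals the non-primitive element of $W$ just found, so $w\in P^\morph$; since a morphism sends conjugates to conjugates and conjugates of powers are powers (Remark~\ref{rem:rot}), every rotation of $w$ is an $\morph$-power word as well, i.e.\ $\R{w}\subseteq P^\morph$.

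The main obstacle is bookkeeping rather than new combinatorics: one must carefully align the normal forms and parameter conventions of \cite[Theorem~8]{HolubRS23} with the four cases as stated — in particular confirming that overlapping boundary instances (such as family~1 with $n=0$ and family~2 with $n=1$, which describe the same $\morph$) yield identical conclusions about $W\cap\Pow{\{a,b\}^*}$ and $P^\morph$, so that ``exactly one of the following cases occurs'' is unambiguous — and executing correctly the somewhat intricate cancellation that collapses the nested expression for $u$ in family~3.
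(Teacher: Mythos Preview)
Your proposal is correct and matches the paper's approach: the paper does not give a proof of this lemma but simply presents it as a reformulation of \cite[Theorem~8]{HolubRS23}, so invoking that theorem for the four-family classification and then verifying the explicit non-primitive element of $W$ and the inclusion $\R{w}\subseteq P^\morph$ in each case is exactly what is intended. Your explicit computations (including the telescoping $Bv^k=A^2$ in family~3 via the identity $qp\cdot v=v\cdot pq$) and your observation about the boundary overlap between families~1 and~2 are all sound and go slightly beyond what the paper spells out.
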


Note that the hypothesis $|\morph(a)| \geq |\morph(b)|$ in Lemma~\ref{le:Holub_classification} is not restrictive, as when $|\morph(a)| < |\morph(b)|$ we can consider the morphism $\morph$ composed with the morphism $E$ that exchanges $a$ and $b$.

We give now the proof of Lemma \ref{prop:purenotcircular=rotation} in which we state that a binary primitivity-preserving morphism is not recognizable if and only if $\morph(a)$ and $\morph(b)$ are not conjugates.

\begin{proof}[Proof of Lemma \ref{prop:purenotcircular=rotation}]
    Without loss of generality, we can suppose throughout the proof that $|u|\geq|v|$.
    Observe that the statement is equivalent to: $\morph$ is not recognizable if and only if $\morph=(pq,qp)$ for some words $p,q$. 
    
    Let us prove the first direction by contraposition. If $\morph$ is injective and $u$ and $v$ are conjugates, then there exist two non-empty words $p,q$ such that $u=pq$ and $v=qp$.
    It is easy to see then that $pq,qp\in \{u,v\}^+$ but $p,q\notin \{u,v\}^+$, which by Lemma~\ref{le:rec=circ} implies that $\morph$ is not recognizable. 

    For the other direction, also by contraposition, if $\morph$ is not circular, then there exist $k,k'\geq0$ and $w\in \{u,v\}^+$ such that $w = x_1x_2\cdots x_k = q' y_2 \cdots y_{k'} p'$ with $y_1=p'q'$, $p',q'\neq \varepsilon$, and $x_1,x_2,\ldots,x_{k}, y_1,y_2,\ldots,y_{k'}\in \{u,v\}$.
    We now show that, under the above-mentioned conditions, if $u$ and $v$ are not conjugated, the morphism $\morph$ can not be primitivity-preserving, leading to a contradiction.
    
    Observe that both $u$ and $v$ have to occur in $w$ circularly in both the factorizations, otherwise, we end up having a prefix and a suffix of $u$ (or $v$) that commute, which by Lemma~\ref{commutepower} implies that $u$ (or $v$) is a power, contradicting the hypothesis of $\morph$ being primitivity-preserving.
    Let $X_\morph=\{u,v\}$.
    We can then distinguish three cases: (i) $u^2$ is a $X_\morph$-factor of $w$, (ii) $u^2$ is not a $X_\morph$-factor of $w$ and $uv^\ell u$ is, for some $\ell>0$, and (iii) neither $u^2$ nor $uv^\ell u$, for all $\ell>0$, are $X_\morph$-factors of $w$ but $uv^m$ is, for some $m>0$.
    For case (i), by~\cite[Proposition A]{RestR85} follows that $\morph(a^2b) = u^2 v$ is a power, and by Lemma~\ref{le:pure=primitive} this contradicts $\morph$ being primitivity-preserving.
    For case (ii), if $uv^\ell u$ is a $X_\morph$-factor of $w$ for some $\ell>0$ and $u^2$ is not, then $w\in uv^+(uv^+)^+$, and therefore $|w|\geq2|u|+2|v|$. By~\cite[Proposition B]{RestR85} follows that $\morph(ab^m) = uv^m$ is a power for some $m>0$, which again by Lemma~\ref{le:pure=primitive} it contradicts $\morph$ being primitivity-preserving.
    Finally, for case (iii), observe that if neither $u^2$ nor $uv^\ell u$ are $X_\morph$-factors of $w$ for all $\ell>0$, then there exist $i\in[1,k],j\in[1,k']$ such that $x_i,y_j=u$ and $x_{i'}=y_{j'}=v$ for all $i'\neq i$, $j'\neq j$. This implies that $k=k'>1$ and that two rotation words coincide, and by Lemma~\ref{commutepower} follows that $\morph(ab^{k-1})=uv^{k-1}$ is a power, i.e. $\morph$ is not primitivity-preserving, and the thesis follows.
\end{proof}

The following result is used to show that it is possible to test in polynomial time, with respect to the total length of the images of the letters, whether a morphism is BWT-run preserving.

\begin{theorem}[\cite{HolubRS23}]
\label{th:holub}
 Let $\morph=\{u,v\}$ be an injective morphism, with $|u|\geq |v|$, and let $W = \{u^nv \mid n\geq1\} \cup \{uv^n \mid n>1\}$. If there exists a primitive word $z$ such that $u^\ell v^m = z^n$ for some $\ell,m\geq1$, $n>1$, then: 
 \begin{enumerate}
     \item if $\ell>1$,  $\ell = n = 2$ and $m=1$;
     \item if $\ell = 1$,  $1\leq m \leq \frac{|u|-4}{|v|}+2$.
 \end{enumerate}
\end{theorem}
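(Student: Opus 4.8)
The plan is to treat the word equation $u^{\ell}v^{m}=z^{n}$ ($n\geq 2$) directly, using only the Lyndon--Sch\"utzenberger theorem (Theorem~\ref{lyndonS}) and the Fine--Wilf periodicity lemma. Since $\morph=(u,v)$ is injective, $\{u,v\}$ is a code, so by Lemma~\ref{le:codedontcommute} the words $u$ and $v$ do not commute; by Lemma~\ref{commutepower} they therefore have no common primitive root. If we had both $\ell\geq 2$ and $m\geq 2$, Theorem~\ref{lyndonS} would make $u$, $v$, $z$ all powers of one primitive word, contradicting this; hence $\ell=1$ or $m=1$, and the two items are exhaustive. In both items the contradiction engine is the same: suppose a ``long'' power of $u$ (resp.\ of $v$) occurs as a prefix (resp.\ suffix) of $z^{n}$ and is long enough for Fine--Wilf to combine its intrinsic period $|u|$ (resp.\ $|v|$) with the period $|z|$ it inherits from $z^{n}$; then that factor, hence its sub-factor $z$, has period $d=\gcd(|u|,|z|)$ (resp.\ $\gcd(|v|,|z|)$), and since $z$ is primitive we get $|z|=d$, so $|z|$ divides $|u|$ (resp.\ $|v|$). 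The length identity $n|z|=\ell|u|+m|v|$ then forces $|z|$ to divide the other of $|u|,|v|$ as well, whence both $u$ and $v$ are powers of $z$ (they are prefix/suffix blocks of $z^{n}$ of $|z|$-divisible length) and commute --- a contradiction. So no solution can be ``that long,'' and the two items are exactly the resulting length thresholds.

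For item~1 ($\ell\geq 2$, hence $m=1$): consider the prefix $u^{\ell}$ of $z^{n}$. From $n|z|=\ell|u|+|v|$ and $|v|\leq|u|$ one gets $|z|\leq\ell|u|$, so $z$ is a prefix of $u^{\ell}$, and $u^{\ell}$ carries both period $|u|$ and period $|z|$. A short computation shows that the Fine--Wilf threshold $\ell|u|\geq|u|+|z|$ is met whenever $(n-1)(\ell-1)\geq 2$; by the engine above this is impossible, so $(n-1)(\ell-1)\leq 1$, which for $\ell,n\geq 2$ forces $\ell=n=2$ (and $m=1$), exactly the claim.

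For item~2 ($\ell=1$): the target is $|u|\geq(m-2)|v|+4$. The case $m=1$ is immediate, since $z$ primitive together with $\{u,v\}$ being a code forces $|z|\geq2$, hence $|u|+|v|\geq 2|z|\geq 4$. For $m\geq 2$ I would split on the sign of $|z|-m|v|$. If $|z|\leq m|v|$, the suffix $v^{m}$ of $z^{n}=uv^{m}$ contains $z$ as its length-$|z|$ suffix and carries periods $|v|$ and $|z|$; so either the engine applies and gives a contradiction, or the Fine--Wilf threshold fails, i.e.\ $m|v|\leq|v|+|z|-2$, i.e.\ $|z|\geq(m-1)|v|+2$, and combining with $|z|\leq(|u|+m|v|)/n\leq(|u|+m|v|)/2$ yields precisely $(m-2)|v|+4\leq|u|$. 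If instead $|z|>m|v|$, then $|u|>(n-1)|z|\geq|z|$, so $z$ is a proper prefix of $u$; the period-$|z|$ structure of $u$ together with $|z|>m|v|$ collapses the residue to a single copy, giving $u=z^{\,n-1}w'$ with $w'$ a proper prefix of $z$ ($w'\neq\varepsilon$, since otherwise $u=z^{n-1}$ and $z=v^{m}$ would make $u,v$ commute) and $|z|=|w'|+m|v|$; then $|u|=n|w'|+(n-1)m|v|\geq m|v|+2\geq(m-2)|v|+4$, using $|v|\geq 1$.

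I expect the main obstacle to be the precise bookkeeping that pins the additive constant to exactly $4$ rather than to a softer bound: this forces one to argue carefully in the degenerate corners where the Fine--Wilf step is vacuous --- short $v$ (e.g.\ $|v|=1$), the borderline exponents $\ell=n=2$, and the ``$|z|>m|v|$'' regime with $n\geq 3$ --- showing that each such configuration either already satisfies the inequality or else forces $u$ and $v$ to commute and is therefore excluded. Apart from these corner cases, the proof is just the two periodicity arguments plus Theorem~\ref{lyndonS}.
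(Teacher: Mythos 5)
The paper does not actually prove this statement: Theorem~\ref{th:holub} is imported from \cite{HolubRS23}, where it arises as a by-product of a complete (and formally verified) parametric classification of all binary codes $\{u,v\}$ for which some $u^\ell v^m$ is imprimitive --- the classification that reappears here as Lemma~\ref{le:Holub_classification}. Your argument is therefore necessarily a different route, and it is a correct and considerably more elementary one for the purely quantitative statement actually claimed. The skeleton is sound: Theorem~\ref{lyndonS} reduces to $\ell=1$ or $m=1$; in item~1 the inequality $\ell|u|\ge|u|+|z|$ does follow from $(n-1)(\ell-1)\ge 2$ via $|v|\le|u|$ and $n|z|=\ell|u|+|v|$, and your Fine--Wilf/primitivity ``engine'' correctly forces $u$ and $v$ to be powers of $z$ and hence to commute, so $(n-1)(\ell-1)=1$ and $\ell=n=2$; in item~2 both branches check out arithmetically ($|z|\ge(m-1)|v|+2$ combined with $|z|\le(|u|+m|v|)/2$ in the first, and $|u|=n|w'|+(n-1)m|v|\ge m|v|+2$ in the second), and the resulting bound $|u|\ge(m-2)|v|+4$ is attained with equality on the known extremal examples (e.g.\ $abba\cdot b^2=(abb)^2$ and $abbabba\cdot(bab)^3=(abbabbab)^2$), so your constants are not off by one. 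Two ingredients you should make explicit in a full write-up: the Fine--Wilf lemma itself, which the paper never states (it only records Lyndon--Sch\"utzenberger), and the observation that every factor of $z^n$ admits $|z|$ as a local period, which is what licenses applying Fine--Wilf to $u^\ell$ and to $v^m$. As for what each approach buys: yours is short and self-contained but yields only the exponent bounds; the route of \cite{HolubRS23} is much heavier but delivers the explicit shapes of $u$ and $v$ in Lemma~\ref{le:Holub_classification}, which the paper genuinely needs elsewhere (e.g.\ in Lemmas~\ref{le:P_mu_structure} and~\ref{le:non_prim_sync_pair}), so the citation cannot simply be replaced by your argument.
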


\section{Proof of Lemma~\ref{le:recmorphdelay->BWTrunbounded}}
\label{ap:l32}

A morphism $\morph=(u,v)$ is called \emph{prefix} (resp. \emph{suffix}) if neither $u$ is a prefix (resp. suffix) of $v$ nor $v$ is a prefix (resp. suffix) of $u$.
Additionaly, $\morph$ is called \emph{bifix} if it is both prefix and suffix.
We first prove some properties used in the proof.

\begin{lemma}
\label{prefix} \label{le:prefix}
    Let $\morph:\{a,b\}^*\rightarrow\Gamma^*$ be an injective morphism, and let $\fibmorph = (ab,a)$ and $E=(b,a)$.
    Then, the morphism $\morph$ is prefix if and only if for all $\psi:\{a,b\}^*\rightarrow\Gamma^*$ and $\chi:\{a,b\}^*\rightarrow\{a,b\}^*$ such that $\morph = \psi \circ \chi$, it holds $\chi\notin\{\varphi,\varphi\circ E\}$.
\end{lemma}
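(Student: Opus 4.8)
The plan is to prove both directions by contraposition, exploiting the correspondence between the failure of the prefix property and a factorization through $\fibmorph$ or $\fibmorph\circ E$. First I would set up notation: write $\morph=(u,v)$ with, say, $u$ a prefix of $v$ (the other case being symmetric via $E$ on the domain). Then $v=uw$ for some $w\in\Gamma^+$ (note $w\neq\varepsilon$ since $\morph$ is injective, hence $u\neq v$). The key observation is that in this situation one can define $\psi=(u,w)$ and check that $\morph=\psi\circ\fibmorph$, because $\psi(\fibmorph(a))=\psi(ab)=uw=v$ and $\psi(\fibmorph(b))=\psi(a)=u$. Wait—that gives $\morph=(v,u)$, not $(u,v)$; so in fact when $u$ is a prefix of $v$ one gets $\morph=\psi\circ(\fibmorph\circ E)$ using $\fibmorph\circ E=(a,ab)$, while when $v$ is a prefix of $u$ one gets $\morph=\psi\circ\fibmorph$. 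This handles the ``if $\morph$ is not prefix then some decomposition has $\chi\in\{\fibmorph,\fibmorph\circ E\}$'' direction, which is the contrapositive of the ``$\Leftarrow$'' implication.

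For the converse (contrapositive of ``$\Rightarrow$''), I would assume $\morph=\psi\circ\chi$ with $\chi\in\{\fibmorph,\fibmorph\circ E\}$ and show $\morph$ is not prefix. Take $\chi=\fibmorph=(ab,a)$: then $\morph(a)=\psi(ab)=\psi(a)\psi(b)$ and $\morph(b)=\psi(a)$, so $\morph(b)$ is a prefix of $\morph(a)$, hence $\morph$ is not prefix. The case $\chi=\fibmorph\circ E=(a,ab)$ is symmetric, giving that $\morph(a)$ is a prefix of $\morph(b)$. One must be slightly careful: $\psi$ here need only be a morphism $\{a,b\}^*\to\Gamma^*$, not necessarily injective, but the prefix relation $\morph(b)\sqsubseteq\morph(a)$ holds regardless, so the conclusion that $\morph$ is not prefix stands; if one additionally needs $\psi$ injective this follows since $\morph$ is, but it is not required for this direction.

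The main subtlety — the only place the argument is not completely routine — is showing that in the ``not prefix'' direction the induced morphism $\psi$ is the \emph{only} obstruction, i.e.\ that no \emph{other} decomposition can sneak in, and more importantly that the factorization $\morph=\psi\circ\chi$ we exhibit genuinely has $\chi$ equal to one of the two named morphisms rather than merely conjugate or composed with something. Here the crucial point is that $\fibmorph$ and $\fibmorph\circ E$ are precisely the two binary morphisms $\chi$ for which one image is the other image concatenated with the empty-suffix letter in the minimal way; any refinement $w=w'w''$ of $v=uw$ would instead yield a decomposition with a larger $\chi$, but such $\chi$ would still factor through $\fibmorph$ or $\fibmorph\circ E$, so the ``for all $\chi$'' quantifier in the statement is not violated — we only need \emph{existence} of one bad $\chi$ to break the prefix property, and \emph{every} $\chi$ to avoid $\{\fibmorph,\fibmorph\circ E\}$ for $\morph$ to be prefix. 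I would close by remarking that Lemma~\ref{le:codedontcommute} and injectivity guarantee $u\neq v$, so the prefix relation, when it holds, is proper, keeping $w$ nonempty and $\psi$ well-defined.
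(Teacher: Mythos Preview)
Your proposal is correct and follows essentially the same approach as the paper: both directions are argued by contraposition, showing that a decomposition through $\fibmorph$ (resp.\ $\fibmorph\circ E$) forces $\morph(b)$ to be a prefix of $\morph(a)$ (resp.\ the reverse), and conversely that a failure of the prefix property yields such a decomposition by setting $\psi=(u,w)$ where $v=uw$ or $u=vw$. The paper's version is simply terser, dispatching the construction direction in a single phrase; your final paragraph of worries about ``other decompositions sneaking in'' is unnecessary, since the contrapositive of the universal statement only requires exhibiting \emph{one} bad $\chi$, which you already do.
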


\begin{proof}
    For the first direction, suppose by contradiction that exists $\psi$ such that either $\morph = \psi \circ \varphi$ or $\morph = \psi \circ \varphi \circ E$.
    By construction, we obtain either $\morph = (\psi(a)\psi(b), \psi(a))$ or $\morph = (\psi(a), \psi(a)\psi(b))$, contradiction.

    The other direction follows by construction.
\end{proof}

Using symmetrical arguments, we obtain the following lemma. 

\begin{lemma}
\label{le: suffix}
   Let $\morph:\{a,b\}^*\rightarrow\Gamma^*$ be an injective morphism, and let $\Tilde{\fibmorph} = (ba,a)$ and $E=(b,a)$.
    Then, the morphism $\morph$ is suffix if and only if for all $\psi:\{a,b\}^*\rightarrow\Gamma^*$ and $\chi:\{a,b\}^*\rightarrow\{a,b\}^*$ such that $\morph = \psi \circ \chi$, it holds $\chi\notin\{\Tilde{\fibmorph},\Tilde{\fibmorph}\circ E\}$.
\end{lemma}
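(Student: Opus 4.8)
The plan is to obtain this lemma as the left--right mirror of Lemma~\ref{le:prefix}, proving both implications by contraposition exactly as there, but with ``prefix'' replaced by ``suffix'' and the Fibonacci morphism $\fibmorph=(ab,a)$ replaced by its reversal $\Tilde{\fibmorph}=(ba,a)$. The two elementary computations I would record at the outset are $\Tilde{\fibmorph}=(ba,a)$ and $\Tilde{\fibmorph}\circ E=(a,ba)$, the latter because $\Tilde{\fibmorph}(E(a))=\Tilde{\fibmorph}(b)=a$ and $\Tilde{\fibmorph}(E(b))=\Tilde{\fibmorph}(a)=ba$.

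For the forward direction I would assume, towards the contrapositive, that $\morph=\psi\circ\chi$ for some $\psi:\{a,b\}^*\rightarrow\Gamma^*$ and some $\chi\in\{\Tilde{\fibmorph},\Tilde{\fibmorph}\circ E\}$, and then simply read off $\morph=(\psi(b)\psi(a),\psi(a))$ when $\chi=\Tilde{\fibmorph}$, and $\morph=(\psi(a),\psi(b)\psi(a))$ when $\chi=\Tilde{\fibmorph}\circ E$. In each case one image of $\morph$ is a suffix of the other, so $\morph$ is not suffix, which is precisely the contrapositive of the desired implication.

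For the converse (the ``by construction'' half, as in Lemma~\ref{le:prefix}), I would start from a morphism $\morph=(u,v)$ that is not suffix. Since $\morph$ is injective we have $u\neq v$ (because $\morph(ab)\neq\morph(ba)$), so exactly one of the two cases occurs: either $v$ is a proper suffix of $u$, say $u=wv$ with $w$ non-empty, or $u$ is a proper suffix of $v$, say $v=wu$ with $w$ non-empty. In the first case set $\psi=(v,w)$, so that $\psi\circ\Tilde{\fibmorph}=(\psi(b)\psi(a),\psi(a))=(wv,v)=\morph$; in the second case set $\psi=(u,w)$, so that $\psi\circ\Tilde{\fibmorph}\circ E=(\psi(a),\psi(b)\psi(a))=(u,wu)=\morph$. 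Either way a decomposition with $\chi\in\{\Tilde{\fibmorph},\Tilde{\fibmorph}\circ E\}$ exists, which is the contrapositive of the remaining implication.

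I do not anticipate a genuine obstacle here; the argument is routine. The only thing that requires care is the bookkeeping of which side the extra block $\psi(b)$ is concatenated onto --- on the \emph{left} of $\psi(a)$ in the suffix setting, mirrored from the prefix setting --- and, correspondingly, that it is $\Tilde{\fibmorph}=(ba,a)$ rather than $\fibmorph=(ab,a)$ that forces $\morph(b)$ to be a suffix of $\morph(a)$.
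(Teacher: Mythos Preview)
Your proof is correct and follows exactly the approach the paper intends: the paper simply states that this lemma is obtained ``using symmetrical arguments'' from Lemma~\ref{le:prefix}, and your write-up spells out precisely that mirrored argument, with the bookkeeping (left concatenation of $\psi(b)$, the role of $\Tilde{\fibmorph}$ versus $\fibmorph$) handled correctly. In fact you provide more detail for the ``by construction'' converse than the paper does for Lemma~\ref{le:prefix}.
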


From Lemmas \ref{le:prefix} and \ref{le: suffix}, the following proposition can be derived. 
 \begin{proposition}
\label{prop:structbifix}
    Let $\morph:\{a,b\}^*\rightarrow\Gamma^*$ be an injective morphism, and let $\fibmorph = (ab,a)$, $\Tilde{\fibmorph} = (ba,a)$, and $E=(b,a)$.
    Then, the morphism $\morph$ is bifix if and only if for all $\psi:\{a,b\}^*\rightarrow\Gamma^*$ and $\chi:\{a,b\}^*\rightarrow\{a,b\}^*$ such that $\morph = \psi \circ \chi$, it holds $\chi\notin\{\fibmorph,\fibmorph\circ E, \Tilde{\fibmorph},\Tilde{\fibmorph}\circ E\}$.
\end{proposition}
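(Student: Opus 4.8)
The plan is to obtain Proposition~\ref{prop:structbifix} as an immediate corollary of Lemmas~\ref{le:prefix} and~\ref{le: suffix}, using only the definition that a morphism is \emph{bifix} precisely when it is simultaneously \emph{prefix} and \emph{suffix}. First I would recall what each lemma contributes: Lemma~\ref{le:prefix} says that $\morph$ is prefix if and only if, over all factorizations $\morph = \psi \circ \chi$ with $\psi:\{a,b\}^*\rightarrow\Gamma^*$ and $\chi:\{a,b\}^*\rightarrow\{a,b\}^*$, one never has $\chi \in \{\fibmorph, \fibmorph \circ E\}$; symmetrically, Lemma~\ref{le: suffix} says that $\morph$ is suffix if and only if, over the same family of factorizations, one never has $\chi \in \{\Tilde{\fibmorph}, \Tilde{\fibmorph} \circ E\}$.

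The key (and essentially only) step is then a purely logical one: for a fixed morphism $\morph$, the conjunction of the two statements ``for every factorization $\morph=\psi\circ\chi$ we have $\chi\notin A$'' and ``for every factorization $\morph=\psi\circ\chi$ we have $\chi\notin B$'' is equivalent to the single statement ``for every factorization $\morph=\psi\circ\chi$ we have $\chi\notin A\cup B$'', since both lemmas range over exactly the same set of pairs $(\psi,\chi)$. Instantiating this with $A=\{\fibmorph, \fibmorph\circ E\}$ and $B=\{\Tilde{\fibmorph}, \Tilde{\fibmorph}\circ E\}$, and using that $\morph$ is bifix if and only if it is both prefix and suffix, yields exactly the claimed characterization, with the excluded set $\{\fibmorph,\fibmorph\circ E,\Tilde{\fibmorph},\Tilde{\fibmorph}\circ E\}$.

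I do not expect a real obstacle here; the proof is bookkeeping once the two one-sided lemmas are in hand. The only points deserving a line of care are (i) noting that the four listed morphisms are pairwise distinct as maps $\{a,b\}^*\rightarrow\{a,b\}^*$, so that the union is genuinely the stated four-element set, and (ii) checking that the prefix and suffix conditions in the two lemmas concern decompositions of the \emph{same} $\morph$ — which holds since $E$ occurs only inside the right factor $\chi$. If a self-contained argument were preferred, one could instead unfold the prefix/suffix conditions on $\morph=(u,v)$ directly into explicit factorizations (the case $u$ a proper prefix of $v$, or vice versa, producing $\chi\in\{\fibmorph,\fibmorph\circ E\}$, and symmetrically for suffixes and $\Tilde{\fibmorph}$), but invoking the two already-proved lemmas is cleaner and avoids repeating their proofs.
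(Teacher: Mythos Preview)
Your proposal is correct and matches the paper's own approach exactly: the paper states that Proposition~\ref{prop:structbifix} is derived directly from Lemmas~\ref{le:prefix} and~\ref{le: suffix}, combining the prefix and suffix characterizations via the definition of bifix. Your logical observation that the two universal exclusions merge into a single exclusion over the union is precisely the intended one-line derivation.
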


We now give the proof of Lemma \ref{le:recmorphdelay->BWTrunbounded}, where we state that the finite-delay synchronization of a morphism on the images of a language results in a bounded increase in the number of BWT-runs

\begin{proof}[Proof of Lemma~\ref{le:recmorphdelay->BWTrunbounded}]
    Let $u\in \La$ and let $w=\morph(u)$. We denote by $w_i$ the $i$th rotation in lexicographic order, for all $i\in[0,n)$, where $n=|w|$.
    Let $\cfact_k = \cfact(\morph(\La)) \cap \Gamma^k$, and let $m=|\cfact_k|$. 
    We denote by $f_j\in \cfact_k$ the $j$th factor in lexicographic order, for all $j\in[0,m)$.
    We can then partition the set $\R{w}=\{w_0,\ldots w_{n-1}\}$ into a finite number $m$ of subsets $R_0,\ldots,R_{m-1}$ such that $R_j = \{w_i\mid w_i[0,k-1]=f_j\}$, for all $j\in[0,m)$.
    Observe that for each $j$ there exist $\min_j = \min\{i\mid w_i\in R_j\}$ and $\max_j = \max\{i\mid w_i\in R_j\}$ such that $R_j=\{w_i\mid i\in[\min_j,\max_j]\}$.
    Let us suppose $\morph$ is bifix, and let $\ell = lcs(\morph(a), \morph(b))$.
    Since $\morph$ is synchronizing with delay $k$ on $\morph(\La)$, this implies that for all $j\in[0,m)$ there exists a syncronization point in $f_j = p_jv_js_j$, for some $v_j\in\morph(\Sigma^*)$ and $p_j,s_j\in\Gamma^*$ such that $p_j$ and $s_j$ are a proper suffix and a proper prefix respectively of either $\morph(a)$ or $\morph(b)$.
    If $0<|p_j|<\ell$, i.e. $p_j$ is a proper suffix of the longest common suffix between $\morph(a)$ and $\morph(b)$, then $\bwt[i]=\morph(a)[|\morph(a)|-|p_j|-1] = \morph(b)[|\morph(b)|-|p_j|-1]$, for all $i\in[\min_j,\max_j]$.
    If $|p_j|>|\ell|$, then $p_j$ is either a proper suffix of $\morph(a)$, and therefore $\bwt[i]=\morph(a)[|\morph(a)|-|p_j|-1]$, or a proper suffix of $\morph(b)$, and therefore $\bwt[i]=\morph(b)[|\morph(b)|-|p_j|-1]$, for all $i\in [\min_j,\max_j]$.
    If $p_j = \ell$, then $w_i = \morph(u')[n-|p_j|,n-1]\cdot \morph(u')[0,n-|p_j|-1]$,  for all $i\in [\min_j,\max_j]$ and for some $u'\in\R{u}$.
    Let $J = \bigcup \{j\mid |p_j|=\ell\}$. It is easy to see that $|\bigcup_{j\in J}R_j|=|u|$, and we write $j_i$ to denote the $i$th element in $J$ in increasing order. By~\cite[Lemma~11]{Fici23} it follows that for each pair $u',u''\in \R{u}$, either $u'<u'' \iff \morph(u')<\morph(u'')$, or $u'<u'' \iff \morph(u')>\morph(u'')$. Observe that for any pair of words $w_1,w_2\in\Gamma^*$ and letter $c\in\Gamma$, we have $w_1c < w_2c \iff cw_1 <cw_2$. Hence, we can conclude that $\bwt[\min_{j_1},\max_{j_1}]\cdots\bwt[\min_{j_{|J|}},\max_{j_{|J|}}]$ spells $\bwt(u)$, up to reverse operation and/or exchanging $a$'s and $b$'s with letters from $\Gamma$.
    On top of these $r(u)$ BWT-runs, we have to count that each of the $m-|J|$ BWT-runs in correspondence of the range $[\min_j,\max_j]$ of rotations such that $j\notin J$ can increase the number of BWT-runs by at most 2, it follows that $r(w) \leq r(u)+2(m-|J|)\leq r(u)+2m$. Since $m$ is finite, the thesis follows.
    
     Let $\Fib = \{\fibmorph,\fibmorph\circ E,\Tilde{\fibmorph},\Tilde{\varphi}\circ E\}$. If $\morph$ is not bifix, by Proposition~\ref{prop:structbifix} we can write $\morph = \eta\circ \psi_1\circ\cdots \circ \psi_t$ such that $\eta\notin\Fib$ and $\psi_1,\ldots,\psi_t\in \Fib$.
     Since $\psi_1, \ldots, \psi_t$ are recognizable~\cite{codesautomata}, it follows that $\morph$ is synchronizing with delay $k$ on $\morph(\La)$ if and only if $\eta$ is synchronizing with delay $k'\leq k$ on $\eta(\psi_1\circ\cdots\circ\psi_t(\La))=\morph(\La)$.
     Moreover, by~\cite[Theorem~21]{Fici23}, we have that $r(u)=r(\psi_1\circ \cdots \circ \psi_t(u))$; hence, we can show the proof for the bifix morphism $\eta$, and the thesis follows.
\end{proof} 

\section{Proofs of Propositions \ref{prop:rho_unbounded_AS} and \ref{prop:unbounded_increase_wk_power_morphisms}}
\label{ap:prop3435}

\begin{definition}[\cite{Giuliani2025}]\label{def:w_k}
For every $k > 5$,  let $s_i = ab^iaa$ and $e_i = ab^iaba^{i-2}$ for all $2 \leq i \leq k-1$, and $q_k = ab^ka$.
We define the word $$w_k = \left(\prod_{i=2}^{k-1}s_i e_i\right)q_k = \left(\prod_{i=2}^{k-1}ab^iaa ab^iaba^{i-2}\right)ab^ka.$$ 
\end{definition}

\begin{proof}[Proof of Proposition~\ref{prop:rho_unbounded_AS}]
Let us consider the family of words $w_k$ from Definition~\ref{def:w_k}.
First note that $\rho_p = (a, b^p)$ is abelian order-preserving. This implies that the last letters of the range of rotations starting with $a$ in the BWT matrix of $\rho_p(w_k)$ spell exactly $\bwt(w_k)[1,|w_k|_a]$. Similarly, the last characters of the (disjoint) ranges of rotations starting with $b^{ip}a$ for $i \in [1,k]$ spell exactly $\bwt(w_k)[|w_k|_a+1,|w_k|]$. Moreover, it has been shown in \cite{Giuliani2025} that the blocks of rotations of $w_k$ starting with $b^ia$, for some $i \in [1, k]$, spell a substring of $\bwt(w_k)$ that begins and ends with the letter $a$. Hence, the same holds for the blocks $b^{ip}a$ in $\rho_p(w_k)$. Strictly in between the ranges of rotations starting with $b^{(i-1)p}a$ and $b^{ip}a$ for some $i \in [2,k]$, there is a range of rotations starting with $b^{(i-1)p+s}a$ for each $s \in [1,p-1]$, all ending with the character $b$. These new blocks of rotations increase the number of runs by $2$ each,  and there are $k-1$ of them. Since $k=\Theta(\sqrt{n})$, the claim holds.
\end{proof}

\begin{proof}[Proof of Proposition~\ref{prop:unbounded_increase_wk_power_morphisms}]
The composed morphism maps $a$ to $u^p$ and $b$ to $v^q$ through the substitution chains

$$a \xrightarrow{E} b  \xrightarrow{\rho_p} b^p \xrightarrow{E} a^p \xrightarrow{\rho_q} a^p \xrightarrow{\eta} u^p \text{ and } b \xrightarrow{E} a \xrightarrow{\rho_p} a \xrightarrow{E} b \xrightarrow{\rho_q} b^q \xrightarrow{\eta} v^q.$$

For the second claim, note that $AS_{\mu}(n) \ge AS_{\rho_p}(n)$, as $$r(\mu(E(w))) - r(E(w)) = r(\eta \circ \rho_q \circ E \circ \rho_p (w)) - r(w) \ge r(\rho_p(w)) - r(w)$$ holds for any word $w$. Similarly, when $p = 1$, it holds $AS_{\mu}(n) \ge AS_{\rho_q}(n)$, as
$$r(\mu(w)) - r(w) = r(\eta \circ \rho_q(w)) - r(w) \ge r(\rho_p(w)) - r(w).$$ 
By Proposition \ref{prop:rho_unbounded_AS}, when $pq > 1$, the claim follows.
\end{proof}

\end{document}